\titlespacing*{\section}{0pt}{0.7\baselineskip}{0.5\baselineskip}
\titlespacing*{\subsection}{0pt}{0.7\baselineskip}{0.7\baselineskip}
\numberwithin{figure}{section}
\numberwithin{table}{section}
\numberwithin{equation}{section}
\newtheorem{theorem}{Theorem}[section] % section
\newtheorem{lemma}[theorem]{Lemma}
\newtheorem{claim}[theorem]{Claim}
\newtheorem{remark}[theorem]{Remark}
   \newtheorem{defn}[theorem]{Definition}%
   \newtheorem{observation}[theorem]{Observation}
\newcommand{\MakeBig} {\rule[-.2cm]{0cm}{0.4cm}}
\newlength{\savedparindent}
\providecommand{\pbrcx}[1]{\left[ {#1} \right]}
\renewcommand{\Re}{{\rm I\!\hspace{-0.025em} R}}
\newcommand{\Eqlab}[1]{\label{equation:#1}}
\newcommand{\Eqref}[1]{Eq.~(\ref{equation:#1})}
\newcommand{\lemlab}[1]{\label{lemma:#1}}
\newcommand{\lemref}[1]{Lemma~\ref{lemma:#1}}
\newcommand{\obslab}[1]{\label{observation:#1}}
\newcommand{\obsref}[1]{Observation~\ref{observation:#1}}
\newcommand{\claimlab}[1]{\label{claim:#1}}
\newcommand{\seclab}[1]{\label{section:#1}}
\newcommand{\secref}[1]{Section~\ref{section:#1}}
\newcommand{\thmlab}[1]{\label{theorem:#1}}
\newcommand{\thmref}[1]{Theorem~\ref{theorem:#1}}
\newcommand{\pth}[2][\!]{#1\left({#2}\right)}%
\newcommand{\brc}[1]{\left\{ {#1} \right\}}%
\newcommand{\sep}[1]{\,\left|\, {#1} \MakeBig\right.}%
\newcommand{\Vector}[1]{\mathop{\mathbf{Vec}}\pth{#1}}
\newcommand{\Tr}[1]{\mathop{\mathbf{Tr}}\pth{#1}}
\newcommand{\Id}{\mathbf{I}}
\newcommand{\iprod}[2]{\langle #1,#2 \rangle}
\newcommand{\norm}[2]{\left\lVert {#2} \right\rVert_{#1}}
\newcommand{\Ex}[1]{\mathop{\mathbf{E}}{\pbrcx{#1}}}
\newcommand{\PEx}[1]{\mathop{\mathbf{\widetilde{E}}}{\pbrcx{#1}}}
\newcommand{\PE}{\mathop{\mathbf{\widetilde{E}}}}
\newcommand{\Var}[1]{\mathop{\mathbf{Var}}{\pbrcx{#1}}}
\newcommand{\Prob}[1]{\mathop{\mathbf{Pr}}\!{\pbrcx{#1}}}
\newcommand{\cardin}[1]{\left| {#1} \right|}%
\newcommand{\Sym}{\mathbb{S}}
\newcommand{\orbit}[1]{\mathscr{O}{\,\pth{#1}}}
\DeclareMathOperator*{\Moplus}{\text{\raisebox{0.25ex}{\scalebox{0.8}{$\bigoplus$}}}}
\newcommand{\multichoose}[2]{\ensuremath{\left(\kern-.3em\left(\genfrac{}{}{0pt}{}{#1}{#2}\right)\kern-.3em\right)}}
\newcommand{\numdist}[1]{\mathop{\mathbf{\#}}{\,\pth{#1}}}
\newcommand{\multiset}[1]{\mathop{\mathbf{multiset}}{\pth{#1}}}
\newcommand{\atomize}[1]{\mathop{\mathbf{atomize}}{\pth{#1}}}
\newcommand{\invatomize}[1]{\mathop{\mathbf{atomize^{-1}}}{\pth{#1}}}
\newcommand{\interleave}[2]{\mathop{\mathbf{intrlv}}{\pth{#1,#2}}}
\newcommand{\survive}[1]{\mathcal{S}{\,\pth{#1}}}
\newcommand{\survseqp}[2]{\mathcal{I}_{#1}\pth{#2}}
\newcommand{\matA}[1]{\mathrm{A}{\pbrcx{#1}}}
\newcommand{\tenA}[1]{\mathcal{A}{\pbrcx{#1}}}
\newcommand{\matB}[1]{\mathrm{B}{\pbrcx{#1}}}
\newcommand{\matT}[1]{\mathrm{T}{\pbrcx{#1}}}
\def\restrict#1{\raise-.5ex\hbox{\ensuremath|}_{#1}}
\newcommand{\tenAA}{\mathcal{A}}
\newcommand{\emphi}[1]{\emph{\textbf{#1}}}
\newcommand{\matAA}{\mathrm{A}}
\newcommand{\matMM}{\mathrm{M}}
\newcommand{\matBB}{\mathrm{B}}
\newcommand{\one}{\mathsf{1}}
\newcommand{\seqI}{\mathrm{I}}
\newcommand{\tuparg}[2]{{\Moplus}_{\ell}(#1^{\ell},#2^{\ell})}
\DeclareMathOperator*{\E}{\mathbf{E}}
\newcommand{\RR}{\mathbb{R}}      % for Real numbers
\newcommand{\NN}{\mathbb{N}}      % for Integers
\newcommand{\1}{\mathbf{1}}      % for Indicator function
\newcommand{\defeq}{:=}
\newcommand{\R}{\RR}
\newcommand{\ftwo}[1]{\norm{2}{#1}}
\newcommand{\PExc}[2]{
\mathop{\widetilde{\mathbf{E}}_{#1}}{\!\pbrcx{#2}}}
\newcommand{\sos}[2]{\textsf{SoS}_{#1}\pth{#2}}
\newcommand{\hssos}[1]{\Lambda\pth{#1}}
\newcommand{\fmax}[1]{#1_{\max}}
\newcommand{\fmin}[1]{#1_{\min}}
\newcommand{\sfM}{\mathsf{M}}
\newcommand{\sfA}{\mathsf{A}}
\newcommand{\sfX}{\mathsf{X}}
\newcommand{\sfW}{\mathsf{W}}
\newcommand{\ha}{\widehat{A}}
\newcommand{\hA}{\widehat{\mathsf{A}}}
\newcommand{\hW}{\widehat{\mathsf{W}}}
\newcommand{\diag}{\mathsf{diag}}
\newcommand{\overbar}[1]{\mkern 1.2mu\overline{\mkern-1.2mu#1\mkern-1.2mu}\mkern 1.2mu}
\newcommand{\mi}[1]{\alpha({#1})}
\newcommand{\mindex}{{\NN}^{n}}
\newcommand{\degmindex}[1]{{\NN}_{#1}^{n}}
\newcommand{\udmindex}[1]{{\NN}_{\!\leq #1}^{n}}
\newcommand{\deffont}{\sf}
\newcommand{\defnt}[1]{{\deffont #1}}
\newcommand{\ie}{i.e.,\xspace}
\newcommand{\etal}{et al.\xspace}
\newcommand{\abs}[1]{\left\lvert #1 \right\rvert}
\begin{document}

\title{{\bf Sum-of-Squares Certificates for \\Maxima of Random Tensors on the Sphere}}

%\title{{\bf Certifying $2$-norms of Random Tensors \\ via Sum-of-Squares Hierarchy}}

\author{
Vijay Bhattiprolu\thanks{Supported by NSF grants CCF-1422045 and CCF-1526092. \tt vpb@cs.cmu.edu} \and
Venkatesan Guruswami\thanks{Research supported in part by NSF grants CCF-1526092 and CCF-1563742. {\tt guruswami@cmu.edu} } \and
Euiwoong Lee\thanks{Supported by a Samsung Fellowship, a Simons Award for Graduate Students in Theoretical Computer Science, and NSF CCF-1526092. {\tt euiwoonl@cs.cmu.edu} }
}

\date{Computer Science Department \\ Carnegie Mellon University \\ Pittsburgh, PA 15213.}

\setcounter{page}{0}
%\date{\today}

\maketitle
\thispagestyle{empty}
\begin{abstract}
\noindent 
For an $n$-variate order-$d$ tensor $\tenAA$, define 
\[
\fmax{\tenAA} :=
\sup_{\| x \|_2 = 1} \langle \tenAA , x^{\otimes d} \rangle \ 
\]
to be the maximum value taken by the tensor on the unit sphere. It is known that for a random tensor with i.i.d $\pm 1$ entries,  $\fmax{\tenAA} \lesssim \sqrt{n\cdot d\cdot\log d}$ w.h.p.
We study the problem of efficiently certifying upper bounds on $\fmax{\tenAA}$ via the natural relaxation from the Sum of Squares (SoS) hierarchy. 
%Besides being a fundamental problem, this random model has received attention recently due to its connection to tensor PCA. 
Our results include:
\begin{itemize}
\item When $\tenAA$ is a random order-$q$ tensor, we prove that 
$q$ levels of SoS certifies an upper bound $B$ on $\fmax{\tenAA}$ that satisfies
 \[ B ~~~~\leq~~ 
    \fmax{\tenAA}\cdot \pth{\frac{n}{q^{\,1-o(1)}}}^{q/4-1/2} \quad \text{w.h.p.} \]
Our upper bound improves a result of Montanari and Richard (NIPS 2014) when $q$ is large.
 \item We show the above bound is the best possible up to lower order terms, namely the optimum of the level-$q$ SoS relaxation is at least 
 \[ \fmax{\tenAA}\cdot \pth{\frac{n}{q^{\,1+o(1)}}}^{q/4-1/2} \ . \]
  
 \item When $\tenAA$ is a random order-$d$ tensor, we prove that 
$q$ levels of SoS certifies an upper bound $B$ on $\fmax{\tenAA}$ that satisfies 
\[
    B ~~\leq ~~ \fmax{\tenAA}\cdot \biggl(\frac{\widetilde{O}(n)}{q}\biggr)^{d/4 - 1/2}  \quad \text{w.h.p.}
\]
For growing $q$, this improves upon the bound 
certified by constant levels of SoS. This answers in part, a question posed by Hopkins, Shi, 
and Steurer (COLT 2015), who established the tight characterization for constant levels of SoS. 

\end{itemize}
%\smallskip \noindent
%To establish these results, we develop tools that we believe to be useful in analyzing the SoS hierarchy on other optimization problems over the sphere. 

\iffalse
Our degree-$q$ upper bound depends on the notion of {\em quotient matrix} which allows us to switch back and forth between two syntactically different moment matrices (tuple-indexed and multiset-indexed) without loss. For the matching degree-$q$ lower bound, we build an expectation operator with the large ratio $\frac{\mbox{minimum eigenvalue}}{\mbox{average eigenvalue}}$ via the {\em Wigner semicircle distribution} and the Cholesky decomposition of its moment matrix. 
The low-degree upper bounds use {\em higher-order mass-shifting} to exhibit a desired matrix representation of the polynomial $f^r$ ($r = \frac{q}{d}$), combined with a careful trace method to bound its operator norm. All three techniques may be of independent interest. 
\fi

\end{abstract}
\newpage

\section{Introduction}
It is a well-known fact from random matrix theory that 
for an $n \times n$ matrix $M$ whose entries are i.i.d Rademacher or
standard normal random variables, the maximum value $x^T M x$ taken by the associated quadratic form on the unit sphere $\|x\|_2=1$, is $\Theta(\sqrt{n})$ with high probability. Further, this maximum value can be computed efficiently for any matrix, as it equals the largest eigenvalue of $(M+M^T)/2$, so one can also efficiently certify that the maximum of a random quadratic form is at most $O(\sqrt{n})$.

%(in fact $(2 \pm o(1))\sqrt{n}$) with high probability (see, for example, \cite[Theorem 2.1]{RV-icm-survey}). Further, the spectral norm of a matrix can be computed efficiently as it equals the largest singular value, so one can also certify that the value of $\|M\|_2$ of a random matrix.

This paper is motivated by the problem of analogous question for tensors. Namely, given a random order-$d$ tensor $\tenAA$ who entries are i.i.d random $\pm$ entries, we would like to certify an upper bound on the maximum value $\fmax{\tenAA} := \max_{\|x\|=1} \langle \tenAA, x^{\otimes d} \rangle$ taken by the tensor on the unit sphere.
This value is at most $O_d(\sqrt{n})$ with high probability~\cite{TS14}.
However, for $d \ge 3$, computing $\fmax{\tenAA}$ for a $d$-tensor $\tenAA$ is NP-hard, and it is likely that the problem is also very hard to approximate.  Assuming
the Exponential Time Hypothesis, Barak et al.~\cite{BBHKSZ12} proved
that computing $2 \rightarrow 4$ norm of a matrix, a special case of computing the norm of a $4$-tensor, is hard to approximate
within a factor $\exp(\log^{1/2-\epsilon}(n))$ for any $\epsilon > 0$.

Our goal is to certify an \emph{approximate} upper bound on $\fmax{\tenAA}$ is not too far from the true value. Specifically, we seek an estimate $B(\tenAA)$ which always upper bounds $\fmax{\tenAA}$, and with high probability is as close to $O_d(\sqrt{n})$ as possible for a random $\tenAA$.

In addition to its intrinsic interest, the problem of maximizing tensors and closely related tasks of computing tensor norms, has connections to diverse topics, such as 
 quantum information theory~\cite{BH13, BKS14}, the Small Set Expansion Hypothesis (SSEH) and the Unique Games Conjecture (UGC)~(via $2 \rightarrow 4$ norm, see \cite{BBHKSZ12, BKS14}), 
 refuting random CSPs~\cite{RRS16}, tensor decomposition~\cite{BKS15,GM15}, tensor PCA~\cite{MR14, HSS15}, and planted clique~(via the parity tensor, see \cite{FK08, BV09}). 
 Many of these applications are of considerable interest in the $2^{n^\epsilon}$-runtime regime. 
 
%1. For random matrix, we know spectral norm is O(sqrt{n}), and we can certify this, as one can in fact compute this exactly.
%2. This paper is motivated by the analogous question for random tensors. Here we don’t know any efficient ways to compute, and the goal is to certify an approximation (probably one should also define certify precisely, in that we should with high probability say B, and the estimate should always be an upper bound on spectral norm, and the goal is to keep B as close to the truth of O(sqrt{n}) as possible.

%3. Say something about motivation, including Brubaker-Vempla for parity tensors, and other things for random tensors (and may be tensor PCA also - probably can lift some stuff from Hopkins-Steurer-Shi?)

A natural approach to tackle the above problem is through the {\em
  Sum of Squares} (SoS) semidefinite programming relaxations.  There are several ways to represent a tensor $\tenAA \in \R^{[n]^d}$ (assume $d$ is even) in matrix form as $M \in \R^{[n]^{d/2} \times [n]^{d/2}}$  so that $\langle \tenAA, x^{\otimes d} \rangle = (x^{\otimes d/2})^T M x^{\otimes d/2}$ for all $x \in \R^n$. The largest eigenvalue $\lambda_{\max}(M)$ of any such matrix representation $M$ serves as an (efficiently computable) upper bound on $\fmax{\tenAA}$. The basic SoS relaxation looks for the best matrix representation, i.e., the one minimizing $\lambda_{\max}(M)$, among all possible representations of the tensor $\tenAA$. This can be expressed as a semidefinite program, and also has a natural dual view in terms of pseudo-expectations or moment matrices (see \ref{subsec:sos-prelims}).

The SoS hierarchy offers a sequence of relaxations,   
 parameterized by
the \emph{level} $q$, with larger $q$ giving a (potentially) tighter relaxation.  In our context, this amounts to optimizing over matrix representations of $\tenAA^{q/d}$ (we assume $q$ is divisible by $2d$); in the dual view, this involves optimizing over pseudo-expectations for polynomials of degree up to $q$ (as opposed to degree $d$ for the basic relaxation). The level-$q$ relaxation can be solved in $n^{O(q)}$ time by solving the associated semidefinite program. The SoS hierarchy thus presents a trade-off between approximation guarantee and runtime, with larger levels giving more accurate estimates at the expense of higher complexity. 

This work is concerned with both positive and negative results on the efficacy of the SoS hierarchy to approximately certify the maxima of random tensors. We now turn to stating our results formally.

\subsection{Our Results}
For an order-$q$ tensor $\tenAA\in (\Re^{n})^{\otimes d}$, 
%be a $q$-tensor with i.i.d. $\pm 1$ (Rademacher) entries. 
the polynomial $\tenAA(x)$ and its maximum on the sphere $\fmax{\tenAA}$ are defined as
\[
\tenAA(x) := \iprod{\tenAA}{x^{\otimes d}}
\qquad 
\fmax{\tenAA} := \sup_{\|x\| = 1} \tenAA(x).
\]
When the entries of $\tenAA$ are i.i.d Rademacher random variables (or i.i.d. Gaussians), 
it is known that $\fmax{\tenAA} \lesssim \sqrt{n\cdot d\cdot \log d}$ (see~\cite{TS14}).
We will also use, for a polynomial $g$, $\fmax{g}$ to denote $\sup_{\|x\| = 1} g(x)$.

%\footnote{ \cite{TS14} proves this for injective norm  and it is not hard to specialize their proof for the $\fmax{f}$ case and obtain a better bound of $\sqrt{n\,\log q}$} 

\medskip \noindent \textbf{SoS degree = Polynomial Degree.}

\smallskip\noindent We study the performance of degree-$q$ SoS on random tensors of order-$q$. The formal definition and basic properties of SoS relaxations are presented in Section~\ref{subsec:sos-prelims}.

\begin{theorem}
\thmlab{q-tensor:informal}
For any even $q\leq n$, 
let $\tenAA\in (\Re^{n})^{\otimes q}$ be a $q$-tensor with independent, Rademacher entries. 
With high probability, the value $B$ of the degree-$q$ SoS relaxation of $\fmax{\tenAA}$ satisfies
\[
  2^{-O(q)} \cdot \pth{\frac{n}{q}}^{q/4-1/2} 
    \leq~~~~ \frac{B}{\fmax{\tenAA}} ~~~~\leq~~ 
    2^{O(q)} \cdot \pth{\frac{n}{q}}^{q/4-1/2}.
\]
\end{theorem}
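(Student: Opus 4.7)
The theorem has two parts: an upper bound on the degree-$q$ SoS value $B$ of a random order-$q$ tensor $\tenAA$, and a matching lower bound. I treat them separately.

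\textbf{Upper bound.} Any symmetric matrix $\matAA\in\R^{[n]^{q/2}\times[n]^{q/2}}$ with $(x^{\otimes q/2})^{\top}\matAA\,x^{\otimes q/2}=\tenAA(x)$ yields the certificate $B\le\lambda_{\max}(\matAA)$. The naive tuple-indexed flattening is essentially a random $\pm 1$ matrix of dimension $n^{q/2}$, whose operator norm is $\approx n^{q/4}$; dividing by $\fmax{\tenAA}\asymp\sqrt{nq\log q}$ yields only $n^{q/4-1/2}$, a factor $\approx q^{q/4}$ larger than the target. To close this gap my plan is to use the representation indexed by \emph{multisets} of size $q/2$ rather than tuples: after symmetrizing $\tenAA$, the natural flattening depends on its indices only through their multiset union, and hence descends to a matrix of dimension only $\binom{n+q/2-1}{q/2}\le(2en/q)^{q/2}$. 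The key observation is that quotienting is lossless---the multiset-indexed and tuple-indexed representations share the same operator norm---after which a trace-method argument on the smaller, multiset-indexed random matrix yields $\lambda_{\max}\lesssim 2^{O(q)}(n/q)^{q/4}$ w.h.p., producing the claimed $2^{O(q)}(n/q)^{q/4-1/2}$ upper bound.

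\textbf{Lower bound.} I must exhibit a degree-$q$ pseudo-expectation $\PE$ whose moment matrix $\matMM$ is PSD, satisfies $\PE[1]=1$ and the sphere identities, and has large inner product $\langle\matMM,\matAA\rangle$ with the tensor's matrix representation. The guiding principle is to choose $\matMM$ with a ``flat'' bulk spectrum, i.e.\ a large ratio of minimum-to-average eigenvalue: with $\PE[1]=1$ pinning the average scale, a flat spectrum leaves a high-rank subspace available to correlate with the top eigenstructure of $\matAA$. The plan is to derive such an $\matMM$ from the Wigner \emph{semicircle} distribution, whose univariate Hankel moment matrix (with $2k$-th moment the Catalan number $C_k$) is well-conditioned; via its Cholesky decomposition, together with a tensorization over directions $u\in\SSS^{n-1}$, one obtains a candidate moment matrix with the required spectral profile. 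Pairing with eigenvectors of $\matAA$ and concentrating over the randomness of $\tenAA$ then yields the claimed lower bound.

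\textbf{Main obstacles.} On the upper-bound side the technical crux is running the trace method on the multiset-indexed random matrix sharply enough to obtain both the $(n/q)^{q/4}$ scaling \emph{and} a clean $2^{O(q)}$ prefactor: the entries are not independent, since a single tensor entry $\tenAA_\beta$ appears in every multiset-index pair $(I,J)$ with $I\uplus J=\beta$ (multiplicity up to $\binom{q}{q/2}$), so the closed-walk enumeration must account for these dependencies without an exponential blowup in $q$. Establishing that the tuple-to-multiset quotient is genuinely lossless (equal operator norm, not merely comparable) is the conceptual heart of the improvement and is what the ``quotient matrix'' formalism of the paper is designed to support. On the lower-bound side the delicate point is verifying that the Wigner--Cholesky construction satisfies the sphere identities when lifted from univariate semicircle moments to the $n$-variable pseudo-expectation, and that the flat-spectrum property of the univariate moment matrix is preserved under this tensorization so that only a $2^{-O(q)}$ loss is incurred rather than something worse.
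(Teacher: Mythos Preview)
Your plan matches the paper's. The paper passes from the tuple-indexed SoS-symmetric representation $\sfA$ to the multiset-indexed \emph{quotient matrix} $\hA_Q$ and shows (\lemref{quo:op}) that $\norm{2}{\sfA}\le\norm{2}{\hA_Q}$, exactly your ``lossless quotient'' step. To bound $\norm{2}{\hA_Q}$ the paper uses a scalar concentration plus $\eps$-net argument rather than the trace method: for fixed unit $y$, the quadratic form $y^T\hA_Q y$ is a sum of independent Rademachers with total variance $2^{O(q)}$, so a Bernstein bound and a net over $|\degmindex{q/2}|\le 2^{O(q)}(n/q)^{q/2}$ dimensions give $\norm{2}{\hA_Q}\le 2^{O(q)}(n/q)^{q/4}$. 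Your trace-method alternative should also work, and you correctly flag the dependency structure (entries with the same $\beta+\gamma$ are perfectly correlated) as the thing to track; the net argument is perhaps cleaner here precisely because it handles this dependency in one line via the variance computation.

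\textbf{Lower bound.} You have identified the right ingredients---Wigner semicircle moments, Catalan numbers, Cholesky with unit diagonal, a ``flat'' SoS-symmetric PSD matrix---but the construction you sketch diverges from the paper in two places, and the second is a genuine gap.

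First, the SoS-symmetrized identity $\sfW$ is built from $n$ \emph{independent} Wigner semicircle variables: $\sfW[I,J]=\E[x^{\mi I+\mi J}]$ with $x_1,\dots,x_n$ i.i.d.\ semicircle. The proof that $\lambda_{\min}(\sfW)\ge\tfrac12$ is an induction on the number of variables that relies essentially on this product structure together with the Cholesky factor of the univariate Hankel matrix being unit-upper-triangular. Your ``tensorization over directions $u\in\SSS^{n-1}$'' is a different object (it sounds like an average of $u^{\otimes q/2}(u^{\otimes q/2})^T$ over the sphere), and it is not clear that its min-to-average eigenvalue ratio is $2^{-O(q)}$ rather than something worse.

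Second, and more importantly, the paper does \emph{not} pair $\sfW$ with eigenvectors of $\matAA$. The moment matrix is
\[
\sfM \;=\; c_1^{-1}\Bigl(c_2^{-1}\,\tfrac{q^{3q/4}}{n^{3q/4}}\,\sfA \;+\; n^{-q/2}\,\sfW\Bigr),
\]
where $\sfA$ is the SoS-symmetrization of the random matrix $A$ itself, restricted to multilinear indices. The point is that $\langle A,\sfA\rangle$ is essentially a sum of squares of the tensor entries, hence $\approx n^q/q!$ deterministically, while $\norm{2}{\sfA}$ is controlled by the same quotient-matrix bound used above. Adding $\sfW$ then repairs positive semidefiniteness at only a $2^{O(q)}$ cost in trace. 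Your ``pairing with eigenvectors of $\matAA$'' runs into the problem that those eigenvectors are not SoS-symmetric; SoS-symmetrizing $vv^T$ generically destroys the alignment with $A$, so you lose the large inner product you were trying to secure. The missing idea is precisely to use (a scaled, multilinear-restricted copy of) the random matrix itself as the ``signal'' component of $\sfM$.
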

This improves upon the $O(n^{q/4})$ upper bound by Montanari and Richard~\cite{MR14}.

\medskip \noindent \textbf{SoS Degree $\gg$ Polynomial Degree.}
\begin{theorem}
\thmlab{d-tensor:informal}
Let $\tenAA\in (\Re^{n})^{\otimes d}$ be a $d$-tensor with independent, Rademacher 
entries. Then for any even $q$ satisfying $d\leq q\leq n$, with high probability, the 
degree-$q$ SoS certifies an upper bound $B$ on $\fmax{\tenAA}$ where w.h.p., 
\[
    \frac{B}{\fmax{\tenAA}} ~~\leq ~~ \pth{\frac{\widetilde{O}(n)}{q}}^{d/4-1/2}
\]
\end{theorem}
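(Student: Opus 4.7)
The plan is to exhibit a matrix representation $M \in \R^{n^{q/2} \times n^{q/2}}$ of the degree-$q$ polynomial $\tenAA(x) \cdot \|x\|_2^{q-d}$ whose spectral norm is small with high probability. Since $\|x\|_2^{q-d} = 1$ on the unit sphere, this polynomial agrees with $\tenAA(x)$ there, so the level-$q$ SoS relaxation certifies $\fmax{\tenAA} \leq \lambda_{\max}(M)$; it therefore suffices to construct such an $M$ and bound its top eigenvalue with high probability over $\tenAA$.

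The construction of $M$ uses \emph{higher-order mass-shifting}: for each monomial $x^\mu$ of degree $q$ appearing in the expansion of $\tenAA(x) \cdot \|x\|_2^{q-d}$, I would spread its coefficient uniformly across all pairs $(\alpha,\beta) \in [n]^{q/2} \times [n]^{q/2}$ with $\alpha \cup \beta = \mu$ as multisets. Each tensor entry $\tenAA_{i_1,\ldots,i_d}$ then contributes to roughly $\binom{q/2}{d/2}^2$ matrix positions, and the concentration gain from summing $\Theta\bigl(\binom{q/2}{d/2}^2\bigr)$ independent Rademachers is exactly the source of the denominator $q^{\Theta(d)}$ in the final bound on $\lambda_{\max}(M)$. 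A naive representation (e.g.\ tensoring a degree-$d$ representation of $\tenAA$ with an identity-like block for $\|x\|_2^{q-d}$) would not exploit this averaging and would give no improvement over constant-level SoS.

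Next I would apply the matrix trace/moment method: since $M$ is symmetric, $\lambda_{\max}(M)^{2k} \leq \Tr(M^{2k})$, so it suffices to bound $\E\pbrcx{\Tr(M^{2k})}$ for a growing power $k$. The trace expands as a sum over closed walks of length $2k$ in $[n]^{q/2}$; since each matrix entry is an affine combination of Rademacher tensor entries, only walks in which every $\tenAA$-entry is used an even number of times survive expectation. I would classify surviving walks by their \emph{pairing pattern} on the random tensor entries and, separately, by the closure constraints induced on the ``free'' coordinates originating from the $\|x\|_2^{q-d}$ factor, then bound each class by (i) the number of pairing patterns, (ii) the number of assignments of free indices to $[n]$, and (iii) the product of mass-shifting denominators $\binom{q/2}{d/2}^{-2k}$. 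A standard Wigner-type charging, in which tree-like pairings dominate and each matched pair of tensor entries is charged $n^{d/2}$ while each free coordinate is charged $n$, should yield
\[
\E\pbrcx{\Tr(M^{2k})} \;\lesssim\; n^{q/2} \cdot n^{k} \cdot \pth{\tfrac{\widetilde{O}(n)}{q}}^{k(d/2-1)}.
\]
Taking $2k$-th roots and letting $k\to\infty$ gives $\lambda_{\max}(M) \lesssim \sqrt{n}\cdot\bigl(\widetilde{O}(n)/q\bigr)^{d/4-1/2}$, which divided by $\fmax{\tenAA} \asymp \sqrt{n d \log d}$ produces the claimed ratio.

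The main obstacle is the combinatorial analysis of the trace, which has to \emph{simultaneously} exploit the mass-shifting (spreading each random entry across $\Theta\bigl(\binom{q/2}{d/2}^2\bigr)$ matrix positions) and the parity constraints imposed by Rademacher expectation. A crude enumeration of walks would not match the $\binom{q/2}{d/2}^{-2k}$ denominators and would lose the $q^{d/4-1/2}$ gain entirely; the analysis must show that the dominant contribution is genuinely tree-like, and must cleanly absorb, into the $\widetilde{O}$ factors, the lower-order contributions from walks that either form non-tree pairings or exhibit degenerate coincidences among free indices. Handling the interaction between the degree-$d$ random part and the degree-$(q-d)$ deterministic factor symmetrically, rather than treating them as separate blocks, is what allows the gain to scale correctly with $q$.
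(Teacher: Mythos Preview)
Your plan diverges from the paper's argument in two essential respects, and the second of these is a genuine gap.

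\textbf{Different polynomial.} The paper does \emph{not} build a representation of $\tenAA(x)\cdot\|x\|^{q-d}$. It first passes to the power $\tenAA(x)^{q/d}$ via pseudo-Cauchy--Schwarz, $\sos{q}{\tenAA(x)}\le\hssos{\tenAA(x)^{q/d}}^{d/q}$, and then exhibits a good matrix representation of $\tenAA(x)^{q/d}$. Taking the $(q/d)$-th power is what makes the method work: each matrix entry becomes a \emph{product of $q/d$ independent tensor entries}, and the row--column symmetrization averages $((q/2)!)^2$ such products, which is the source of the variance reduction driving Lemmas~4.2--4.3. Your polynomial $\tenAA(x)\,\|x\|^{q-d}$ carries only a single copy of the random tensor, padded by a deterministic factor, so there is no analogous product structure to average.

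\textbf{Wrong symmetrization and incorrect variance heuristic.} The paper explicitly notes (Section~3.1) that complete symmetrization over $\Sym_q$ fails already for $f^{q/d}$: it contains a rank-one piece $\mathrm{Vec}(A)\mathrm{Vec}(A)^T$ forcing $\|M_c\|\ge\Omega(n^{d/2})$ regardless of randomness. The paper therefore uses \emph{row--column} symmetrization over $\Sym_{q/2}\times\Sym_{q/2}$ instead. More importantly for your setting, your claim that each entry of the complete symmetrization of $\tenAA(x)\,\|x\|^{q-d}$ is a sum of $\Theta\bigl(\tbinom{q/2}{d/2}^{2}\bigr)$ independent Rademachers is not correct. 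For generic $(I,J)$ with all $q$ indices distinct the entry is \emph{zero}, since $\tenAA(x)\,\|x\|^{q-d}$ has no multilinear degree-$q$ monomials when $q>d$ (the factor $\|x\|^{q-d}$ forces at least $(q-d)/2$ squared variables). The nonzero entries live on a sparse, highly structured index set where $\alpha(I)+\alpha(J)$ has many repeated coordinates; even on the diagonal $I=J$ (with distinct entries in $I$) the number of independent tensor entries contributing is only $\tbinom{q/2}{d/2}$, not its square, and off the diagonal it is typically a single $f_\alpha$. The Wigner-type trace picture you sketch---tree-like pairings on a dense random-like matrix---does not match this structure, and the mechanism by which a single copy of $\tenAA$ would yield the $q^{d/4-1/2}$ gain is absent from the proposal. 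The paper's detour through $\tenAA(x)^{q/d}$ together with row--column symmetrization is precisely what supplies that mechanism.
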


\begin{remark}
Combining our upper bounds with the work of \cite{HSS15} would yield improved 
tensor-PCA guarantees on higher levels of SoS. Our techniques prove similar results for a more general random model where each coefficient is 
independently sampled from a centred subgaussian distribution. See the previous version of the 
paper~\cite{BGL16} for details.
\end{remark}

\begin{remark}
Raghavendra, Rao, and Schramm~\cite{RRS16} have independently and concurrently obtained similar
(but weaker) results to \thmref{d-tensor:informal} for random degree-$d$ polynomials. 
Specifically, their upper bounds appear to require the assumption that the SoS level $q$ must be 
less than $n^{1/(3d^2)}$ (our result only assumes $q\leq n$). Further, they certify an 
upper bound that matches \thmref{d-tensor:informal} only when $q \leq 2^{\sqrt{\log n}}$.  
\end{remark}

\subsection{Related Work}
\noindent \textbf{Upper Bounds.}
Montanari and Richard~\cite{MR14} presented a $n^{O(d)}$-time algorithm that can certify that the
optimal value of $\fmax{\tenAA}$ for a random $d$-tensor is at most $O(n^{ \frac{ \lceil d/2 \rceil}{2}})$ with high probability. Hopkins, Shi,
and Steurer~\cite{HSS15} improved it to $O(n^{ \frac{ d }{4}})$ with the same running time. They
also asked how many levels of SoS are required to certify a bound of $n^{3/4 - \delta}$ for $d = 3$. 

Our analysis asymptotically improves the aforementioned bound when $q$ is growing with $n$, and 
we prove an essentially matching lower bound (but only for the case $q=d$).  
Secondly, we consider the case when $d$ is fixed, and give
improved results for the performance of degree-$q$ SoS (for large $q$), thus answering in part, a
question posed by Hopkins, Shi and Steurer~\cite{HSS15}. 

Raghavendra, Rao, and Schramm~\cite{RRS16} also prove results analogous to 
 \thmref{d-tensor:informal} for the case of \emph{sparse} random polynomials 
(a model we do not consider in this work, and which appears to pose additional technical difficulties). This implied upper bounds for refuting random instances of constraint satisfaction problems using higher levels of the SoS hierarchy, which were shown to be tight via matching  SoS lower bounds in \cite{KMOW17}.

\medskip \noindent \textbf{Lower Bounds.}
While we only give lower bounds for the case of $q=d$, subsequent to our work, 
Hopkins \etal ~\cite{HKP16} proved the following theorem, which gives lower 
bounds for the case of $q\gg d$:
\begin{theorem}
Let $f$ be a degree-$d$ polynomial with i.i.d. gaussian coefficients. If there is  
some constant $\epsilon > 0$ such that $q\geq n^{\epsilon}$, then with high probability over 
$f$, the optimum of the level-$q$ SoS relaxation of $\fmax{f}$ is at least
%there is a degree-$q$ pseudo-expectation operator satisfying $\| x\|^2 = 1$, with 
\[
%\PEx{f(x)} ~~\geq ~~
\fmax{f}\cdot 
\Omega_d \pth{(n/q^{O(1)})^{d/4 - 1/2}} \ .
\]
\end{theorem}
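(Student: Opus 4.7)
The plan is to construct a degree-$q$ pseudo-expectation $\PE$ on the sphere that attains the claimed value on $f$. The natural framework, following the pseudo-calibration template of Barak et al., is to imagine a planted model in which $f$ is drawn as $f(x) = \lambda \cdot \langle v, x\rangle^d + g(x)$, where $v$ is sampled uniformly from the unit sphere and $g$ is an i.i.d.\ Gaussian degree-$d$ polynomial. I would tune $\lambda$ so that $\lambda/\fmax{f} \asymp (n/q^{O(1)})^{d/4-1/2}$ w.h.p.\ (using that $\fmax{g} \lesssim \sqrt{n \log d}$ for random $g$, as cited in the introduction), while keeping the planted and null distributions indistinguishable to any degree-$q$ polynomial test. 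The pseudo-expectation of a monomial $x^\alpha$ of degree $\leq q$ is then defined as the low-degree (Hermite) projection of $\mathbf{E}_v[x^\alpha \cdot \mathrm{LR}(f;v)]$ with respect to the null measure, where $\mathrm{LR}$ denotes the likelihood ratio between the planted and null laws of $f$.

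With this definition, linearity and $\PE[1] = 1$ are immediate, and the bound $\PE[f] \gtrsim \lambda$ is essentially what the planted model delivers, so that $\PE[f] \geq \fmax{f}\cdot \Omega_d\!\bigl((n/q^{O(1)})^{d/4-1/2}\bigr)$ follows from the choice of $\lambda$. The sphere constraints $\PE[(\|x\|_2^2 - 1) p(x)] = 0$ for all $p$ with $\deg(p) \leq q - 2$ hold automatically when $v$ is drawn uniformly on the sphere and one projects onto the space of spherical harmonics; alternatively they can be arranged after the fact by a mild rescaling using even powers of $\|x\|_2$.

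The main obstacle is positive semidefiniteness of the moment matrix $M_{\alpha,\beta} := \PE[x^\alpha x^\beta]$ indexed by multi-indices of degree $\leq q/2$. Expanding $M$ in the Hermite basis in the coefficients of $f$ decomposes it as a sum of \emph{graph matrices} in the sense of Barak-Hopkins-Kothari-Moitra-Potechin; the identity (from the $v$-average term) is the leading summand, and every remaining term is a graph matrix whose operator norm is controlled by a combinatorial trace-method bound parameterized by the shape of the underlying bipartite ribbon graph. The key bookkeeping weighs powers of $\lambda$ (edge contributions) against vertex factors from summing over index labels, and it is precisely the hypothesis $q \geq n^\epsilon$ that supplies the slack needed to drive every non-identity graph matrix strictly below the identity in spectral norm, ensuring $M \succeq 0$.

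Finally, one lifts the in-expectation statement to a high-probability guarantee via standard trace-moment concentration: bounding $\mathbf{E}\|M - \mathbf{E} M\|^{2k}$ by another trace-method argument and applying Markov's inequality with $k$ polylogarithmic in $n$ yields $M \succeq 0$ with probability $1 - o(1)$ over the draw of $f$. The most delicate piece throughout will be proving sharp norm bounds on graph matrices whose underlying shapes may have up to $\Theta(q)$ vertices; this is also where the $q^{O(1)}$ factor (rather than $q^{1+o(1)}$) in the denominator arises, absorbing the loss incurred by summing over the combinatorially many admissible shapes.
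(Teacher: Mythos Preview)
This theorem is not proved in the present paper. It appears in the ``Related Work'' section as a result attributed to Hopkins et al.~\cite{HKP16}, explicitly introduced with ``subsequent to our work, Hopkins \etal proved the following theorem.'' There is therefore no proof in this paper to compare your proposal against.

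That said, your outline is a reasonable sketch of the pseudo-calibration approach that \cite{HKP16} (and related papers) actually use: planted-versus-null model, low-degree projection of the likelihood ratio to define $\PE$, decomposition of the moment matrix into graph matrices, and trace-method norm bounds on the non-identity shapes. The condition $q \ge n^{\epsilon}$ is indeed what makes the shape-counting tractable and is the source of the $q^{O(1)}$ slack. As a high-level plan your proposal is on target, though the PSD verification for shapes with $\Theta(q)$ vertices is the genuinely hard part and is only gestured at here.

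It is worth contrasting this with the lower bound the present paper \emph{does} prove, namely the $q=d$ case in \thmref{qten:lb}. That argument is completely different in flavor: rather than pseudo-calibration, the moment matrix is built explicitly as $\sfM = c_1^{-1}(c_2^{-1}\cdot (q/n)^{3q/4}\sfA + n^{-q/2}\sfW)$, where $\sfA$ is an SoS-symmetrized version of the random tensor restricted to multilinear indices, and $\sfW$ is the Wigner-moment ``SoS-symmetric identity'' of \thmref{identity}. Positivity comes not from graph-matrix norm bounds but from the fact that $\hW$ has minimum eigenvalue $\ge 1/2$ while $\Tr{\hW} \le 2^{O(q)} n^{q/2}$, a property derived from the Cholesky factorization of the Hankel moment matrix of the Wigner semicircle distribution. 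This construction is more direct but is tailored to $q=d$; your pseudo-calibration plan is the right tool for the general $q \gg d$ regime, which is precisely why the paper defers that case to \cite{HKP16}.
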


Note that this almost matches our upper bounds from \thmref{d-tensor:informal}, modulo the 
exponent of $q$. For this same reason, the above result does not completely recover our lower 
bound in \thmref{q-tensor:informal} for the special case of $q=d$. 

\paragraph*{Results for worst-case tensors.} It is proved in 
\cite{BGGLT16} that the $q$-level SoS gives an $(O(n)/q)^{d/2-1}$ approximation to 
$\ftwo{\tenAA}$ in the case of arbitrary $d$-tensors and an $(O(n)/q)^{d/4-1/2}$ approximation to $\fmax{\tenAA}$
in the case of $d$-tensors with non-negative entries (for technical reasons one can only 
approximate $\ftwo{\tenAA}=\max\{|\fmax{\tenAA}|,|\fmin{\tenAA}|\}$ in the former case). 

It is interesting to note that the approximation factor in the case of non-negative tensors 
matches the approximation factor (upto polylogs) we achieve in the random case. Additionally, 
the gap given by \thmref{q-tensor:informal} for the case of random tensors provides the best 
degree-$q$ SoS gap for the problem of approximating the $2$-norm of arbitrary $q$-tensors. 
Hardness results for the arbitrary tensor $2$-norm problem is an important pursuit due to 
its connection to various problems for which subexponential algorithms are of interest.

\subsection{Organization}
We begin by setting some important notation concerning SoS matrices, and describe some basic preliminaries about the SoS hierarchy in Section~\ref{sec:notation-prelims}. We touch upon the main technical ingredients driving our work, and give an overview of the proof of \thmref{d-tensor:informal} and the lower bound in \thmref{q-tensor:informal} in Section~\ref{sec:overview}. We present the proof of 
\thmref{d-tensor:informal} for the case of even $d$ in Section~\ref{sec:ub-even}, with the more tricky odd $d$ case handled in Appendix~\ref{app:3ten:ub}. The lower bound on the value of SoS-hierarchy claimed in \thmref{q-tensor:informal} is proved in \secref{qten:lb}, and the upper bound in \thmref{q-tensor:informal} also follows based on some techniques in that section.

\section{Notation and Preliminaries}
\label{sec:notation-prelims}
\noindent \textbf{Multi-index and Multiset.}
A \defnt{multi-index} is defined as a sequence $\alpha \in \mindex$. We use
$\abs{\alpha}$ to denote $\sum_{i=1}^n \alpha_i$ and $\degmindex{d}$ (resp. $\udmindex{d}$) to
denote the set of all multi-indices $\alpha$ with $\abs{\alpha} = d$ (resp. $\abs{\alpha} \leq d$). 
We use $\1$ to denote the multi-index $1^n$.
Thus, a homogeneous polynomial $f$ of degree $d$ can be expressed in terms of its coefficients as

\smallskip
$\qquad\qquad
f(x) ~=~ \sum_{\alpha \in \degmindex{d}} f_{\alpha} \cdot x^{\alpha} ,$

\smallskip\noindent
where $x^{\alpha}$ is used to denote the monomial corresponding to $\alpha$. 
In general, with the exception of absolute-value, 
any scalar function/operation when applied to vectors/multi-indices, returns the vector obtained by applying the function/operation entry-wise.

\subsection{Matrices}
For $k \in \NN$, we will consider $[n]^k \times [n]^k$ matrices $M$ with real entries. All matrices 
considered in this paper should be taken to be symmetric (unless otherwise stated). We index entries 
of the matrix $M$ as $M[I,J]$ by \defnt{tuples} $I,J \in [n]^k$. $\oplus$ denotes tuple-concatenation. 

A tuple $I = (i_1, \ldots, i_k)$ naturally corresponds to a multi-index $\mi{I} \in \degmindex{k}$
with $\abs{\mi{I}} = k$, i.e. $\mi{I}_j = |\{\ell ~|~i_{\ell} = j\}|$.  For a tuple $I \in [n]^k$, 
we define $\orbit{I}$ the set of all tuples $J$ which correspond to the same multi-index \ie 
$\alpha(I) = \alpha(J)$. Thus, any multi-index $\alpha\in \degmindex{k}$,  corresponds to an 
equivalence class in $[n]^k$. We also use $\orbit{\alpha}$ to denote the class of all tuples 
corresponding to $\alpha$.

Note that a matrix of the form $\pth{x^{\otimes k}} \pth{x^{\otimes k}}^T$ has many additional
symmetries, which are also present in solutions to programs given by the SoS hierarchy.   
To capture this, consider the following definition: 

\begin{defn}[SoS-Symmetry]
A matrix $\sfM$ which satisfies $\sfM[I,J] = \sfM[K,L]$ whenever 
$\alpha(I) + \alpha(J) = \alpha(K) + \alpha(L)$ is referred to as \defnt{SoS-symmetric}. 
\end{defn}

\begin{defn}[Matrix-Representation]
For a homogeneous degree-$t$ ($t$ even) polynomial $g$, we say a matrix $\matMM_g\in \Re^{[n]^{t/2}\times [n]^{t/2}}$ 
is a degree-$t$ matrix representation of $g$ if for all $x$, $g(x) = (x^{\otimes t/2})^{T}~\matMM_g~x^{\otimes t/2}$. 
(We note here that every homogeneous polynomial has a unique SoS-Symmetric matrix representation.)
\end{defn}
Note that $\lambda_{\max}(\matMM_g)$ is an upper bound on $\fmax{g}$. 
This prompts the following relaxation of $\fmax{g}$ that is closely related to the final SoS relaxation used 
in our upper bounds:

\begin{defn}
For a homogeneous degree-$t$ ($t$ even) polynomial $g$, define 
\[
    \hssos{g} ~\defeq~ \inf \big\{\lambda_{\max}(M_g) \sep{M_g \text{ represents } g}\big\} 
\]
\end{defn}
As we will see shortly, $\hssos{g}$ is the dual of a natural SoS relaxation of $\fmax{g}$.

\subsection{SoS Hierarchy}
\label{subsec:sos-prelims}
Let $\Re[x]_{\leq q}$ be the vector space of polynomials with real coefficients in variables 
$x = (x_1, \dots, x_n)$, of degree at most $q$. For an even integer $q$, the degree-$q$ 
pseudo-expectation operator is a linear operator $\PE : \Re[x]_{\leq q} \mapsto \Re$ such that 
\smallskip

\begin{compactenum}[1.]
\item $\PEx{1} = 1$ for the constant polynomial $1$.
\item $\PEx{p_1 + p_2} = \PEx{p_1} + \PEx{p_2}$ for any polynomials $p_1, p_2 \in \Re[x]_{\leq q}$. 
\item $\PEx{p^2} \geq 0$ for any polynomial $p \in \Re[x]_{\leq q/2}$. 
\end{compactenum}

The pseudo-expectation operator $\PE$ can be completely described by the \emphi{moment matrix} (while $x$ is a column vector, we abuse notation and let $(1,x)$ denote the column vector $(1,x_1,\dots ,x_n)^T$)
\begin{equation}
\label{eq:mom-matrix-inhom}
\overbar{\sfX}:= \PEx{(1,x)^{\otimes q/2} \,\,((1,x)^{\otimes q/2})^T} \  .
\end{equation}
Moreover, the condition $\PEx{p^2}\geq 0$ for all $p \in \Re[x]_{\leq q/2}$ can be shown to be 
equivalent to $\overbar{\sfX}\succeq 0$. 

\paragraph{Constrained Pseudoexpectations.}
For a system of polynomial constraints \[C = \brc{f_1 = 0, \ldots, f_m = 0, g_1 \geq 0, \ldots,
  g_r \geq 0},\] we say $\PE_C$ is a pseudoexpectation operator respecting $C$, if in addition to the
above conditions, it also satisfies
\begin{enumerate}
\item $\PExc{C}{p \cdot f_i} = 0$,  $\forall i \in [m]$ and $\forall p$ such that $\deg(p \cdot f_i)
  \leq q$.
\item $\PExc{C}{p^2 \cdot \prod_{i \in S} g_i} \geq 0$, $\forall S \subseteq [r]$ and $\forall p$
  such that $\deg(p^2 \cdot \prod_{i \in S} g_i) \leq q$.
\end{enumerate}
It is well-known that such constrained pseudoexpectation operators can be described as solutions to
semidefinite programs of size $n^{O(q)}$ \cite{BS14, Laurent09}. This hierarchy of semidefinite
programs for increasing $q$ is known as the SoS hierarchy.

\paragraph{Additional Facts about SoS. }
We shall record here some well-known facts about SoS that come in handy later. 

\begin{claim}
For polynomials $p_1,p_2$, let $p_1\succeq p_2$ denote that 
$p_1-p_2$ is a sum of squares. 
It is easy to verify that if $p_1,p_2$ are homogeneous degree $d$ polynomials and 
there exist matrix representations $M_{p_1}$ and $M_{p_2}$ of $p_1$ and $p_2$ respectively, such that ~
$M_{p_1}-M_{p_2}\succeq 0$, \,then ~$p_1-p_2\succeq 0$. 
\end{claim}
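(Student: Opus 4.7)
The plan is to use the definition of matrix representation together with the spectral (Gram) decomposition of the PSD matrix $M_{p_1}-M_{p_2}$ to write $p_1-p_2$ explicitly as a sum of squares of polynomials in $x$. Note that for a matrix representation of the claimed form to exist, $d$ must be even, so I would begin by recording this mild assumption (or, equivalently, that $p_1$ and $p_2$ live in the even-degree slice where the relaxation makes sense).

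First, since $M_{p_1}-M_{p_2}$ is a real symmetric PSD matrix indexed by $[n]^{d/2}\times [n]^{d/2}$, I would invoke the standard spectral decomposition to write
\[
M_{p_1}-M_{p_2} \;=\; \sum_{i} v_i v_i^{T},
\]
where each $v_i\in \RR^{[n]^{d/2}}$ (taking, e.g., $v_i=\sqrt{\lambda_i}\,u_i$ for an orthonormal eigenbasis $\{u_i\}$ with non-negative eigenvalues $\{\lambda_i\}$). This is the well-known equivalence ``PSD $\iff$ Gram representation'' and requires no new ideas.

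Next, I would plug this decomposition into the defining identity $p_j(x) = \pth{x^{\otimes d/2}}^{T} M_{p_j}\, x^{\otimes d/2}$ for $j=1,2$, and compute
\[
p_1(x) - p_2(x) \;=\; \pth{x^{\otimes d/2}}^{T}\bigl(M_{p_1}-M_{p_2}\bigr)\, x^{\otimes d/2} \;=\; \sum_i \iprod{v_i}{x^{\otimes d/2}}^{2}.
\]
Each inner product $\iprod{v_i}{x^{\otimes d/2}}$ is itself a polynomial in $x$ of degree $d/2$, so the right-hand side is manifestly a sum of squares of polynomials, which is exactly the conclusion $p_1-p_2\succeq 0$.

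There is essentially no obstacle here: the argument is just the observation that the linear map $M\mapsto \pth{x^{\otimes d/2}}^{T} M\, x^{\otimes d/2}$ transports Gram decompositions of PSD matrices to sum-of-squares decompositions of homogeneous polynomials. The one point worth flagging in the write-up is that the choice of matrix representation is not unique (different $M_{p_j}$'s differ by matrices that annihilate $x^{\otimes d/2}$ for every $x$), but the argument is insensitive to this: all we use is that $M_{p_1}-M_{p_2}$ is some PSD matrix whose associated quadratic form on $x^{\otimes d/2}$ equals $p_1(x)-p_2(x)$.
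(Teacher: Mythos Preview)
Your proposal is correct and is exactly the standard argument one would expect here. The paper itself does not supply a proof of this claim at all---it simply asserts ``it is easy to verify''---so there is nothing to compare against; your spectral/Gram decomposition of $M_{p_1}-M_{p_2}$ followed by evaluation on $x^{\otimes d/2}$ is the intended (and essentially only) route.
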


\begin{claim}[Pseudo-Cauchy-Schwarz \cite{BKS14}]
$\PEx{p_1p_2}\leq (\PEx{p_1^2}\PEx{p_2^2})^{1/2}$ for any $p_1,p_2$ of degree at most $q/2$. 
\end{claim}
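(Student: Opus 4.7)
The plan is to mimic the classical proof of Cauchy-Schwarz, exploiting only the three defining properties of $\PE$ (linearity, normalization, positivity on squares), and making sure that every polynomial to which we apply $\PE$ has degree at most $q$.

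First I would observe that for every real $\lambda$, the polynomial $(p_1 - \lambda p_2)^2$ is a square of a polynomial of degree at most $q/2$, so by the positivity axiom $\PEx{(p_1 - \lambda p_2)^2} \geq 0$. Expanding by linearity gives the one-parameter family of inequalities
\begin{equation*}
\PEx{p_2^2}\,\lambda^2 \;-\; 2\,\PEx{p_1 p_2}\,\lambda \;+\; \PEx{p_1^2} \;\geq\; 0 \qquad \text{for all } \lambda \in \RR .
\end{equation*}
Note that each of $\PEx{p_1^2}$, $\PEx{p_2^2}$, $\PEx{p_1 p_2}$ involves a polynomial of degree at most $q$, hence is well-defined, and the first two are themselves nonnegative by positivity.

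Next I would split into two cases according to whether $\PEx{p_2^2}$ is strictly positive or zero. When $\PEx{p_2^2} > 0$, the left-hand side is a genuine quadratic in $\lambda$ that is globally nonnegative, so its discriminant is nonpositive, yielding $\PEx{p_1 p_2}^2 \leq \PEx{p_1^2}\,\PEx{p_2^2}$; equivalently, $\PEx{p_1 p_2} \leq (\PEx{p_1^2}\,\PEx{p_2^2})^{1/2}$. When $\PEx{p_2^2} = 0$, the inequality becomes linear in $\lambda$, namely $\PEx{p_1^2} - 2\lambda\,\PEx{p_1 p_2} \geq 0$ for all $\lambda \in \RR$, which forces $\PEx{p_1 p_2} = 0$, and the desired bound holds trivially since the right-hand side equals $0$ as well (and the symmetric case $\PEx{p_1^2} = 0$ is handled identically by swapping roles).

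The only mild subtlety I would flag is the degree bookkeeping: one needs $\deg(p_1 - \lambda p_2) \leq q/2$ so that $(p_1 - \lambda p_2)^2$ is in the domain of $\PE$, which is exactly where the hypothesis $\deg p_1, \deg p_2 \leq q/2$ is used. Beyond that the argument is purely algebraic and uses no further structure of $\PE$, so there is no real obstacle.
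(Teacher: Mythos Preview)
Your proof is correct and is precisely the standard argument for pseudo-Cauchy--Schwarz. The paper does not supply its own proof of this claim (it simply cites \cite{BKS14}), and the argument you give --- apply positivity to $(p_1-\lambda p_2)^2$ for all real $\lambda$, then read off the discriminant condition, handling the degenerate case $\PEx{p_2^2}=0$ separately --- is exactly the proof one finds in the cited reference and in the SoS literature generally.
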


\paragraph{SoS Relaxations for $\fmax{\tenAA}$.}~\\
Given an order-$q$ tensor $\tenAA$, our degree-$q$ SoS relaxation for $\fmax{\tenAA}$ which we will 
henceforth denote by $\sos{q}{\tenAA(x)}$ is given by, 
\begin{align*}
    \mathsf{maximize} \qquad \qquad \qquad \quad \PExc{C}{\tenAA(x)}  \\
    \mathsf{subject ~to:}  \qquad \qquad \widetilde{\mathbf{E}}_C ~\text{is a degree-$q$} \\
    \text{pseudoexpectation} \\
    \widetilde{\mathbf{E}}_C ~\text{respects}~ C \equiv \brc{\norm{2}{x}^q = 1}
\end{align*}

\noindent
Assuming $q$ is divisible by $2d$, we make an observation that is useful in our upper bounds: 
\begin{equation}
\Eqlab{relaxations:relations}
    \fmax{\tenAA}
    ~~\leq ~~
    \sos{q}{\tenAA(x)} 
    ~~\leq ~~
    \sos{q}{\tenAA(x)^{\,q/d}}^{d/q}
    =~~
    \hssos{\tenAA(x)^{\,q/d}}^{d/q}
\end{equation}
where the second inequality follows from Pseudo-Cauchy-Scwarz, and the equality follows from 
well known strong duality of the following programs (specifically, take 
$g(x) := \tenAA(x)^{\,q/d}$)~:\footnote{Compared to \eqref{eq:mom-matrix-inhom}, the primal formulation here uses a \emph{homogeneous} moment matrix or pseudo-expectation operator, defined for polynomials of degree exactly $q$.}
\begin{figure}[htb]
\begin{tabular}{|c|}
\hline
\begin{minipage}[t]{0.94\textwidth}
\smallskip
\underline{\textsf{Dual}}
\[
\hssos{g} ~\defeq~ \inf \big\{\lambda_{\max}(M_g) \sep{M_g \text{ represents } g}\big\} 
\]
\smallskip
\end{minipage} \\
\hline
\begin{minipage}{0.92\textwidth}
\begin{tabular}{l|l}
\begin{minipage}[t]{0.48 \textwidth}
\smallskip
\underline{\textsf{Primal I}}
\begin{align*}
\mathsf{maximize} ~~~\qquad \qquad \qquad \iprod{\sfM_g}{\sfX}  \\
\mathsf{subject ~to:} \qquad \qquad \quad \Tr{\sfX} = 1 \\
{\sfX} ~\text{is SoS symmetric} \\
\sfX \succeq 0
\end{align*}
\end{minipage}
&
\begin{minipage}[t]{0.48 \textwidth}
\smallskip
\underline{\textsf{Primal II}}
\begin{align*}
\mathsf{maximize} \qquad \qquad \qquad \qquad \quad \PExc{C}{g}  \\
\mathsf{subject ~to:}  \qquad \qquad \widetilde{\mathbf{E}}_C ~\text{is a degree-$q$} \\
 \text{pseudoexpectation} \\
 \widetilde{\mathbf{E}}_C ~\text{respects}~ C \equiv \brc{\norm{2}{x}^q = 1}\\
\end{align*}
\end{minipage}  \\
\end{tabular}
\end{minipage} \\
\hline
\end{tabular}
\caption{Duals of $\hssos{g}$ for the degree-$q$ homogeneous polynomial $g$}
\label{fig:Lambda}
\end{figure}

\smallskip \noindent \textbf{Note.}
In the rest of the paper, we will drop the subscript $C$ of the pseudo-expectation 
operator since throughout this work, we only assume the hypersphere constraint.

\section{Overview of our Methods}
\label{sec:overview}
We now give a high level view of the two broad techniques driving this work, followed by a more detailed overview of the proofs.

\smallskip
\noindent\textbf{Higher Order Mass-Shifting.} Our approach to upper bounds on a random low degree (say $d$) polynomial $f$, is through exhibiting a matrix representation of $f^{q/d}$ that has 
small operator norm. Such approaches had been used previously for low-degree SoS upper bounds. However when the SoS degree is constant, the set of SoS symmetric positions is also a constant 
and the usual approach is to shift all the mass towards the diagonal which is of little consequence when the SoS-degree is low. In contrast, when the SoS-degree is large, many non-trivial issues arise when shifting mass across SoS-symmetric positions, as there are many permutations with very large operator norm. In our setting, mass-shifting approaches like symmetrizing and diagonal-shifting fail quite spectacularly to provide good upper bounds. For our upper bounds, we crucially exploit the existence of "good permutations", and moreover that there are $q^{q}\cdot 2^{-O(q)}$ such good permutations. On averaging the representations corresponding to these good permutations, we obtain a 
matrix that admits similar spectral preperties to those of a matrix with i.i.d. entries, and with much lower variance (in most of the entries) compared to the naive representations. 

\medskip\noindent\textbf{Square Moments of Wigner Semicircle Distribution.}
Often when one is giving SoS lower bounds, one has a linear functional that is not necessarily PSD and a natural approach is to fix it by adding a pseudo-expectation operator with large value on square polynomials (under some normalization). Finding such operators however, is quite a non-trivial task when the SoS-degree is growing. We show that if $x_1, \dots, x_n$ are independently drawn from the Wigner semicircle distribution, then for any polynomial $p$ of any degree, $\Ex{p^2}$ is large (with respect to the degree and coefficients of $p$).
Our proof crucially relies on knowledge of the Cholesky decomposition of the moment matrix of the univariate Wigner distribution. This tool was useful to us in giving tight $q$-tensor lower bounds, and we believe it to be generally useful for high degree SoS lower bounds.

\subsection{Overview of Upper Bound Proofs}
\seclab{ub:overview}
For even $d$, let $\tenAA \in \Re^{[n]^{d}}$ be a $d$-tensor with i.i.d. $\pm 1$ entries and let $A \in \Re^{[n]^{d/2} \times [n]^{d/2}}$ be the matrix flattening of $\tenAA$, \ie 
$A[I,J] = \tenA{I\oplus J}$ (recall that $\oplus$ denotes tuple concatenation). Also let $f(x) := \tenAA(x) = \iprod{\tenAA}{x^{\otimes d}}$. 
%We consider the following natural model generating a homogeneous degree-$d$ polynomial $f$: for each monomial $x^{\alpha}$ with $\alpha \in \degmindex{d}$, the coefficient $f_{\alpha}$ is independently and uniformly sampled from $\{ \pm 1 \}$. 
It is well known that $\fmax{f} \leq O(\sqrt{n\cdot d\cdot \log d})$ with high probability~\cite{TS14}.
For such a polynomial $f$ and any $q$ divisible by $d$, in order to establish \thmref{d-tensor:informal}, by \Eqref{relaxations:relations} it is sufficient to prove that with high probability, 
\[
\pth{\hssos{f^{q/d}}}^{d/q} ~\leq~ \widetilde{O}\pth{\frac{n}{q^{1 - 2/d}}}^{d/4} = ~~\widetilde{O}\pth{\frac{n}{q}}^{d/4 - 1/2} \cdot \fmax{f}.
\] 

We give an overview of the proof. Let $d = 4$ for the sake of clarity of exposition. 
%Let $f$ be a random homogeneous polynomial sampled from $\HSS_d$, based on the random matrix $A \in \RR^{[n]^{q/2} \times [n]^{q/2}}$. 
%of degree $4$, where each coefficient is independently  and uniformly sampled from $\{ \pm 1 \}$. 
To prove an upper bound on $\hssos{f^{q/4}}$ using degree-$q$ SoS (assume $q$ is a multiple of $4$), we define a suitable matrix representation $M := M_{f^{q/4}} \in \RR^{[n]^{q/2} \times [n]^{q/2}}$ of $f^{q/4}$ and bound $\norm{2}{M}$. Since $\hssos{f} \leq (\norm{2}{M})^{q/4}$ for any representation $M$, a good upper bound on $\norm{2}{M}$ certifies that $\hssos{f}$ is small. 

One of the intuitive reasons taking a high power gives a better bound on the spectral norm is that this creates more entries of the matrix that correspond to the same monomial, and distributing the coefficient of this monomial equally among the corresponding entries reduces variance (i.e., $\Var{X}$ is less than $k \cdot \Var{X / k}$ for $k > 1$). 
In this regard, the most natural representation $M$ of $f^{q/4}$ is the {\em complete symmetrization}. %Let $A \in \RR^{[n]^2 \times [n]^2}$ be a representation of $f$ where $A[(i_1, i_2), (i_3, i_4)]$ ($i_1 \leq i_2 \leq i_3 \leq i_4$) is the coefficient of $x_{i_1} x_{i_2} x_{i_3} x_{i_4}$, and all other entries are zero.   
\begin{align*}
&\quad M_{c}[(i_1, \dots, i_{q/2}), (i_{q/2+1}, \dots, i_q)] \\
&= 
\frac{1}{q!} \cdot \sum_{\pi \in \Sym_{q}} 
A^{\otimes q/4} [(i_{\pi(1)}, \dots, i_{\pi(q/2)}), (i_{\pi(q/2+1)}, \dots, i_{\pi(q)})] \\
&= 
\frac{1}{q!} \cdot \sum_{\pi \in \Sym_{q}} \prod_{j = 1}^{q / 4} A[(i_{\pi(2j - 1)}, i_{\pi(2j)}), (i_{\pi(q/2 + 2j - 1)}, i_{\pi(q/2 + 2j)})].
\end{align*}

However, $\norm{2}{M_{c}}$ turns out to be much larger than $\hssos{f}$, even when $q = 8$. One intuitive explanation is that 
$M_{c}$, as a $n^4 \times n^4$ matrix, contains a copy of $\Vector{A} \Vector{A}^T$, 
where $\Vector{A} \in \RR^{[n]^4}$ is the vector with $\Vector{A}[i_1, i_2, i_3, i_4] = A[(i_1, i_2), (i_3, i_4)]$. 
Then $\Vector{A}$ is a vector that witnesses $\norm{2}{M_{c}} \geq \Omega(n^2)$, regardless of the randomness of $f$. 
Our final representation 
\footnote{the independent and concurrent work of \cite{RRS16} uses the 
same representation}
is the following {\em row-column independent symmetrization} that simultaneously respects the spectral structure of a random matrix $A$ and reduces the variance. Our $M$ is given by 
\begin{align*}
&\quad M[(i_1, \dots, i_{q/2}), (j_{1}, \dots, j_{q/2})] \\
&= 
\frac{1}{(q/2)!^2} \cdot \sum_{\pi, \sigma \in \Sym_{q/2}} 
A^{\otimes q/4} [(i_{\pi(1)}, \dots, i_{\pi(q/2)}), (j_{\sigma(1)}, \dots, j_{\sigma(q/2)})] \\
&= \frac{1}{(q/2)!^2} \cdot \sum_{\pi, \sigma \in \Sym_{q/2}} \prod_{k = 1}^{q / 4} A[(i_{\pi(2k - 1)}, i_{\pi(2k)}), (j_{\sigma(2k - 1)}, j_{\sigma(2k)})].
\end{align*}

To formally show $\norm{2}{M} = \tilde{O}(n / \sqrt{q})^{q/4}$ with high probability, we use the trace method to show
\[
\Ex{\Tr{M^{p}}} \leq 2^{O(pq \log p)} \frac{n^{pq / 4 + q/2}}{q^{pq / 8}},
\] 
where $\Ex{\Tr{M^{p}}}$ can be written as (let $I^{p + 1} := I^1$)
\begin{align*}
&\quad 
\Ex{
\sum_{I^1, \dots, I^p \in [n]^{q/2}} \prod_{j=1}^p M[I^j, I^{j+1}] 
} \\ 
&= 
\sum_{I^1, \dots, I^p} \Ex{ \prod_{j=1}^p 
(\sum_{\pi_j, \sigma_j \in \Sym_{q/2}} \prod_{k = 1}^{q/4} A[(I^k_{\pi_j(2k - 1)}, I^k_{\pi_j(2k)}), (I^{k + 1}_{\sigma_j(2k - 1)}, I^{k + 1}_{\sigma_j(2k)})] )  
}.
\end{align*}
Let $E(I^1, \dots, I^p)$ be the expectation value for $I^1, \dots, I^p$ in the right hand side. 
We study $E(I^1, \dots, I^p)$ for each $I^1, \dots, I^p$ by careful counting of the number of permutations on a given sequence with possibly repeated entries. 
For any $I^1,\dots ,I^p\in [n]^{q/2}$, ~let $\numdist{I^{1},\dots ,I^{p}}$ denote the 
number of distinct elements of $[n]$ that occur in $I^1,\dots ,I^p$, and for each $s = 1, \dots, \numdist{I^{1},\dots ,I^{p}}$, let $c^s \in (\{ 0 \} \cup [q/2])^p$ denote the number of times that the $j$th smallest element occurs in $I^1, \dots, I^p$. 
When $E(I^1, \dots, I^p) \neq 0$, it means that for some permutations $\{ \pi_j, \sigma_j \}_{j}$, every term $A[\cdot, \cdot]$ must appear even number of times. This implies that the number of distinct elements in $I^1, \dots, I^p$ is at most half the maximal possible number $pq/2$. This lemma proves the intuition via graph theoretic arguments. 

\begin{lemma}
If $E(I^1, \dots, I^p) \neq 0$,  
$
\numdist{I^{1},\dots ,I^{p}} \leq \frac{pq}{4} + \frac{q}{2}.
$
\lemlab{random_overview_graph}
\end{lemma}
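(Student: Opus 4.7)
The plan is to exploit the Rademacher nature of $A$ to extract a rigid equality constraint on the $pq/2$ position slots of $I^1, \dots, I^p$, and then bound $\numdist{I^1, \dots, I^p}$ via an auxiliary position-equality graph. First I would observe: since the entries of $A$ are i.i.d.\ $\pm 1$, for any fixed permutations the inner expectation $\Ex{\prod_{j,k} A[\cdots]}$ equals $1$ if the multiset of ordered $4$-tuples
\[
\tau_{j,k} \;:=\; (I^j_{\pi_j(2k-1)},\, I^j_{\pi_j(2k)},\, I^{j+1}_{\sigma_j(2k-1)},\, I^{j+1}_{\sigma_j(2k)})
\]
has all even multiplicities, and $0$ otherwise. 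Hence $E(I^1, \dots, I^p) \neq 0$ implies one may fix permutations $\{\pi_j, \sigma_j\}_{j=1}^p$ achieving this; since every multiplicity is an even integer $\geq 2$, I can pick an arbitrary perfect matching $M$ on the $pq/4$ factors that pairs factors sharing a common $4$-tuple into $pq/8$ pairs.

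Next I would construct the \emph{position-equality multigraph} $Q$ with vertex set $[p]\times [q/2]$ (the $pq/2$ position slots). Each matched pair $\{(j_1, k_1), (j_2, k_2)\} \in M$ contributes four slot-wise edges: $(j_1, \pi_{j_1}(2k_1-1)) \sim (j_2, \pi_{j_2}(2k_1-1))$ and three analogues for the remaining row-slot and two column-slots. Thus $Q$ has $pq/2$ vertices and $4\cdot pq/8 = pq/2$ edges (counted with multiplicity).

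I would then verify two structural properties of $Q$: (a) every vertex has degree exactly $2$, and (b) $Q$ has no self-loops. Property (a) holds because each position $(j, m)$ belongs to exactly two factors — once as a row-slot of a factor at level $j$ (via $\pi_j$), and once as a column-slot of a factor at level $j{-}1$ (via $\sigma_{j-1}$) — and each such factor contributes exactly one edge at $(j, m)$ through its partner in $M$. Property (b) holds because these two factors lie at distinct levels $j$ and $j{-}1$, so any slot-matching between two paired factors equates a level-$j$ position with a level-$(j{-}1)$ position, which can never coincide with $(j, m)$ itself.

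Consequently, $Q$ is a disjoint union of cycles of length $\geq 2$, so it has at most $pq/4$ connected components. Since all positions in a given component of $Q$ are forced (by transitivity of the induced equalities $I^{j_1}_{m_1} = I^{j_2}_{m_2}$) to take a common value in $[n]$, we conclude
\[
\numdist{I^1, \dots, I^p} \;\leq\; c(Q) \;\leq\; \frac{pq}{4} \;\leq\; \frac{pq}{4} + \frac{q}{2},
\]
establishing the lemma with slack to spare. The main obstacle I anticipate is a careful treatment of classes of multiplicity strictly greater than $2$ (where $M$ is not canonical), and the subcase in which a position's two incident factors happen to be matched with each other — so both of its equality edges originate from a single pair in $M$. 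In that subcase property (a) still holds (the pair supplies exactly two distinct slot-equalities involving $(j, m)$), and property (b) follows from the same level-disparity argument; a short case analysis handles these subtleties.
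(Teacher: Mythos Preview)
Your approach is correct and genuinely different from the paper's; in fact it yields the sharper bound $\numdist{I^1,\dots,I^p}\le pq/4$, so the added $q/2$ is unnecessary. The paper instead builds a graph on the \emph{values} in $[n]$: for a fixed nonzero term it takes the $pq/2$ unordered pairs $\{J^{\ell}_m,K^{\ell+1}_m\}$ ($m\in[q/2]$, $\ell\in[p]$) as a multiset of edges, observes that even multiplicity of the $4$-tuples forces every edge to have even multiplicity (hence at most $pq/4$ distinct edges), and then argues that this edge set is a union of $q/2$ walks and therefore has at most $q/2$ connected components, giving $|V|\le pq/4+q/2$. Your position-level graph replaces the connectivity/walk analysis by a direct $2$-regularity check, which is cleaner in this setting and buys you the extra $q/2$.

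One point needs repair: your justification of property~(b) does not work as written. That the two factors containing $(j,m)$ sit at levels $j$ and $j-1$ tells you nothing about the level of their \emph{partners} in $M$, and it is the partner that determines the other endpoint of each edge. The correct argument is simply that a row-slot edge matches the $s$-th row-slot of one factor to the $s$-th row-slot of a \emph{distinct} factor; if both factors are at the same level $j'$ then bijectivity of $\pi_{j'}$ forces the two positions to differ, and if they are at different levels the positions differ trivially. The column-slot case is identical with $\sigma$ in place of $\pi$. This also disposes of the subcase you flag (where the two factors containing $(j,m)$ are matched to each other): the row-slot edge then lands at level $j-1$ and the column-slot edge at level $j+1$, neither equal to $j$. (Minor typo: in your edge description the second endpoint should read $(j_2,\pi_{j_2}(2k_2-1))$, not $2k_1-1$.)
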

The number of $I^1, \dots, I^p$ that corresponds to a sequence $c^1, \dots, c^s$ is at most 
$
\frac{n^s}{s!} \cdot \frac{((q/2)!)^p}{\prod_{\ell \in [p]} c^1_{\ell}! \cdot c^p_{\ell}!}
$.
Furthermore, there are at most 
$2^{O(pq)} p^{pq/2}$
different choices of $c^1, \dots, c^s$ that corresponds to some $I^1, \dots, I^p$. The following technical lemma bounds $E(I^1, \dots, I^p)$ by careful counting arguments. 
\begin{lemma}
For any $I^1, \dots, I^p$, 
$E(I^1, \dots, I^p) \leq 2^{O(pq)} \frac{p^{5pq / 8}}{q^{3pq / 8}} \prod_{\ell \in [p]} c^1_{\ell}! \dots c^s_{\ell}!$.
\lemlab{random_overview_permutation}
\end{lemma}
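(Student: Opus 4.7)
The approach is to expand $E(I^1,\ldots,I^p)$ by linearity, use the Rademacher structure of $A$ to collapse the resulting sum to a count of ``valid'' permutation tuples, and then bound the count by a two-stage combinatorial encoding. Expanding the definition of $M$ gives
\[
    E(I^1,\ldots,I^p) = \frac{1}{((q/2)!)^{2p}} \sum_{(\pi_j,\sigma_j)_{j=1}^p} \Ex{\prod_{j=1}^p \prod_{k=1}^{q/4} A\bigl[(I^j_{\pi_j(2k-1)}, I^j_{\pi_j(2k)}),\, (I^{j+1}_{\sigma_j(2k-1)}, I^{j+1}_{\sigma_j(2k)})\bigr]}.
\]
Since the entries of $A$ are i.i.d.\ Rademacher, each inner expectation equals $1$ when every distinct $A$-entry appearing in the corresponding product of $pq/4$ terms does so an even number of times, and $0$ otherwise. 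Thus $E(I^1,\ldots,I^p) = V / ((q/2)!)^{2p}$, where $V$ is the number of such ``valid'' tuples.

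To bound $V$, I would first quotient by the symmetries arising from repeated entries in the $I^j$'s. For a fixed ordered sequence of $q/4$ ordered pair-values on the $I^j$-side (i.e.\ one whose underlying multiset equals the multiset of $I^j$), the number of $\pi_j\in\Sym_{q/2}$ realizing it is exactly $\prod_{t} c^t_j!$, because the only freedom is to permute positions of equal values among equal slots. Applying this to every $\pi_j$ and $\sigma_j$ and using $I^{p+1}=I^1$ yields $V = \bigl(\prod_{\ell,t} c^t_\ell!\bigr)^2 \cdot W$, where $W$ counts tuples of $p$ edge-sequences --- each of length $q/4$, with the $j$-th sequence drawing its values from $I^j$ and $I^{j+1}$ --- such that every distinct $4$-tuple occurring across all $p$ sequences occurs with even total multiplicity. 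The target inequality on $E$ is therefore equivalent to $W \le 2^{O(pq)}\, p^{5pq/8}q^{-3pq/8}\cdot ((q/2)!)^{2p}/\prod_{\ell,t} c^t_\ell!$.

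Finally I would bound $W$ by a three-stage encoding: (a) pick a partition of the $pq/4$ edge-slots into colour classes of even size --- a standard pair-partition count gives at most $2^{O(pq)}(pq)^{pq/8}$ possibilities; (b) for each class, pick the common $4$-tuple value subject to the multiset constraints imposed by the relevant $I^j$'s; and (c) account for how the classes distribute among the $p$ levels. The even-multiplicity constraint forces each class of size $m\ge 2$ to pin down $4m$ index-slots using only $4$ value choices, which is what produces the $q^{3pq/8}$ saving over the trivial bound $V\le((q/2)!)^{2p}$, while the $p^{5pq/8}$ factor records the freedom in choosing which levels each class spans. Substituting the resulting bound on $W$ back into $E = V/((q/2)!)^{2p}$ then yields the claim. \textbf{Main obstacle:} The hardest step is the joint analysis of (b) and (c); a single colour class may span several different levels $j$, so its common $4$-tuple must be simultaneously consistent with the multisets of several pairs $(I^j,I^{j+1})$, and the counts in (b) and (c) interact nontrivially. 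Extracting the tight $p^{5pq/8}/q^{3pq/8}$ exponents (rather than something weaker) will require a carefully chosen encoding that decouples the pairing structure of colour classes from the value assignment, and I expect most of the technical work of the lemma to live there.
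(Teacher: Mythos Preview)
Your reduction is correct and matches the paper up through the definition of $W$: the expansion, the Rademacher argument giving $E = V/((q/2)!)^{2p}$, and the factoring $V = \bigl(\prod_{\ell,t} c^t_\ell!\bigr)^2 W$ via the stabilizer of each $I^j$ are exactly what the paper does (it phrases the last step as replacing the sum over $(\pi_j,\sigma_j)$ by a sum over $J^\ell,K^\ell\in\orbit{I^\ell}$). The target inequality you write for $W$ is also the right one.

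Where you diverge is in the encoding used to bound $W$, and this is precisely where your ``main obstacle'' lives. The paper does \emph{not} partition the $pq/4$ slots into colour classes and then try to assign values class-by-class. Instead it views the entire configuration as a single sequence $T=(T_{\ell,k})$ of $pq/4$ four-tuples in which every distinct four-tuple occurs an even number of times, and observes that any such $T$ can be obtained by \emph{interleaving} two length-$pq/8$ sequences $Q,R$ of four-tuples having the same multiset of entries. The count then factors cleanly into three independent pieces: (i) the number of $Q$ with the correct total multi-index is at most the number of rearrangements of the atoms in $I^1\oplus\cdots\oplus I^p$, namely $(pq/2)!/\prod_m \bar c^m!$ with $\bar c^m=\sum_\ell c^m_\ell$; (ii) given $Q$, the sequence $R$ is a permutation of the $pq/8$ four-tuple entries of $Q$, so at most $(pq/8)!$ choices; (iii) there are at most $2^{pq/4}$ interleavings. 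Multiplying and using $\prod_m \bar c^m!\ge\prod_{\ell,m}c^m_\ell!$ gives the bound directly.

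The point is that step (i) handles \emph{all} the level constraints at once via a single global multiset condition on atoms, so there is no interaction between ``which levels a class spans'' and ``what value the class carries'' to untangle. Your colour-class scheme can in principle be pushed through, but the paper's $Q,R$ encoding is the trick that dissolves the obstacle you flagged rather than confronting it.
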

Summing over all $s$ and multiplying all possibilities, 
\begin{align*}
\Ex{\Tr{M^{p}}}
&\leq \sum_{s = 1}^{pq/4 + q/2} \bigg(
2^{O(pq)} p^{pq/2} \bigg) \cdot \bigg(
\frac{n^s}{s!} \cdot ((q/2)!)^p \bigg) \cdot 
\bigg( 2^{O(pq)} \frac{p^{5pq / 8}}{q^{3pq / 8}} \bigg) \\
&= ~\max_{1\leq s\leq pq/4 + q/2} 
~2^{O(pq \log p)} \cdot n^s \cdot \frac{q^{pq/8}}{s!}.
\end{align*}
When $q \leq n$, the maximum occurs when $s = pq / 4 + q / 2$, so $\Ex{\Tr{M^p}} \leq 2^{O(pq \log p)} \cdot \frac{n^{pq/4 + q/2}}{q^{pq/8}}$ as desired.

\subsection{Overview of Lower Bound Proofs}
Let $\tenAA,A,f$ be as in \secref{ub:overview}. 
To prove the lower bound in~\thmref{q-tensor:informal}, we construct a moment matrix $\sfM$ that is positive semidefinite, SoS-symmetric, $\Tr{\sfM} = 1$, and $\langle A, \sfM \rangle \geq 2^{-O(d)} \cdot \frac{n^{d/4}}{d^{d/4}}$. 
At a high level, our construction is $\sfM := c_1 \sfA + c_2 \sfW$ for some $c_1, c_2$, where $\sfA$ contains entries of $A$ only corresponding to the multilinear indices, averaged over all SoS-symmetric positions. This gives a large inner product with $A$, SoS-symmetry, and nice spectral properties even though it is not positive semidefinite. 
The most natural way to make it positive semidefinite is adding a copy of the identity matrix, but this will again break the SoS-symmetry.

Our main technical contribution here is the construction of $\sfW$ that acts like a {\em SoS-symmetrized identity}. It has the minimum eigenvalue at least $\frac{1}{2}$, while the trace being $n^{d/2} \cdot 2^{O(d)}$, so the ratio of the average eigenvalue to the minimum eigenvalue is bounded above by $2^{O(d)}$, which allows us to prove a tight lower bound. 
To the best of our knowledge, no such bound was known for SoS-symmetric matrices except small values of $d = 3, 4$. 
%The natural choice of symmetrizing the $[n]^{d/2} \times [n]^{d/2}$ identity gives a worse ratio $2^{\Omega(d \log d)}$, though even this is not analyzed except small values of $d = 3, 4$. 

Given $I, J \in [n]^{d/2}$, we let $\sfW[I, J] := \E[x^{\mi{I} + \mi{J}}]$, where $x_1, \dots, x_n$ are independently sampled from the {\em Wigner semicircle distribution}, whose probability density function is the semicircle $f(x) = \frac{2}{\pi}\sqrt{1 - x^2}$. 
Since $\E[x_1^{\ell}] = 0$ if $\ell$ is odd and $\E[x_1^{2\ell}] = \frac{1}{\ell + 1}\binom{2\ell}{\ell}$, which is the $\ell$th Catalan number, each entry of $\sfW$ is bounded by $2^{O(d)}$ and $\Tr{\sfW} \leq n^{d/2} \cdot 2^{O(d)}$. 
To prove a lower bound on the minimum eigenvalue, we show that for any degree-$\ell$ polynomial $p$ with $m$ variables, $\E[p(x_1, \dots, x_m)^2]$ is large by induction on $\ell$ and $m$. 
We use another property of the Wigner semicircle distribution that if $H \in \RR^{(d + 1) \times (d + 1)}$ is the univariate moment matrix of $x_1$ defined by $H[i, j] = \E[x_1^{i + j}]$ ($0 \leq i, j \leq d$) and $H = (R^T)R$ is the Cholesky decomposition of $H$, $R$ is an upper triangular matrix with $1$'s on the main diagonal. This nice Cholesky decomposition allows us to perform the induction on the number of variables while the guarantee on the minimum eigenvalue is independent of $n$. 

%%% Local Variables: 
%%% mode: latex
%%% TeX-master: "main"
%%% End: 

\section{Upper bounds for even degree tensors}
\label{sec:ub-even}
For even $d$, let $\tenAA \in \Re^{[n]^{d}}$ be a $d$-tensor with i.i.d. $\pm 1$ entries and let $A \in \Re^{[n]^{d/2} \times [n]^{d/2}}$ be the matrix flattening of $\tenAA$, \ie 
$A[I,J] = \tenA{I\oplus J}$ (recall that $\oplus$ denotes tuple concatenation). Also let $f(x) := \tenAA(x) = \iprod{\tenAA}{x^{\otimes d}}$. 
With high probability $\fmax{f} = O(\sqrt{n\cdot d\cdot \log d})$. In this section, we prove that for every $q$ divisible by $d$, with high probability,  
\[
\pth{\hssos{f^{q/d}}}^{d/q} ~\leq~ \widetilde{O}\pth{\frac{n}{q^{1 - 2/d}}}^{d/4} =~~ \widetilde{O}\pth{\frac{n}{q}}^{d/4 - 1/2} \cdot \fmax{f}.
\] 
To prove it, we use the following matrix representation $M$ of $f^{q/d}$, and show that $\norm{2}{M} \leq \tilde{O}_d\pth{\pth{\frac{n \log^{5} n}{q^{1 - 2/d}}}^{q/4}}$.
Given a tuple $I = (i_1, \dots, i_q)$, and an integer $d$ that divides $q$ and $1 \leq \ell \leq q/d$, let $I_{\ell ; d}$ be the $d$-tuple $(I_{d(\ell - 1) + 1}, \dots, I_{d \ell})$ (i.e., if we divide $I$ into $q/d$ tuples of length $d$, $I_{\ell ; d}$ be the $\ell$-th tuple). 
%Let $A \in \RR^{[n]^{d / 2} \times [n]^{d / 2}}$ be the basic representation of $f$ where for any $i \in [n]^q$ satisfying $i_1 \leq \dots \leq i_q$, $A[i_{1;d/2}, i_{2;d/2}]$ is the coefficient of $x^{\alpha(i)}$, and all other entries are $0$. 
Furthermore, given a tuple $I = (i_1, \dots, i_q) \in [n]^q$ and a permutation $\pi \in [n]^q$, let $\pi (I)$ be another $q$-tuple whose $\ell$th coordinate is $\pi(i_{\ell})$. 
For $I, J \in [n]^{q/2}$, $M[I, J]$ is formally given by  
\begin{align*}
M[I, J] &= 
\frac{1}{q!} \cdot \sum_{\pi, \sigma \in \Sym_{q/2}} 
A^{\otimes q/d} [\pi ( I), \sigma ( J)] \\
&= \frac{1}{q!} \cdot \sum_{\pi, \sigma \in \Sym_{q/2}} \prod_{\ell = 1}^{q / d} A[(\pi ( I))_{\ell ; d/2} , (\sigma ( J))_{\ell ; d/2}].
\end{align*}
We perform the trace method to bound $\norm{2}{M}$. Let $p$ be an even integer, that will be eventually taken as $\Theta(\log n)$. $\Tr{M}$ can be written as (let $I^{p + 1} := I^1$)
\begin{align*}
&\quad 
\Ex{
\sum_{I^1, \dots, I^p \in [n]^{q/2}} \prod_{\ell =1}^p M[I^{\ell}, I^{\ell+1}] 
} \\
&= 
\sum_{I^1, \dots, I^p} \Ex{ \prod_{\ell =1}^p 
(\sum_{\pi_j, \sigma_j \in \Sym_{q/2}} \prod_{m = 1}^{q/d} A[(\pi ( I^{\ell}))_{m; d/2}, (\sigma ( I^{\ell + 1}))_{m; d/2})] )  
}.
\end{align*}

Let $E(I^1, \dots, I^p) := \Ex{\prod_{\ell=1}^p M[I^{\ell}, I^{\ell+1}] }$, which is the expected value in the right hand side. 
To analyze $E(I^1, \dots, I^p)$, we first introduce notions to classify $I^1, \dots, I^p$ depending on their intersection patterns. 
For any $I^1,\dots ,I^p\in [n]^{q/2}$, let $e_k$ denote the $k$-th smallest 
element in $\bigcup\limits_{\ell,\,j} \{i^{\ell}_j\}$. For any $c^1,\dots ,c^s\in [q/2]^p$, let 
\begin{align*}
    &\mathcal{C}(c^1\dots c^s) := \\
    &\quad 
    \brc{(I^1,\dots ,I^p)\sep{\numdist{I^1,\dots ,I^p}=s,
    ~\forall k\in [s], 
    \ell\in [p], ~\text{$e_k$ appears $c^k_{\ell}$ times in 
    $I^{\ell}$}}}.
\end{align*}
The following two observations on $c^1, \dots, c^s$ can be easily proved. 

\begin{observation}
\obslab{num:guesses}
If $\mathcal{C}(c^1,\dots ,c^s)\neq \phi$, 
\[
        \cardin{\mathcal{C}(c^1,\dots ,c^s)} 
        \leq 
        \frac{n^{s}}{s!}\times 
        \frac{((q/2)!)^{p}}{\prod\limits_{\ell\in[p]} 
        c_\ell^{1}!\dots c_\ell^{s}!}.
\] 
Moreover, 
\[
        \cardin{\brc{(c^1,\dots ,c^s)\in ([q/2]^p)^{s}
        \sep{\mathcal{C}(c^1,\dots ,c^s)\neq \phi}}} 
        \leq 
        2^{O(pq)} p^{\,pq/2}.
\]
\end{observation}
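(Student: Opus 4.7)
\textbf{Proof plan for \obsref{num:guesses}.}
For the first inequality, I would count $\cardin{\mathcal{C}(c^1,\dots ,c^s)}$ in two independent stages. Every tuple $(I^1,\dots ,I^p) \in \mathcal{C}(c^1,\dots ,c^s)$ uses exactly $s$ distinct values $e_1 < \cdots < e_s$ from $[n]$, so the choice of these values contributes $\binom{n}{s} \leq n^s / s!$. Once the values $e_1,\dots,e_s$ are fixed, each $I^{\ell}$ is an arbitrary arrangement of the length-$(q/2)$ multiset containing $c^k_{\ell}$ copies of $e_k$ for $k \in [s]$ (these multiplicities automatically sum to $q/2 = |I^{\ell}|$), so the number of arrangements equals the multinomial coefficient $(q/2)! / (c^1_{\ell}! \cdots c^s_{\ell}!)$. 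Multiplying across $\ell \in [p]$ (the arrangements are independent) yields the stated bound.

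For the second inequality, the essential observation is that any $(c^1,\dots ,c^s)$ with $\mathcal{C}(c^1,\dots ,c^s) \neq \phi$ satisfies the column-sum constraint $\sum_{k=1}^s c^k_{\ell} = q/2$ for each $\ell \in [p]$, and that $s \leq pq/2$ since $pq/2$ is the total number of coordinates across the $p$ tuples. For fixed $s$, the number of valid sequences is thus the number of $s \times p$ nonnegative integer matrices with every column summing to $q/2$, which equals $\binom{q/2 + s - 1}{s - 1}^p$. I would bound each factor using $\binom{q/2 + s - 1}{q/2} \leq \pth{e(1 + 2s/q)}^{q/2} \leq \pth{e(1 + p)}^{q/2} \leq (3p)^{q/2}$ (for $p$ sufficiently large; the finitely many small cases are absorbed into the implicit constant), giving $(3p)^{pq/2} = 2^{O(pq)} p^{pq/2}$ for fixed $s$. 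Summing over the at most $pq/2$ choices of $s$ introduces a polynomial factor that disappears into the $2^{O(pq)}$ prefactor.

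This is a routine combinatorial enumeration, so there is no substantive obstacle. The only care needed is in the interpretation of the index set (the entries $c^k_{\ell}$ can take the value $0$ whenever $e_k$ does not occur in $I^{\ell}$, so the matrices are nonnegative rather than strictly positive) and in applying a standard tail bound $\binom{N}{k} \leq (eN/k)^k$ to the compositions.
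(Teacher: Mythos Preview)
Your proposal is correct and is exactly the natural counting argument the paper has in mind; indeed, the paper does not spell out a proof of this observation, merely noting that it ``can be easily proved.'' One minor remark: the statement is for a fixed $s$, so the final summation over $s$ in your second part is unnecessary (though harmless), and your bound $(e(1+p))^{q/2}\le (2e\,p)^{q/2}$ in fact holds uniformly for all $p\ge 1$, so the caveat about small $p$ can be dropped.
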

The following lemma bounds $E(I^1, \dots, I^p)$ in terms of the corresponding $c_1, \dots, c_s$. 

\begin{lemma}
\lemlab{fixed:mult}
    Consider any $c^1,\dots ,c^s\in [q/2]^p$ and 
    $(I^{1},\dots ,I^{p})\in \mathcal{C}(c^1,\dots ,c^s)$. We have 
    \begin{align*}
        E(I^1, \dots, I^p)
        &\leq 
        2^{O(pq)}\,\frac{p^{1/2 + 1/2d}}{q^{1/2 - 1/2d}}\,
        \prod_{\ell\in [p]} c_\ell^{1}!\dots c_\ell^{s}!
    \end{align*}
\end{lemma}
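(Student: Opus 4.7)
The plan is to expand $E(I^1,\dots,I^p)$ using the definition of $M$, reduce to a combinatorial count of permutation tuples giving a non-vanishing expected product of $A$-entries, and then bound that count by an orbit-and-matching argument. Pushing the expectation inside the $((q/2)!)^{2p}$-fold sum over $(\pi_\ell,\sigma_\ell)_{\ell}$ in $\prod_\ell M[I^\ell,I^{\ell+1}]$ and using the fact that the entries of $A$ are i.i.d.\ Rademacher, each inner expectation is $1$ iff every distinct $d$-tuple indexing an $A$-entry appears an even number of times across the $pq/d$ factors of the product, and $0$ otherwise. Hence $E(I^1,\dots,I^p)=N_{\mathrm{even}}/((q/2)!)^{2p}$, where $N_{\mathrm{even}}$ counts permutation tuples producing such an \emph{even} pattern.

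Next I would strip the multiset symmetry: two permutations $\pi_\ell,\pi'_\ell$ with $\pi_\ell(I^\ell)=\pi'_\ell(I^\ell)$ give identical $A$-index patterns, and the stabilizer of any ordering of $I^\ell$ in $\Sym_{q/2}$ has size $\prod_k c^k_\ell!$. Factoring out these stabilizers on both the $\pi$ and $\sigma$ sides writes $N_{\mathrm{even}}=\prod_\ell(\prod_k c^k_\ell!)^2\cdot N$, where $N$ counts distinct ordering tuples $(\pi_\ell(I^\ell),\sigma_\ell(I^{\ell+1}))_\ell$ producing an even pattern. Combined with the Stirling estimate $((q/2)!)^{2p}\geq (q/(2e))^{pq}$, this explains one power of $\prod_\ell\prod_k c^k_\ell!$ appearing in the statement and reduces the lemma to an upper bound on $N$ of the form $2^{O(pq)}\cdot(\text{power of }p,q)\cdot q^{pq}/\prod_\ell\prod_k c^k_\ell!$.

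To bound $N$, I would encode each valid configuration by (a) a \emph{pairing scheme} partitioning the $pq/d$ slots $(\ell,m)$ into equivalence classes of even size, each labeled by a distinct $d$-tuple of elements from $\{e_1,\dots,e_s\}$, and (b) a compatible assignment of orderings. The number of pairing schemes is at most $2^{O(pq)}p^{O(pq)}$ by crude enumeration. For each scheme, the number of compatible assignments factors across the $p$ time steps: per step one has a multinomial $\binom{q/2}{c^1_\ell,\dots,c^s_\ell}$ (canceling against the remaining orbit factor) times a \emph{block-matching fraction} measuring how often an ordering of the multiset $I^\ell$ places its contents into the $q/d$ prescribed blocks of size $d/2$. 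A sharp estimate of this fraction --- roughly $((d/2)!)^{q/d}/(q/2)!\leq 2^{O(q)}q^{-(1-2/d)q/2}$ per step --- taken to the $p$-th power produces the $q$-denominator of the claim, while the sum over schemes produces the $p$-numerator. (The absence of an explicit $pq$-exponent in the statement appears to be a typographical omission: matching the overview for $d=4$ gives a factor $(p^{(d+1)/(2d)}/q^{(d-1)/(2d)})^{pq}$.)

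The main obstacle is this per-step uniform estimate of the block-matching fraction: the even-matching constraint forces each $d/2$-block of $\pi_\ell(I^\ell)$ to duplicate a block appearing in some other step, but extracting the $q^{-(1-2/d)q/2}$ saving \emph{uniformly} over all pairing schemes and all multiset profiles $(c^k_\ell)_k$, with only $2^{O(q)}$ slack, requires careful bookkeeping of the internal $(d/2)!$-symmetry within each block and the $(q/d)!$-symmetry among blocks in a step. Naive counting loses $q^{\Theta(q)}$ per step, so the rigidity imposed by the parity constraint has to be fully exploited --- essentially the generalization of the classical closed-walk count underlying the Wigner trace method ($d=q=2$) to $q/d$-tuples of $d/2$-blocks per step, carried out in a way robust to the multiset profile of the underlying indices.
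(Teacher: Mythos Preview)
Your initial reduction is correct and mirrors the paper exactly: pushing the expectation inside, using Rademacher parity to reduce to a count $N_{\mathrm{even}}$ of permutation tuples with an even $d$-tuple pattern, and factoring out the stabilizers $\prod_\ell(\prod_k c^k_\ell!)^2$ to pass to a count over distinct orderings $(J^\ell,K^\ell)\in\orbit{I^\ell}^2$.

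The gap is precisely where you flag it. Your proposed per-step factorization --- enumerate pairing schemes, then extract a uniform block-matching fraction $2^{O(q)}q^{-(1-2/d)q/2}$ per step --- is not how the paper proceeds, and you do not actually carry it out. The difficulty you describe (the parity constraint is \emph{global} across steps, so a per-step saving is not obviously available with only $2^{O(q)}$ slack) is real, and your sketch does not overcome it.

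The paper sidesteps this entirely with a \emph{global} interleaving count, never attempting a per-step factorization. Concatenate the $pq/d$ index $d$-tuples $T_{\ell,m}=(J^\ell_{m;d/2})\oplus(K^{\ell+1}_{m;d/2})$ into one $(pq/d)$-long sequence $T$ of $d$-tuples. The atom multiset of $T$ is forced to be two copies of $I^1\oplus\cdots\oplus I^p$. The even-multiplicity condition on $T$ means $T$ can be obtained by interleaving two length-$(pq/2d)$ sequences $Q,R$ of $d$-tuples with identical $d$-tuple multisets and each with atom multiset $I^1\oplus\cdots\oplus I^p$. Now count: the number of such $Q$ is at most the number of atom-orderings $(pq/2)!/\prod_m(\bar c^{\,m})!$ with $\bar c^{\,m}=\sum_\ell c^m_\ell$; given $Q$, the number of $R$ is at most $(pq/2d)!$ (permute $Q$'s blocks); and the interleaving contributes $2^{pq/d}$. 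Thus
\[
N \;\le\; 2^{pq/d}\cdot\frac{(pq/2)!}{\prod_m(\bar c^{\,m})!}\cdot(pq/2d)! \;=\; 2^{O(pq)}\cdot\frac{(pq)^{(1/2+1/2d)pq}}{\prod_m(\bar c^{\,m})!}\,.
\]
Dividing by $((q/2)!)^{2p}=2^{-O(pq)}q^{pq}$, and using $\prod_\ell\prod_k c^k_\ell!\le\prod_m(\bar c^{\,m})!$ to cancel one of the two stabilizer factors against the $1/\prod_m(\bar c^{\,m})!$, gives the claimed bound directly. No per-step block-matching estimate is needed; the saving $q^{-(1/2-1/2d)pq}$ comes out in one shot from $(pq/2)!\,(pq/2d)!/((q/2)!)^{2p}$ via Stirling.
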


\begin{proof}
    Consider any $c^1,\dots ,c^s\in [q/2]^p$ and 
    $(I^{1},\dots ,I^{p})\in \mathcal{C}(c^1,\dots ,c^s)$. We have 
    \begin{align}
        &E(I^1, \dots, I^p)  \nonumber \\
        =&~ \Ex{\prod_{\ell=1}^p M[I^{\ell}, I^{\ell + 1}] } \nonumber \\
		=&~ \sum_{\pi_j, \sigma_j \in \Sym_{q/2}}   \Ex{\prod_{\ell=1}^p \prod_{m = 1}^{q/d} A[(\pi ( I^{\ell}))_{m; d/2}, (\pi ( I^{\ell + 1}))_{m; d/2} ]}  \nonumber \\
        =&~  
\bigg(
        \frac{\prod_{\ell} \prod_s (c_{\ell}^{s}!)^{2}}
        {((q/2)!)^{2p}} \bigg) \cdot
        \sum\limits_{(J^{\ell},K^{\ell} \in \orbit{I^{\ell}})_{\ell \in [p]}} 
        \Ex{\prod_{\ell=1}^p \prod_{m = 1}^{q/d} A[J^{\ell}_{m; d/2}, K^{\ell+1}_{m; d/2} ] } 
    \Eqlab{fixed:mult}
\end{align}

Thus, $E(I^1, \dots, I^p)$ is bounded by the number of choices for $J^1, \dots, J^{p}, K^1, \dots, K^{p}$ such that $J^{\ell}, K^{\ell} \in \orbit{I^{\ell}}$ for each $\ell \in [p]$, and 
$\Ex{\prod_{\ell=1}^p \prod_{m = 1}^{q/d} A[J^{\ell}_{m; d/2}, K^{\ell+1}_{m; d/2} ]}$ is nonzero. 

Given $J^1, \dots, J^p$ and $K^1, \dots, K^p$, consider the $(pq/d)$-tuple $T$ where each coordinate is indexed by $(\ell, m)_{\ell \in [p], m \in [q/d]}$ and has a $d$-tuple $T_{\ell, m} := (J^{\ell}_{m;d/2}) \oplus (K^{\ell + 1}_{m;d/2}) \in \RR^{d}$ as a value. %Recall that $A[J^{\ell}_{m;d / 2}, K^{\ell + 1}_{m;d / 2}]$ is equal to the coefficient of the monomial $x^{\alpha(T_{\ell, m})}$ if $T_{\ell, m}$ is in non-decreasing order, and $0$ otherwise. 
Note that $\sum_{\ell, m} \mi{T_{\ell, m}}) = (2o_1, \dots, 2o_n)$ where $o_r$ is the number of occurences of $r \in [n]$ in $(pq/2)$-tuple $\oplus_{\ell = 1}^p I^{\ell}$. 
The fact that $\Ex{\prod_{\ell=1}^p \prod_{m = 1}^{q/d} A[j_{m; d/2}, k_{m; d/2} ]} \neq 0$ means that every $d$-tuple occurs even number of times in $T$. 

We count the number of $(pq/d)$-tuples $T = (T_{\ell, m})_{\ell \in [p], m \in [q]}$ that $\sum_{\ell, m} \mi{T_{\ell, m}} = (2o_1, \dots, 2o_n)$ and every $d$-tuple occurs an even number of times. 
Let $Q = (Q_1, \dots, Q_{pq/2d})$, $R = (R_1, \dots, R_{pq/2d})$ be two $(pq/2d)$-tuples of $d$-tuples where for every $d$-tuple $P$, the number of occurences of $P$ is the same in $Q$ and $R$, and $\sum_{\ell = 1}^{pq/2d} \mi{Q_\ell} = \sum_{\ell = 1}^{pq/2d} \mi{R_\ell} = (o_1, \dots, o_n)$. At most $2^{pq/d}$ tuples $T$ can be made by {\em interleaving} $Q$ and $R$ --- for each $(\ell, m)$, choose $T_{\ell, m}$ from the first unused $d$-tuple in either $Q$ or $R$. Furthermore, every tuple $T$ that meets our condition can be constructed in this way. 

Due to the condition 
$\sum_{\ell = 1}^{pq/2d} \mi{Q_\ell} = (o_1, \dots, o_n)$, the number of choices for $Q$ is at most the number of different ways to permute $I^1 \oplus \dots \oplus I^p$, which is at most $(pq/2)! / \prod_{m \in [s]} (\bar{c}^m)!$, where $\bar{c}^m := \sum_{\ell\in [p]} c^{m}_{\ell}$ for $m \in [s]$. For a fixed choice of $Q$, there are at most $(pq/2d)!$ choices of $R$. Therefore, the number of choices for 
$(J^{\ell},K^{\ell} \in \orbit{I^{\ell}})_{\ell \in [p]}$ with nonzero expected value is at most 
\[
2^{pq/d} \cdot \frac{(pq/2)!}{\prod_{m \in [s]} (\bar{c}^m)!} \cdot (pq/2d)! = 2^{O(pq)} \cdot \frac{(pq)^{1/2 + 1/2d}}{\prod_{m \in [s]} (\bar{c}^m)!}.
\]
Combining with~\Eqref{fixed:mult},  
\[
E(I^1, \dots, I^p) \leq \bigg( 
2^{O(pq)} \frac{(pq)^{1/2 + 1/2d}}{\prod_{m \in [s]} (\bar{c}^m)!} \bigg) \cdot \bigg(
        \frac{\prod_{\ell} \prod_s (c_{\ell}^{s}!)^{2}}
        {((q/2)!)^{2p}} \bigg) 
\leq 2^{O(pq)} \cdot \frac{p^{1/2 + 1/2d}}{q^{1/2 - 1/2d}} 
\cdot \prod_{\ell} \prod_s c_{\ell}^{s}!
\]
as desired. 
\end{proof}

\begin{lemma}
\lemlab{num:dist}
    For all $I^{1},\dots ,I^{p}\in [n]^{q/2}$, if $E(I^1, \dots, I^p) \neq 0$, $\numdist{I^1, \dots, I^p} \leq \frac{pq}{4} + \frac{q}{2}$. 
\end{lemma}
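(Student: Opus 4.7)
The plan is to prove this via a graph-theoretic argument that exploits the parity constraint forced by $E(I^1,\dots,I^p) \ne 0$. For a choice of permutations $\pi_\ell, \sigma_\ell \in \Sym_{q/2}$ write
\[
T_{\ell,m} \;:=\; (\pi_\ell(I^\ell))_{m;d/2} \,\oplus\, (\sigma_\ell(I^{\ell+1}))_{m;d/2}
\]
for the corresponding $d$-tuple of values. Since the entries of $A$ are i.i.d.\ Rademacher, $E(I^1,\dots,I^p) \ne 0$ forces the existence of some choice of $(\pi_\ell, \sigma_\ell)_{\ell \in [p]}$ under which each distinct ordered $d$-tuple in the multiset $\{T_{\ell,m}\}_{\ell\in[p],\,m\in[q/d]}$ appears an even number of times. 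I would fix one such valid choice together with a perfect pairing of the $pq/d$ tuple-occurrences into $pq/(2d)$ pairs of equal $d$-tuples.

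Next, build an auxiliary multigraph $G$ on the $pq/2$ slot-positions $\{(\ell,i) : \ell \in [p], i \in [q/2]\}$ (the coordinate positions of all the $I^\ell$): for each pair $\{(\ell,m),(\ell',m')\}$ of matched occurrences and each coordinate $j \in [d]$, add an edge of $G$ joining the slot that holds the $j$-th coordinate of $T_{\ell,m}$ to the slot that holds the $j$-th coordinate of $T_{\ell',m'}$. By construction these two slots carry the same value in $[n]$, so every connected component of $G$ sits inside a single value class, and hence $\numdist{I^1,\dots,I^p}$ is bounded above by the number of connected components of $G$.

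The core claim is that $G$ is $2$-regular with no self-loops. Each slot $(\ell,i)$ belongs to exactly two tuple-occurrences --- one on the edge $(\ell-1,\ell)$ as a ``column'' slot via $\sigma_{\ell-1}$ and one on the edge $(\ell,\ell+1)$ as a ``row'' slot via $\pi_\ell$ --- and each of these contributes exactly one edge at $(\ell,i)$ through its pairing partner, giving degree $2$. The row/column split of the $d$-tuple coordinates into $\{1,\dots,d/2\}$ and $\{d/2+1,\dots,d\}$ then forces the other endpoint of any such edge to lie in some $I^{\ell'}$ with $\ell' \ne \ell$, ruling out self-loops. Hence $G$ is a disjoint union of cycles of length at least $2$, which bounds the number of components by $pq/4 \le pq/4 + q/2$, yielding the lemma.

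I expect the delicate point to be verifying the degree and no-self-loop claims in the corner case where the two tuple-occurrences containing a single slot $(\ell,i)$ happen to be paired with each other. There one must track the two positions $j_1 > d/2$ and $j_2 \leq d/2$ at which $(\ell,i)$ appears in these two tuples and confirm that the single pairing still produces two distinct edges at $(\ell,i)$, with other endpoints lying in $I^{\ell+1}$ and $I^{\ell-1}$ respectively rather than a self-loop. Once this case analysis is carried out, the remainder of the proof is routine $2$-regular-graph bookkeeping.
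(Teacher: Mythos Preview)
Your approach is genuinely different from the paper's and, once a small gap is repaired, actually yields the sharper bound $\numdist{I^1,\dots,I^p}\le pq/4$. The paper instead builds a graph on \emph{values} in $[n]$: it places an edge $\{J^\ell_m,K^{\ell+1}_m\}$ for each $\ell\in[p]$, $m\in[q/2]$, observes that the even-multiplicity condition on $d$-tuples forces even multiplicity of these value-edges (so at most $pq/4$ distinct edges), and then argues by a re-permutation trick that the edge multiset decomposes into $q/2$ walks, giving at most $q/2$ connected components and hence $|V|\le pq/4 + q/2$. Your slot-position graph with the $2$-regularity argument is cleaner and avoids the additive $q/2$.

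The gap is in your justification that $G$ has no self-loops. You assert that the row/column split forces the other endpoint of an edge incident to $(\ell,i)$ to lie in some $I^{\ell'}$ with $\ell'\ne\ell$. This is false: nothing prevents the partner of $T_{\ell,m_1}$ from being another $T_{\ell,m'}$ with the same first index $\ell$ and $m'\ne m_1$, in which case the edge at coordinate $j_1\le d/2$ goes from $(\ell,i)$ to another slot in $I^{\ell}$. The correct reason there is no self-loop is simply that $\pi_\ell$ is a bijection: the other endpoint of that edge is the slot of $I^\ell$ sitting at position $(m'-1)(d/2)+j_1$ of $\pi_\ell(I^\ell)$, whereas $(\ell,i)$ sits at position $(m_1-1)(d/2)+j_1$, and $m'\ne m_1$ makes these distinct positions, hence distinct slots. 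The column-occurrence case is symmetric via $\sigma_{\ell-1}$, and the corner case you flagged (the two occurrences of $(\ell,i)$ paired to each other) goes through the same way, with the two edges landing in $I^{\ell-1}$ and $I^{\ell+1}$ respectively. With this correction, the $2$-regular no-self-loop multigraph decomposes into cycles of length at least $2$, giving at most $pq/4$ components and proving the lemma.
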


\begin{proof}
Note that $E(I^1, \dots, I^p) \neq 0$ 
    implies that 
there exist $J^1, \dots, J^p, K^1, \dots, K^p$ such that $J^{\ell}, K^{\ell} \in \orbit{I^{\ell}}$ and 
every $d$-tuple occurs exactly even number of times in 
$((J^{\ell}_{m;d/2}) \oplus (K^{\ell + 1}_{m;d/2}))_{\ell \in [p], m \in [q/d]}$. Consider the graph $G = (V, E)$ defined by 
    \begin{align*}
        V &:= 
        \bigcup_{\ell\in [p]}\bigcup_{k\in [q/2]} \brc{I^\ell_k} \\
        E &:= \bigcup_{m\in [q/2]} \brc{\brc{J^1_{m},K^2_{m}},\brc{J^2_{m},K^3_{m}},\dots ,
        \brc{J^{p}_{m},K^1_{m}}}.
    \end{align*}
    The even multiplicity condition implies that every element in 
    $E$ has even multiplicity and consequently $|E|\leq pq/4$. We next show that $E$ is the union of $q/2$ paths. To this end, we construct $G^1\in\orbit{I^1},\dots ,
    G^\ell\in\orbit{I^\ell}$ as follows:
    %\medskip
    
    \begin{enumerate}
        \item Let $G^2:=K^2$
        
        \item For $3\leq \ell \leq p$ do:
        \begin{enumerate}[\quad]
            \item Since $G^\ell\in \orbit{J^\ell}$, there exists $\pi\in \Sym_{q/2}$ s.t. 
            $\pi ( J^\ell) = G^{\ell}$.
            \item Let $G^{\ell + 1} := \pi ( K^{\ell+1})$.
        \end{enumerate}
    \end{enumerate}
    %\medskip
    
    \noindent
    We observe that by construction,
    \begin{align*}
        &\quad 
        \bigcup_{m\in [q/2]} \brc{\brc{J^1_{m},G^2_{m}},\brc{G^2_{m},G^3_{m}},\dots ,
        \brc{G^{p}_{m},G^1_{m}}} \\
        &= 
        \bigcup_{m\in [q/2]} \brc{\brc{J^1_{m},K^2_{m}},\brc{J^2_{m},K^3_{m}},\dots ,
        \brc{J^{p}_{m},K^1_{m}}}
        = 
        E
    \end{align*}
    which establishes that $E$ is a union of $q/2$ paths. 
    
    Now since $E$ is the union of $q/2$ paths $G$ has at most $q/2$ connected components, and one needs to add at most $q/2 -1$ edges make it connected, we have $|V|\leq |E| + (q/2 - 1) + 1 \leq pq/4 + q/2$. But $\numdist{I^1,\dots ,I^p} = |V|$, which completes the proof. 
\end{proof}

Finally, $\Ex{\Tr{M^{p}}}$ can be bounded as follows. 
\begin{align*}
    &\quad \Ex{\Tr{M^{p}}}  \\
    &= 
    \sum\limits_{I^1,\dots ,I^p \in [n]^{q/2}}
    E(I^1, \dots, I^p) \\
    &= 
    \sum\limits_{s\in [pq/4 + q/2]} 
    ~\sum\limits_{\numdist{I^1,\dots ,I^p}=s}
    E(I^1, \dots, I^p)
    &&\text{(by \lemref{num:dist})}\\
    &= 
    \sum\limits_{s\in [pq/4 + q/2]} 
    ~\sum\limits_{c^1,\dots ,c^s\in [q/2]^p} 
    ~\sum\limits_{(I^1,\dots ,I^p) \in \mathcal{C}(c^1\dots c^s)}
    E(I^1, \dots, I^p) \\
    &= 
    \sum\limits_{s\in [pq/4 + q/2]} 
    ~\sum\limits_{c^1,\dots ,c^s\in [q/2]^p} 
    ~\sum\limits_{(I^1,\dots ,I^p) \in \mathcal{C}(c^1\dots c^s)}
    E(I^1, \dots, I^p) \\
    &\leq 
    \sum\limits_{s\in [pq/4 + q/2]} 
    ~\sum\limits_{c^1,\dots ,c^s\in [q/2]^p} \\
    &~\sum\limits_{(I^1,\dots ,I^p) \in \mathcal{C}(c^1\dots c^s)}
    2^{O(pq)}\,\frac{p^{(1/2 + 1/2d)pq}}{q^{(1/2 - 1/2d)pq}}\,
    \prod_{\ell\in [p]} c_\ell^{1}!\dots c_\ell^{s}! 
    &&\text{(by \lemref{fixed:mult})} \\
    &\leq 
    \sum\limits_{s\in [pq/4 + q/2]} 
    2^{O(pq)}\,\frac{n^s}{s!}\,p^{(1 + 1/2d)pq}q^{pq/2d}
    &&\text{(by \obsref{num:guesses})} \\
    &\leq 
    \sum\limits_{s\in [pq/4 + q/2]} 
    2^{O(pq)}\,\frac{n^{pq/4+q/2}}{s!\,q^{pq/4+q/2-s}}\,p^{(1/2 + 1/2d)p1}q^{(1/2 - 1/2d)pq}
    &&(\text{assuming } q\leq n) \\
    &\leq 
    \sum\limits_{s\in [pq/4 + q/2]} 
    2^{O(pq)}\,\frac{n^{pq/4+q/2}\,p^{(1+1/2d)pq}}{q^{(1/4-1/2d)pq}} \\
    &\leq 
    2^{O(pq)}\,\frac{n^{pq/4+q/2}\,p^{(1+1/2d)pq}}{q^{(1/4-1/2d)pq}}.
\end{align*}
Choose $p$ to be even and let $p=\Theta(\log n)$. 
Applying Markov inequality shows that with high probability, 
\[
\pth{\hssos{f^{q/d}}}^{d/q}
\leq 
(\norm{2}{M})^{d/q} \leq (\Ex{\Tr{M^{p}}})^{d/pq} 
= O_d\pth{\frac{n^{d/4} \cdot (\log n)^{\,\,d + 1/2} }{q^{\,d/4 - 1/2}}}.
\]

Thus we obtain
\begin{theorem}
\thmlab{dten:even:ub}
    For even $d$, let $\tenAA\in \Re^{[n]^d}$ be a $d$-tensor with i.i.d. 
    $\pm 1$ entries. Then for any even $q$ such that $q\leq n$, 
    we have that with probability $1-n^{\Omega(1)}$, 
    \[
        \frac{\sos{q}{\tenAA(x)}}{\fmax{\tenAA}} ~~\leq ~~
        \pth{\frac{\widetilde{O}(n)}{q}}^{d/4-1/2} .
    \]
\end{theorem}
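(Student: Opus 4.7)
The plan is to bound $\sos{q}{\tenAA(x)}$ by reducing to a spectral norm estimate via the chain $\sos{q}{\tenAA(x)} \leq \hssos{f^{q/d}}^{d/q} \leq (\norm{2}{M})^{d/q}$, where $M$ is any matrix representation of $f^{q/d}$, and then applying the trace method to bound $\norm{2}{M}$ for a cleverly chosen $M$. The representation of choice is the \emph{row-column independent symmetrization}, in which $M[I,J]$ is the average of $A^{\otimes q/d}[\pi(I), \sigma(J)]$ over all independent permutations $\pi,\sigma \in \Sym_{q/2}$ of the rows and columns. This representation is preferable to the fully symmetrized version because it respects the bipartite row/column structure that made the matrix flattening $A$ of $\tenAA$ well-behaved spectrally in the first place, while still averaging many copies of each entry and thereby reducing variance.

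Next, I would carry out the trace method computation for even $p$, writing $\Ex{\Tr{M^p}}$ as a sum of $E(I^1,\dots,I^p) := \Ex{\prod_\ell M[I^\ell, I^{\ell+1}]}$ over all cyclic sequences $I^1,\dots,I^p \in [n]^{q/2}$. To organize the sum, I would group tuples by their occurrence pattern: for each sequence $I^1,\dots,I^p$, let $s = \numdist{I^1,\dots,I^p}$ and let $c^k \in [q/2]^p$ record how many times the $k$-th smallest index appears in each $I^\ell$. The number of sequences with a given pattern is at most $\frac{n^s}{s!} \cdot \frac{((q/2)!)^p}{\prod_\ell c^1_\ell! \cdots c^s_\ell!}$ (choose the support, then permute), and the number of patterns is crudely at most $2^{O(pq)} p^{pq/2}$.

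The two key technical lemmas form the main obstacle. First, I would prove that $E(I^1,\dots,I^p) \neq 0$ forces $s \leq pq/4 + q/2$, by a graph argument: nonzero expectation means every $d$-tuple in the product appears an even number of times, so after using the freedom in $\pi,\sigma$ to consistently reorder, the indices form an edge set $E$ of at most $pq/4$ elements; this edge set further decomposes into $q/2$ paths, one per column coordinate, so the vertex set has size at most $pq/4 + q/2$. Second, I would prove the per-pattern bound $E(I^1,\dots,I^p) \leq 2^{O(pq)}\,\frac{p^{(1/2+1/2d)pq}}{q^{(1/2-1/2d)pq}} \prod_\ell c^1_\ell! \cdots c^s_\ell!$ by an interleaving argument: the nonzero terms correspond to pairs $(Q,R)$ of $(pq/2d)$-tuples of $d$-tuples with matching multiplicity profiles that can then be interleaved in at most $2^{pq/d}$ ways to produce the sequence of $d$-tuples indexing $A^{\otimes q/d}$, and counting $Q$ and $R$ via multinomial coefficients yields the stated bound.

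With those two lemmas in hand, the rest is routine: substitute everything, sum over $s$ (the maximum is at $s = pq/4 + q/2$ under the assumption $q \leq n$), and obtain
\[
\Ex{\Tr{M^p}} \leq 2^{O(pq)} \cdot \frac{n^{pq/4 + q/2}\,p^{(1+1/2d)pq}}{q^{(1/4 - 1/2d)pq}}.
\]
Finally, taking $p = \Theta(\log n)$, applying Markov's inequality, extracting the $pq$-th root, and raising to the $d/q$ power gives $(\hssos{f^{q/d}})^{d/q} \leq (\widetilde{O}(n)/q)^{d/4 - 1/2} \cdot \sqrt{n}$, which combined with $\fmax{\tenAA} = \Omega(\sqrt{n})$ yields the claimed ratio $(\widetilde{O}(n)/q)^{d/4 - 1/2}$ with high probability. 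The graph-theoretic decomposition into $q/2$ paths and the interleaving-based counting are the steps that genuinely require care; everything else is bookkeeping.
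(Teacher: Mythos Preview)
Your proposal is correct and follows the paper's proof essentially line for line: the same row-column independent symmetrization of $A^{\otimes q/d}$, the same trace-method organization by occurrence patterns $(c^1,\dots,c^s)$, the same two key lemmas (the graph/path argument giving $s\le pq/4+q/2$ and the interleaving count giving the per-pattern bound on $E(I^1,\dots,I^p)$), and the same final assembly with $p=\Theta(\log n)$ and Markov. Your per-pattern bound even fixes a typo in the paper's Lemma~4.2 statement (the exponent should indeed carry the factor $pq$, as the paper's own downstream calculation confirms).
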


\paragraph{Remark.}
For the special case where $q=d$, we prove a stronger upper bound, namely 
\[
    \frac{\sos{q}{\tenAA(x)}}{\fmax{\tenAA}} ~~\leq ~~
    \pth{\frac{O(n)}{q}}^{d/4-1/2}~,
\]
the proof of which is implicit in the proofs of \lemref{mineig:quo} and 
\lemref{quo:op}.

\section{Proof of SoS Lower Bound in~\thmref{q-tensor:informal}}
\seclab{qten:lb}

For even $q$, let $\tenAA \in \Re^{[n]^{q}}$ be a $q$-tensor with i.i.d. $\pm 1$ entries and let $A \in \Re^{[n]^{q/2} \times [n]^{q/2}}$ be the matrix flattening of $\tenAA$, \ie 
$A[I,J] = \tenA{I\oplus J}$ (recall that $\oplus$ denotes tuple concatenation). Also let $f(x) := \tenAA(x) = \iprod{\tenAA}{x^{\otimes q}}$. 
This section proves the lower bound in~\thmref{q-tensor:informal}, by constructing a moment matrix $\sfM$ that is positive semidefinite, SoS-symmetric, $\Tr{\sfM} = 1$, and 
$\langle A, \sfM \rangle \geq 2^{-O(q)} \cdot \frac{n^{q/4}}{q^{q/4}}$.
In~\secref{wigner:subsec}, we construct the matrix $\hW$ that acts as a SoS-symmetrized identity matrix. 
The moment matrix $\sfM$ is presented in~\secref{subsec:random_lower_m}. 

\subsection{Wigner Moment Matrix}
\seclab{wigner:subsec}

In this section, we construct an SoS-symmetric and positive semidefinite matrix $\hW \in \Re^{\degmindex{q/2} \times \degmindex{q/2}}$ such that $\lambda_{\min}(\hW) / \Tr{\hW} \geq 1/(2^{q + 1} \cdot |\degmindex{q/2}|)$, i.e. the ratio of the minimum eigenvalue to the average eigenvalue is at least $1/2^{q+1}$. 
%Note that any matrix (not necessarily SoS-symmetric) $\matMQQ \in \Re^{\degmindex{q/2} \times \degmindex{q/2}}$.
%$\lambda_{\min}(\matMQQ) / \Tr{\matMQQ} \leq 1/|\degmindex{q/2}|$. 
%The minimum eigenvalue is at least $\frac{1}{2^{q+1}}$ of the average eigenvalue, so adding $\hW$ almost has the same effect as adding $\matIQQ$.
\begin{theorem}
\thmlab{identity}
For any positive integer $n$ and any positive even integer $q$, there exists a matrix $\hW \subseteq \Re^{\degmindex{q/2} \times \degmindex{q/2}}$ that satisfies the following three properties: (1) $\hW$ is degree-$q$ SoS symmetric. (2) The minimum eigenvalue of $\hW$ is at least $\frac{1}{2}$. (3) Each entry of $\hW$ is in $[0, 2^q]$.
\end{theorem}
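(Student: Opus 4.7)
The plan is to realize $\hW$ as the Gram matrix of monomials against a product of Wigner semicircle measures. Specifically, I would let $x_1, \dots, x_n$ be i.i.d.\ draws from the Wigner semicircle distribution on $[-2,2]$ (density $\tfrac{1}{2\pi}\sqrt{4 - y^2}$), so that $\E[x_i^{2\ell}] = C_\ell := \tfrac{1}{\ell+1}\binom{2\ell}{\ell}$ and odd moments vanish, and then set $\hW[\alpha, \beta] := \E[x^{\alpha + \beta}]$ for $\alpha, \beta \in \degmindex{q/2}$. Property (1) is immediate since the entry depends only on $\alpha + \beta$, and property (3) follows from the independence decomposition $\E[x^{\alpha+\beta}] = \prod_i \E[x_i^{(\alpha+\beta)_i}]$ combined with the bound $C_\ell \le 4^\ell$: since the exponents sum to $q$, the product always lies in $[0, 2^q]$.

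The heart of the argument is the lower bound on $\lambda_{\min}(\hW)$. For any $v \in \Re^{\degmindex{q/2}}$, setting $p(x) := \sum_{|\alpha|=q/2} v_\alpha x^\alpha$ gives $v^T \hW v = \E[p(x)^2]$. My plan is to use the monic orthogonal polynomials for Wigner, namely $p_j(y) := U_j(y/2)$ where $U_j$ is the $j$-th Chebyshev polynomial of the second kind; since $U_j$ has leading coefficient $2^j$, the rescaled polynomial $p_j$ is monic of degree $j$, and a change of variables from the standard Chebyshev orthogonality relation on $[-1,1]$ yields $\E[p_i(x)p_j(x)] = \delta_{ij}$. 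By independence of the coordinates, the tensor-product family $P_\beta(x) := \prod_i p_{\beta_i}(x_i)$ is then orthonormal in $L^2$ of the product measure. Expanding each monomial as $x_i^{\alpha_i} = \sum_{j \le \alpha_i} R_{j, \alpha_i}\, p_j(x_i)$ with $R$ upper triangular and $R_{k,k}=1$, I obtain $x^\alpha = \sum_{\beta \le \alpha} \bigl(\prod_i R_{\beta_i, \alpha_i}\bigr)\, P_\beta(x)$ and hence $p = \sum_\beta w_\beta P_\beta$ with $w_\beta = \sum_{\alpha \ge \beta,\, |\alpha|=q/2} \bigl(\prod_i R_{\beta_i, \alpha_i}\bigr) v_\alpha$.

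The crucial observation is that for $|\beta| = q/2$, the constraints $|\alpha| = q/2$ and $\alpha \ge \beta$ componentwise force $\alpha = \beta$, so $w_\beta = v_\beta$ on this top stratum. By Parseval, $\E[p^2] = \sum_\beta w_\beta^2 \ge \sum_{|\beta|=q/2} w_\beta^2 = \|v\|^2$, giving a bound even stronger than the required $\tfrac{1}{2}\|v\|^2$. The step I expect to require the most care is the normalization identity $\E[p_j(x)^2] = 1$; I would establish it via the Chebyshev substitution sketched above, but one can alternatively verify it inductively using the three-term recurrence of the $p_j$'s together with the Catalan moments $\E[x^{2\ell}] = C_\ell$. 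Beyond that, the entire argument rests on the two structural facts emphasized in the overview: the univariate Cholesky factor $R$ of the Wigner moment matrix is upper triangular with ones on the diagonal, and this triangular structure lifts cleanly across independent coordinates via the product $P_\beta$, so that the top-degree coefficients $v_\alpha$ are preserved verbatim in the orthonormal expansion, independently of the ambient dimension $n$.
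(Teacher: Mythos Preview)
Your argument is correct and in fact yields the sharper bound $\lambda_{\min}(\hW)\ge 1$. The construction of $\hW$ is identical to the paper's (the paper introduces an auxiliary scale $N$ that cancels in the end), but the route to the eigenvalue bound is genuinely different. The paper proves, for \emph{non-homogeneous} polynomials $p$ of degree at most $l$ in $m$ variables, that $\E[p^2]\ge \tfrac{1-m/(2n)}{N^{2l}}\|p\|_{\ell_2}^2$ by a double induction on $(m,l)$: it expands $p=\sum_i p_i x_m^i$, applies the univariate Cholesky factor $R$ to rewrite $\E[p^2]$ as $\sum_t \E[q_t^2]$, and controls the cross terms among the $p_i$'s at the cost of the slowly decaying factor $1-m/(2n)$, which is why the large normalization $N$ is needed. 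You instead stay in the homogeneous world and use the tensorized orthonormal basis $P_\beta=\prod_i p_{\beta_i}$ together with Parseval; the triangularity of $R$ with unit diagonal then forces $w_\beta=v_\beta$ on the top stratum $|\beta|=q/2$, giving $\E[p^2]\ge\|v\|^2$ in one line, with no induction, no $N$, and no loss of a factor $2$. What the paper's approach buys is a statement for arbitrary (inhomogeneous) degree-$\le k$ polynomials; what yours buys is a shorter, sharper, dimension-free proof of exactly the statement that is needed.
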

\thmref{identity} is proved by explicitly constructing independent random variables $x_1, \dots, x_n$ such that for any $n$-variate polynomial $p(x_1, \dots, x_n)$ of degree at most $\frac{q}{2}$, $\E[p^2]$ is bounded away from $0$. The proof consists of three parts. The first part shows the existence of a desired distribution for one variable $x_i$. The second part uses induction to prove that $\E[p^2]$ is bounded away from $0$. The third part constructs $\hW \subseteq \Re^{\degmindex{q/2} \times \degmindex{q/2}}$ from the distribution defined.

\paragraph{Wigner Semicircle Distribution and Hankel Matrix.}
Let $k$ be a positive integer.
In this part, the rows and columns of all $(k + 1) \times (k + 1)$ matrices are indexed by $\{0, 1, \dots, k \}$. 
Let $T$ be a $(k + 1) \times (k + 1)$ matrix where $T[i,j] = 1$ if $|i - j| = 1$ and $T[i, j] = 0$ otherwise. 
Let $e_0 \in \Re^{k + 1}$ be such that $(e_0)_0 = 1$ and $(e_0)_i = 0$ for $1 \leq i \leq k$. 
Let $R \in \Re^{(k + 1) \times (k + 1)}$ be defined by $R := [e_0, Te_0, T^2e_0, \dots, T^{k} e_0]$. Let $R_0, \dots, R_k$ be the columns or $R$ so that $R_i = T^{i} e_0$. 
It turns out that $R$ is closely related to the number of ways to consistently put parantheses. Given a string of parantheses `(' or `)', we call it {\em consistent} if any prefix has at least as many `(' as `)'. For example, ((())( is consistent, but ())(( is not. 

\begin{claim}
$R[i, j]$ is the number of ways to place $j$ parantheses `('  or `)' consistently so that there are $i$ more `(' than `)'. 
\claimlab{parenthesis}
\end{claim}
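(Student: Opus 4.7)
The plan is to interpret $R[i,j] = (T^{j} e_0)_i$ as a walk-counting problem and match it with the combinatorial count of consistent parenthesis strings via a straightforward induction on $j$. Since $T$ is the adjacency matrix of the path graph on vertices $\{0,1,\ldots,k\}$, $(T^{j} e_0)_i$ is exactly the number of length-$j$ walks from $0$ to $i$ on this path. Sending a step $0\!\to\!\cdots\!\to\!i'\!\to\!i'+1$ to the symbol \texttt{(} and a step $i'\!\to\!i'-1$ to \texttt{)} gives a bijection with length-$j$ $(/)$-sequences whose running height (number of \texttt{(} minus number of \texttt{)}) stays in $\{0,\ldots,k\}$ and ends at $i$; the nonnegativity of the running height is exactly the consistency condition.

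Concretely, let $f(i,j)$ denote the number of consistent strings of length $j$ of height $i$. First I would verify the base case $j=0$: $R_0=e_0$, matching the fact that there is exactly one (empty) consistent string of length $0$, at height $0$. Then I would set up the combinatorial recurrence by conditioning on the last symbol: for $1\le i\le k-1$, $f(i,j+1)=f(i-1,j)+f(i+1,j)$; for $i=0$ the string must end in \texttt{)}, giving $f(0,j+1)=f(1,j)$; and for $i=k$ the string must end in \texttt{(}, giving $f(k,j+1)=f(k-1,j)$. On the matrix side, since $T[i,m]=1$ iff $|i-m|=1$, the identity $R_{j+1}=TR_j$ yields the identical system of recurrences on $R[\cdot,\cdot]$.

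The only subtle point is the top boundary $i=k$, where the matrix recurrence omits a term $R[k+1,j]$ that would appear in an ``unbounded'' combinatorial recurrence $f(k,j+1)=f(k-1,j)+f(k+1,j)$. I would resolve this by observing that in the range of indices in play, $j\le k$, so $f(k+1,j)=0$ (one needs at least $k+1$ symbols to reach height $k+1$); hence the truncation of $T$ at row $k$ costs nothing. With matching base case and matching recurrences, induction on $j$ closes the argument and gives $R[i,j]=f(i,j)$, the quantity described in the claim.

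I do not anticipate a serious obstacle; the main thing to be careful about is the upper-boundary bookkeeping above, which is the only place where the finite dimension of $T$ interacts nontrivially with the combinatorics.
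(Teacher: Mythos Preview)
Your proposal is correct and follows essentially the same route as the paper: induction on $j$, with the recurrence $R_{j+1}=TR_j$ matched case-by-case ($i=0$, $0<i<k$, $i=k$) against the combinatorial recurrence obtained by conditioning on the last symbol. Your walk-counting framing is a pleasant added intuition, and your explicit handling of the $i=k$ boundary (noting that $f(k+1,j)=0$ because $j\le k$) makes precise what the paper leaves implicit when it asserts that for $i=k$ the last parenthesis must be open.
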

\begin{proof}
We proceed by the induction on $j$. When $j = 0$, $R[0, 0] = 1$ and $R[i, 0] = 0$ for all $i \geq 1$. 
Assume the claim holds up to $j - 1$. By the definition $R_j = TR_{j - 1}$. 
\begin{itemize}
\item For $i = 0$, the last parenthesis must be the close parenthesis, so the definition $R[0, j] = R[1, j - 1]$ still measures the number of ways to place $j$ parantheses with equal number of `(' and `)'. 
\item For $i = k$, the last parenthesis must be the open parenthesis, so the definition $R[k, j] = R[k - 1, j - 1]$ still measures the number of ways to place $j$ parantheses with $k$ more `('. 
\item For $0 < i < k$, the definition of $R$ gives $R[i, j] = R[i - 1, j - 1] + R[i + 1, j - 1]$. 
Since $R[i - 1, j]$ corresponds to plaincg `)' in the $j$th position and $R[i + 1, j]$ corresponds to placing `(' in the $j$th position, $R[i, j]$ still measures the desired quantity. 
\end{itemize}
This completes the induction and proves the claim.
\end{proof}

Easy consequences of the above claim are (1) $R[i, i] = 1$ for all $0 \leq i \leq k$, and $R[i, j] = 0$ for $i > j$, and (2) $R[i, j] = 0$ if $i + j$ is odd, and $R[i, j] \geq 1$ if $i \leq j$ and $i + j$ is even. 

Let $H := (R^T) R$. Since $R$ is upper triangular with $1$'s on the main diagonal, $H = (R^T)R$ gives the unique Cholesky decomposition, so $H$ is positive definite. It is easy to see that $H[i, j] = \langle R_i, R_j \rangle$ is the total number of ways to place $i + j$ parantheses consistently with the same number of `(' and `)'. Therefore, $H[i, j] = 0$ if $i + j$ is odd, and if $i + j$ is even (let $l := \frac{i + j}{2}$), $H[i, j]$ is the $l$th Catalan number $C_{l} := \frac{1}{l + 1} \binom{2l}{l}$. 
In particular, $H[i, j] = H[i', j']$ for all $i + j = i' + j'$. Such $H$ is called a {\em Hankel matrix}. 

Given a sequence of $m_0 = 1, m_1, m_2, \dots$ of real numbers, the {\em Hamburger moment problem} asks whether there exists a random variable $W$ supported on $\Re$ such that $\E[W^i] = m_i$. It is well-known that there exists a unique such $W$ if for all $k \in \NN$, the Hankel matrix $H \in \Re^{(k + 1) \times (k + 1)}$ defined by $H[i, j] := \E[W^{i + j}]$ is positive definite~\cite{Simon98}. Since our construction of $H \in \Re^{(k + 1) \times (k + 1)}$ ensures its positive definiteness for any $k \in \NN$, there exists a unique random variable $W$ such that $\E[W^i] = 0$ if $i$ is odd, $\E[W^i] = C_{\frac{i}{2}}$ if $i$ is even. It is known as the {\em Wigner semicircle distribution} with radius $R = 2$. 

\begin{remark}
Some other distributions (e.g., Gaussian) will give an asymptotically weaker bound. Let $G$ be a standard Gaussian random variable. The quantitative difference comes from the fact that $\E[W^{2l}] = C_{l} = \frac{1}{l + 1} \binom{2l}{l} \leq 2^{l}$ while $\E[G^{2l}] = (2l - 1)!! \geq 2^{\Omega(l \log l)}$. 
\end{remark}

\paragraph{Multivariate Distribution.}
Fix $n$ and $q$. Let $k = \frac{q}{2}$. 
Let $H \in \Re^{(k + 1) \times (k + 1)}$ be the Hankel matrix defined as above, and $W$ be a random variable sampled from the Wigner semicircle distribution. 
Consider $x_1, \dots, x_n$ where each $x_i$ is an independent copy of $\frac{W}{N}$ for some large number $N$ to be determined later. 
Our $\hW$ is later defined to be $\hW[\alpha, \beta] = \E[x^{\alpha + \beta}] \cdot N^{q}$ so that the effect of the normalization by $N$ is eventually cancelled,
but large $N$ is needed to prove the induction that involves non-homogeneous polynomials. 

We study $\E[p(x)^2]$ for any $n$-variate (possibly non-homogeneous) polynomial $p$ of degree at most $k$. For a multivarite polynomial $p = \sum_{\alpha \in \udmindex{k}} p_{\alpha} x^{\alpha}$, 
define $\ell_2$ norm of $p$ to be $\norm{\ell_2}{p} := \sqrt{\sum_{\alpha} p_{\alpha}^2}$. 
For $0 \leq m \leq n$ and $0 \leq l \leq k$, let 
$\sigma(m, l) := \inf_p \E [p(x)^2]$ where the infimum is taken over polynomials $p$ such that $\norm{\ell_2}{p} = 1$, $\deg(p) \leq l$, and $p$ depends only on $x_1, \dots, x_m$. 

\begin{lemma}
There exists $N := N(n, k)$ such that $\sigma(m, l) \geq \frac{(1 - \frac{m}{2n}) }{N^{2l}}$ for all $0 \leq m \leq n$ and $0 \leq l \leq k$. 
\lemlab{lem:induction}
\end{lemma}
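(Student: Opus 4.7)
The plan is to prove the lemma by induction on the number of variables $m$, with a single choice of $N = N(n,k)$ that works uniformly.

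The base case $m=0$ is trivial: any polynomial on zero variables is a constant, so $\|p\|_{\ell_2}=1$ forces $p=\pm 1$ and $\E[p^2] = 1 \geq 1/N^{2l}$. For the inductive step, given $p(x_1,\dots,x_m)$ of degree at most $l$ with $\|p\|_{\ell_2}=1$, I would write
\[
p \;=\; \sum_{i=0}^{l} p_i(x_1,\dots,x_{m-1})\,x_m^{\,i}, \qquad \deg p_i \leq l-i, \qquad \sum_i \|p_i\|_{\ell_2}^{\,2} = 1,
\]
and exploit the Cholesky factorization $H = R^T R$ of the Hankel matrix via the identity
\[
\E_{x_m}\!\left[p^2 \,\middle|\, x_1,\dots,x_{m-1}\right] \;=\; \sum_{i,j} p_i p_j\, \frac{H[i,j]}{N^{i+j}} \;=\; \sum_{k=0}^{l} q_k^{\,2}, \qquad q_k \;:=\; \sum_{i\geq k} R[k,i]\,\frac{p_i}{N^{\,i}}.
\]
Because $R$ is upper triangular with $R[k,k]=1$, each $q_k$ is a polynomial in $x_1,\dots,x_{m-1}$ of degree at most $l-k$, so applying the inductive hypothesis to each $q_k$ and summing yields
\[
\E[p^2] \;=\; \sum_k \E[q_k^{\,2}] \;\geq\; \left(1 - \tfrac{m-1}{2n}\right)\, \sum_k \frac{\|q_k\|_{\ell_2}^{\,2}}{N^{2(l-k)}}.
\]

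The crux of the argument, which I expect will require the most care, is showing
\[
\sum_k \frac{\|q_k\|_{\ell_2}^{\,2}}{N^{2(l-k)}} \;\geq\; \left(1-\tfrac{1}{2n}\right)\frac{\|p\|_{\ell_2}^{\,2}}{N^{2l}}
\]
for a single choice of $N$ depending only on $n$ and $k$. Grouping coefficients by the monomial in $x_1,\dots,x_{m-1}$, this reduces coordinatewise to a linear-algebraic claim on vectors $c \in \R^{l+1}$: substituting $\tilde c_i := c_i / N^{\,l}$ and $\tilde R[k,i] := R[k,i]\, N^{k-i}$ identifies the left-hand side with $\|\tilde R \tilde c\|^2$ and the right with $(1-1/(2n))\|\tilde c\|^2$. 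Since $\tilde R = I + E$ with $E$ strictly upper triangular, and (by \claimref{parenthesis}) $|R[k,i]| \leq 2^{i} \leq 2^{q/2}$, one has $\|E\|_{\mathrm{op}} \leq \|E\|_F \leq q\cdot 2^{q/2}/N$. Picking $N := 4n q\,2^{q/2}$ makes $\|E\|_{\mathrm{op}} \leq 1/(4n)$, and then $\|(I+E)\tilde c\|^2 \geq (1-\|E\|_{\mathrm{op}})^{2}\|\tilde c\|^2 \geq (1-1/(2n))\|\tilde c\|^2$ as required.

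Combining the two estimates yields
\[
\E[p^2] \;\geq\; \left(1-\tfrac{m-1}{2n}\right)\!\left(1-\tfrac{1}{2n}\right)\frac{1}{N^{2l}} \;\geq\; \frac{1-m/(2n)}{N^{2l}},
\]
which closes the induction. The hardest step is the perturbation estimate on $\tilde R$, but it is ultimately tractable because the entries of $R$ are bounded by a quantity depending only on $k$ (via the parenthesization interpretation) while the $1/N^{i-k}$ factors shrink the off-diagonal entries arbitrarily.
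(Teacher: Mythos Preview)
Your proposal is correct and follows the same approach as the paper: induct on $m$, decompose $p$ along $x_m$, use the Cholesky factorization $H=R^TR$ to express $\E_{x_m}[p^2]$ as $\sum_k q_k^2$, and apply the inductive hypothesis to each $q_k$. The only difference is cosmetic: where the paper lower-bounds each $\|q_t\|_{\ell_2}$ via a direct triangle inequality, you package the same estimate as an operator-norm perturbation $\tilde R = I+E$; both yield the same choice of $N$ up to constants.
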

\begin{proof}
We prove the lemma by induction on $m$ and $l$. 
When $m = 0$ or $l = 0$, $p$ becomes the constant polynomial $1$ or $-1$, so $\E[p^2] = 1$. 

Fix $m, l > 0$ and a polynomial $p = p(x_1, \dots, x_m)$ of degree at most $l$. 
Decompose $p = \sum_{i = 0}^l p_i x_m^i$ where each $p_i$ does not depend on $x_m$. The degree of $p_i$ is at most $l - i$. 
\[
\E[p^2]  = 
\E[(\sum_{i = 0}^l p_i x_m^i)^2] 
 = \sum_{0 \leq i, j \leq l} \E[p_i p_j] \E[x_m^{i + j}]. \nonumber
\]

Let $\Sigma = \diag(1, \frac{1}{N}, \dots, \frac{1}{N^{l}}) \in \Re^{(l+1) \times (l+1)}$. 
Let $H_l \in \Re^{(l + 1) \times (l + 1)}$ be the submatrix of $H$ with the first $l + 1$ rows and columns. 
The rows and columns of $(l + 1) \times (l + 1)$ matrices are still indexed by $\{0, \dots, l \}$. 
Define $R_l \in \Re^{(l + 1) \times (l + 1)}$ similarly from $R$, and $r_t$ ($0 \leq t \leq l$) be the $t$th column of $(R_l)^T$. 
Note $H_l = (R_l)^T R_l = \sum_{t = 0}^{l}r_t r_t^T$. 
Let $H' = \Sigma H_l \Sigma$ such that $H'[i, j] = \E[x_m^{i + j}]$. 
Finally, let $P \in \Re^{(l + 1) \times (l + 1)}$ be defined such that $P[i, j] := \E[p_{i} p_{j}]$. 
Then $\E[p^2]$ is equal to 
\begin{align*}
&\quad \Tr{PH'} = \Tr{P \Sigma H_l \Sigma} 
= \Tr{P \Sigma (\sum_{t=0}^{l} r_t r_t^T ) \Sigma} \\
&= \sum_{t=0}^{l} \E[(p_{t}  \frac{1}{N^t} +  p_{t + 1} \frac{(r_t)_{t + 1}}{N^{t + 1}} + \dots +  p_{l}\frac{(r_t)_{l}}{N^{l}})^2],
\end{align*}
where the last step follows from the fact that $(r_t)_j = 0$ if $j < t$ and $(r_t)_t = 1$. 
Consider the polynomial
\[
q_t := p_{t}  \frac{1 }{N^{t}} +  p_{t + 1} \frac{(r_t)_{t + 1}}{N^{t + 1}} + \dots + p_{l}\frac{(r_t)_{l}}{N^{l}}.
\]
Since $p_i$ is of degree at most $l - i$, $q_t$ is of degree at most $l - t$. 
Also recall that each entry of $R$ is bounded by $2^k$.
By the triangle inequality, 
\[
\norm{\ell_2}{q_t} \geq \frac{1}{N^{t}} \bigg( \norm{\ell_2}{p_t } - \big( \norm{\ell_2}{p_{t+1}} \frac{(r_t)_{t+1}}{N} + \dots + \norm{\ell_2}{p_l}\frac{(r_t)_{l}}{N^{l - t}} \big)  \bigg)\geq \frac{1}{N^{t}} \bigg( \norm{\ell_2}{p_t} -\frac{ k 2^k }{N}  \bigg),
\]
and 
\[
\norm{\ell_2}{q_t}^2 \geq \frac{1}{N^{2t}} \bigg( \norm{\ell_2}{p_t}^2 - \frac{2 k 2^k }{N}  \bigg).
\]
Finally, 
\begin{align*}
\E[p^2] &= \sum_{t=0}^{l} \E[q_t^2] \\
&\geq \sum_{t=0}^{l} \sigma (m - 1, l - t) \cdot \norm{\ell_2}{q_t}^2 \\
& \geq \sum_{t=0}^{l} \sigma (m - 1, l - t) \cdot \frac{1}{N^{2t}} \bigg( \norm{\ell_2}{p_t}^2 -\frac{ 2 k 2^k }{N}  \bigg) \\
&\geq \sum_{t=0}^{l} \frac{(1 - \frac{m - 1}{2n})}{N^{2l - 2t}} \cdot \frac{1}{N^{2t}} \cdot \bigg( \norm{\ell_2}{p_t}^2 -\frac{ 2 k 2^k }{N}  \bigg) \\ 
&= \frac{(1 - \frac{m - 1}{2n})}{N^{2l}} \cdot \sum_{t=0}^{l} \bigg( \norm{\ell_2}{p_t}^2 -\frac{ 2 k 2^k }{N}  \bigg) \\
&\geq \frac{(1 - \frac{m - 1}{2n})}{N^{2l}} \cdot 
\big( 1 - \frac{2K^2 2^k}{N} \big). 
\end{align*}
Take $N := 4nK^2 2^k$ so that 
$
\big( 1 - \frac{m - 1}{2n} \big) \cdot \big( 1 - \frac{2K^2 2^k}{N} \big) \geq 
1 - \frac{m - 1}{2n} - \frac{2K^2 2^k}{N} =  1 - \frac{m}{2n}.
$
This completes the induction and proves the lemma. 
\end{proof}

\paragraph{Construction of $\hW$.}
We now prove~\thmref{identity}. 
Given $n$ and $q$, let $k = \frac{q}{2}$, and consider random variables $x_1, \dots, x_n$ above. Let $\hW \in \Re^{\degmindex{k} \times \degmindex{k}}$ be such that for any $\alpha, \beta \in \degmindex{k}$, $\hW[\alpha, \beta] = \E[x^{\alpha + \beta}] \cdot N^{2k}$. 
By definition, $\hW$ is degree-$q$ SoS symmetric. 
Since each entry of $\hW$ corresponds to a monomial of degree exactly $q$ and each $x_i$ is drawn independently from the Wigner semicircle distribution, each entry of $\hW$ is at most the $\frac{q}{2}$th Catalan number $C_{\frac{q}{2}} \leq 2^q$. 
For any unit vector $p = (p_S)_{S \in \degmindex{k}} \in \Re^{\degmindex{k}}$, \lemref{lem:induction} shows $p^T \hW p = \E[p^2] \cdot N^{2k} \geq \frac{1}{2}$ where $p$ also represents a degree-$k$ homogeneous polynomial $p(x_1, \dots, x_n) = \sum_{\alpha \in \binom{[n]}{k}} p_{\alpha} x^{\alpha}$. Therefore, the minimum eigenvalue of $\hW$ is at least $\frac{1}{2}$.

\subsection{Final Construction}
\seclab{subsec:random_lower_m}
For even $d$, let $\tenAA \in \Re^{[n]^{q}}$ be a $q$-tensor with i.i.d. $\pm 1$ entries and let $A \in \Re^{[n]^{q/2} \times [n]^{q/2}}$ be the matrix flattening of $\tenAA$, \ie 
$A[I,J] = \tenA{I\oplus J}$ (recall that $\oplus$ denotes tuple concatenation). Also let $f(x) := \tenAA(x) = \iprod{\tenAA}{x^{\otimes q}}$. 
Our lower bound on $\fmax{f}$ by is proved by constructing a moment matrix $\sfM \in \RR^{[n]^{q/2} \times [n]^{q/2}}$ that satisfies 
\begin{itemize}
\item $\Tr{\sfM} = 1$.
\item $\sfM\succeq 0$. 
\item $\sfM$ is SoS-symmetric.
\item $\langle A, \sfM \rangle ~~\geq ~~2^{-O(q)}\cdot n^{q/4}/q^{q/4}$,
\end{itemize}
where $A \in \Re^{[n]^{q/2} \times [n]^{q/2}}$ is any matrix representation of $f$ (SoS-symmetry of $\sfM$ ensures $\langle A, \sfM \rangle$ does not depend on the choice of $A$). 

Let $\sfA$ be the SoS-symmetric matrix such that for any $I = (i_1, \dots, i_{q/2})$ and $J = (j_1, \dots, j_{q/2})$, 
\[
\sfA[I, J] = 
\begin{cases}
\frac{f_{\mi{I}+\mi{J}}}{q!}, & \mbox {if } i_1, \dots, i_{q/2}, j_1, \dots, j_{q/2} \mbox{ are all distinct.} \\
 0 & \mbox {otherwise.}
\end{cases}
\]
We bound $\norm{2}{\sfA}$ in two steps. 
Let $\hA_Q \in \Re^{\degmindex{q/2} \times \degmindex{q/2}}$ be the {\em quotient matrix} of $\sfA$ defined by
\[
\hA_Q[\beta, \gamma] := \sfA[I, J] \cdot \sqrt{ |\orbit{\beta}|\cdot |\orbit{\gamma}|},
\]
where $I, J \in [n]^{q/2}$ are such that $\beta = \mi{I}, \gamma = \mi{J}$.

\begin{lemma}
\lemlab{mineig:quo} With high probability,  
    $
	    \| \hA_Q \|_2 \leq 
	    ~ 2^{O(q)} \cdot \frac{n^{q/4}}{q^{q/4}}
    $.
\end{lemma}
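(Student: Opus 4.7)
The plan is to bound $\norm{2}{\sfA}$ via the trace method and then invoke $\norm{2}{\hA_Q}=\norm{2}{\sfA}$. This identity holds because $\sfA$ is SoS-symmetric: for any $w$ in the orthogonal complement of the SoS-symmetric subspace (characterized by $\sum_{J\in\orbit{\gamma}} w[J]=0$ for every multi-index $\gamma$), we have
\[
(\sfA w)[I] ~=~ \sum_\gamma \sfA[I,J_\gamma]\cdot\sum_{J\in\orbit{\gamma}} w[J] ~=~ 0,
\]
where $J_\gamma$ is any representative of $\orbit{\gamma}$ and we used that $\sfA[I,J]$ depends only on $\alpha(I)+\alpha(J)$. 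Hence $\sfA$ annihilates this complement; its nonzero spectrum coincides with that of $\hA_Q$, and the two operator norms agree. Working with the tuple-indexed $\sfA$ is combinatorially cleaner.

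For even $p$, I would use $\norm{2}{\sfA}^{2p}\le \Tr{\sfA^{2p}}$. Expanding $\sfA[I_i,I_{i+1}]=(q!)^{-1}\sum_{K_i\,:\,\alpha(K_i)=\alpha(I_i\oplus I_{i+1})}\tenA{K_i}$ and using independence of the Rademacher entries of $\tenAA$,
\[
\E\bigl[\Tr{\sfA^{2p}}\bigr] ~=~ (q!)^{-2p}\sum_{I_1,\dots,I_{2p}\in [n]^{q/2}} N(I_1,\dots,I_{2p}),
\]
where $N(I_1,\dots,I_{2p})$ counts admissible $(K_1,\dots,K_{2p})$: each $K_i\in [n]^q$ has $\alpha(K_i)=\alpha(I_i\oplus I_{i+1})$, and every distinct $q$-tuple $K$ occurs in the list an even number of times. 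Nonzero contributions force each $I_i$ to be multilinear with $\mathrm{supp}(I_i)\cap\mathrm{supp}(I_{i+1})=\emptyset$, and further force each $\alpha(I_i\oplus I_{i+1})$ to appear an even number of times in the label sequence.

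The rest of the argument follows the template of Section~\ref{sec:ub-even}. I classify $(I_1,\dots,I_{2p})$ by the number $s=\numdist{I_1,\dots,I_{2p}}$ of distinct elements used and by the multiplicity profile $(c^k_\ell)$ of~\obsref{num:guesses}; the number of sequences with a given profile is bounded by $\frac{n^s}{s!}\cdot((q/2)!)^{2p}/\prod_{k,\ell} c^k_\ell!$. A graph-theoretic step analogous to~\lemref{num:dist} gives $s\le (p+1)q/2$: viewing $\mathrm{supp}(I_1),\dots,\mathrm{supp}(I_{2p})$ as vertices of a closed walk with edge labels $\mathrm{supp}(I_i)\cup\mathrm{supp}(I_{i+1})$ (each a $q$-subset), the even-multiplicity of labels yields an Eulerian multigraph on at most $p$ distinct labels, contributing at most $q+(p-1)\cdot q/2$ distinct elements. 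A counting step in the spirit of~\lemref{fixed:mult} then bounds $N(I_1,\dots,I_{2p})$ via an interleaving of the $K$-tuples.

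The main obstacle is sharpening this counting so that the final bound carries only a clean $2^{O(q)}$ factor rather than the polylogarithmic slack of~\thmref{dten:even:ub}. The extra symmetry of $\sfA$ -- it averages over all $q!$ permutations of the concatenated index rather than only the $((q/2)!)^2$ split permutations used for the matrix $M$ of Section~\ref{sec:ub-even} -- supplies an additional $\binom{q}{q/2}^{\Theta(p)}$ savings in $N(I_1,\dots,I_{2p})$, replacing the $q^{q/4-1/2}$ denominator of~\thmref{dten:even:ub} with the required $q^{q/4}$. Assembling the estimates and invoking Markov's inequality with $p$ suitably chosen then yields the stated high-probability bound $\norm{2}{\hA_Q}\le 2^{O(q)}\,n^{q/4}/q^{q/4}$.
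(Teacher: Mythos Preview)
Your identity $\norm{2}{\sfA}=\norm{2}{\hA_Q}$ is correct and cleanly justified. But the route via the trace method is genuinely different from the paper's, and as written it has a real gap precisely at the step you yourself flag as the ``main obstacle.''

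The paper's proof is much shorter and avoids trace moments entirely. For a fixed unit vector $y\in\Re^{\degmindex{q/2}}$ it rewrites $y^T\hA_Q\,y$ as a linear combination of the independent Rademacher entries $A[I,J]$ (one per multilinear $\alpha=\alpha(I)+\alpha(J)$), and a short Cauchy--Schwarz computation shows the total variance of this linear form is at most $2^{O(q)}$. Sub-Gaussian concentration then gives $|y^T\hA_Q\,y|\le t$ except with probability $\exp(-t^2/2^{O(q)})$, and a union bound over a net of the unit sphere in $|\degmindex{q/2}|\le 2^{O(q)}(n/q)^{q/2}$ dimensions forces $t=2^{O(q)}(n/q)^{q/4}$. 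No polylog slack appears anywhere.

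In your approach, the heuristic that the extra $\binom{q}{q/2}^{\Theta(p)}$ savings from full $q!$-symmetry fixes the problem is aimed at the wrong target. That savings only adjusts the $q$-exponent (pushing $q^{q/4-1/2}$ to $q^{q/4}$). The polylog slack in the Section~\ref{sec:ub-even} analysis comes instead from the $p^{\,\Theta(pq)}$ factors in the combinatorics---\obsref{num:guesses} contributes $p^{pq/2}$ and \lemref{fixed:mult} contributes $p^{(1/2+1/2d)pq}$---which survive the $p$-th root as $(\log n)^{\Theta(q)}$ once $p=\Theta(\log n)$. That is not $2^{O(q)}$, and nothing in your sketch touches it. It is conceivable that a dedicated trace-method analysis exploiting the fully multilinear support (so all $c^k_\ell\in\{0,1\}$) could be tightened to remove these factors, but you have not carried this out; the paper's concentration-plus-net argument simply sidesteps the issue.
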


\begin{proof}
	Consider any $y\in \Re^{\degmindex{q/2}}$ s.t. $\|y\| = 1$. Since
	\begin{align*}
		y^T\cdot \hA_Q \cdot y 
		&= 
		\sum_{\beta+ \gamma \,\leq \,\1} 
		\ha_Q[\beta, \gamma] \cdot y_{\beta}\cdot y_{\gamma} \\
        &= 
        \sum_{\beta+ \gamma \,\leq \,\1} 
        y_{\beta}\cdot y_{\gamma} 
        \sum_{
        \substack{
        \mi{I} + \mi{J} \\ = \beta + \gamma
        } 
        } 
        A[I, J] \cdot \frac{\sqrt{|\orbit{\beta}||\orbit{\gamma}|}}{|\orbit{\beta + \gamma}|} \\ 
        &=
        \sum_{I, J \in [n]^{q/2}} A[I, J]
        \sum_{
        \substack{
        \beta+ \gamma \,\leq \,\1 \\ \beta + \gamma =\\ \mi{I} + \mi{J} 
        } 
        }
        \frac{\sqrt{|\orbit{\beta}||\orbit{\gamma}|}}{|\orbit{\beta + \gamma}|} \cdot 
        y_{\beta}\cdot y_{\gamma}  
	\end{align*}
	So $y^T \cdot \hA_Q \cdot y$ is a sum of independent random variables 
    \[
        \sum_{I, J \in [n]^q} A[I, J] \cdot c_{I, J}
    \]
    where each $A[I, J]$ is independently sampled from the Rademacher distribution and 
    \[
        c_{I, J} := 
        \sum_{
        \substack{
        \beta+ \gamma \,\leq \,\1 \\ \beta + \gamma =\\ \mi{I} + \mi{J} 
        } 
        } 
        \frac{\sqrt{|\orbit{\beta}||\orbit{\gamma}|}}{|\orbit{\beta + \gamma}|}  
        \cdot y_{\beta}\cdot y_{\gamma} \,.
    \]
    Fix any $I,J\in [n]^{q/2}$ and let $\alpha := \mi{I}+\mi{J}$. By Cauchy-Schwarz, 
    \begin{equation}
    \Eqlab{entry:var}
        c_{I, J}^2 
        \quad \leq \quad
        \bigg( 
        \sum_{
        \beta + \gamma = \alpha
        }
        \frac{|\orbit{\beta}||\orbit{\gamma}|}{|\orbit{\alpha}|^2} 
        \bigg) 
        \cdot 
        \big( 
        \sum_{
        \beta + \gamma = \alpha
        }
        y_{\beta}^2 \cdot y_{\gamma}^2 
        \big)
        \quad \leq \quad
        \frac{2^{O(q)}}{|\orbit{\alpha}|}\cdot 
        \sum_{
        \beta + \gamma = \alpha
        } 
        y_{\beta}^2 \cdot y_{\gamma}^2 
        \quad =:\quad
        c^2_{\alpha} \,,
    \end{equation}
    since there are at most $2^{O(q)}$ choices of $\beta$ and $\gamma$ with $\beta + \gamma = \alpha$, 
    and $|\orbit{\beta}| \cdot |\orbit{\gamma}| \leq |\orbit{\alpha}|$. 
    Therefore, $y^T \cdot \hA_Q \cdot y$~ is the sum of independent random variables that are 
    centred and always lie in the interval $[-1, +1]$. Furthermore, by \Eqref{entry:var}, the total 
    variance is 
    \[
        \sum_{I,J\in [n]^{q/2}} c_{I,J}^2 
        ~~\leq ~~
        \sum_{\alpha\in \degmindex{q}} c_{\alpha}^2 \cdot |\orbit{\alpha}|
        ~~\leq ~~
        2^{O(q)} \cdot \sum_{\beta, \gamma \in \degmindex{q/2}} y_{\beta}^2 \cdot y_{\gamma}^2
        ~~= ~~
        2^{O(q)} \cdot \big(\sum_{\beta \in \degmindex{q/2}} y_{\beta}^2 \big)^2
        ~~= ~~
        2^{O(q)}
    \]
    The claim then follows from combining standard concentration bounds with a union bound over 
    a sufficiently fine net of the unit sphere in 
    $|\degmindex{q/2}| \leq 2^{O(q)} \cdot \frac{n^{q/2}}{q^{q/2}}$~ dimensions. 
\end{proof}

\begin{lemma}
\lemlab{quo:op}
    For any SoS-symmetric $\sfA \in \Re^{[n]^{q/2}\times [n]^{q/2}}$, 
    $\norm{2}{\sfA} \leq \norm{2}{\hA_Q}$.
\end{lemma}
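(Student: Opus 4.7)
The plan is to exhibit $\hA_Q$ as $D^T \sfA D$ for some matrix $D$ whose columns are orthonormal, so that $\|\hA_Q\|_2 \le \|D\|_2^2 \cdot \|\sfA\|_2$, and dually to realize $\sfA = D \hA_Q D^T$ (up to a harmless rewriting using SoS-symmetry). Since $D$ will have orthonormal columns, it has operator norm $1$, which immediately yields the inequality.

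First I would unpack the SoS-symmetry hypothesis to record that $\sfA[I,J]$ depends only on $\mi{I} + \mi{J}$; write $\sfA[I,J] = a_{\mi{I}+\mi{J}}$ for a function $a : \degmindex{q} \to \Re$. Under this notation, the definition in the statement reads $\hA_Q[\beta,\gamma] = a_{\beta+\gamma}\sqrt{|\orbit{\beta}|\cdot|\orbit{\gamma}|}$ and is manifestly well-defined (independent of the representatives $I \in \orbit{\beta}$, $J \in \orbit{\gamma}$). Both $\sfA$ and $\hA_Q$ are then symmetric matrices.

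Next I would introduce the averaging operator $D \in \Re^{[n]^{q/2} \times \degmindex{q/2}}$ defined by
\[
D[I,\beta] \;=\; \frac{\indicator\bigl[I \in \orbit{\beta}\bigr]}{\sqrt{|\orbit{\beta}|}} .
\]
The key computation is that $D$ has orthonormal columns: for $\beta,\gamma \in \degmindex{q/2}$,
\[
(D^T D)[\beta,\gamma] \;=\; \sum_{I \in [n]^{q/2}} \frac{\indicator[I \in \orbit{\beta}]\,\indicator[I \in \orbit{\gamma}]}{\sqrt{|\orbit{\beta}|\cdot|\orbit{\gamma}|}} \;=\; \indicator[\beta = \gamma],
\]
since orbits are disjoint and $|\orbit{\beta}|$ elements lie in the $\beta$-orbit. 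In particular $\|D\|_2 = 1$.

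Finally I would verify the factorization $\sfA = D \hA_Q D^T$. For any $I,J \in [n]^{q/2}$, every term in the expansion $\sum_{\beta,\gamma} D[I,\beta]\,\hA_Q[\beta,\gamma]\,D[J,\gamma]$ vanishes unless $\beta = \mi{I}$ and $\gamma = \mi{J}$, so the sum collapses to
\[
\frac{1}{\sqrt{|\orbit{\mi{I}}|\cdot|\orbit{\mi{J}}|}} \cdot a_{\mi{I}+\mi{J}} \sqrt{|\orbit{\mi{I}}|\cdot|\orbit{\mi{J}}|} \;=\; a_{\mi{I}+\mi{J}} \;=\; \sfA[I,J].
\]
Combining the identity $\sfA = D \hA_Q D^T$ with $\|D\|_2 = \|D^T\|_2 = 1$ gives
\[
\|\sfA\|_2 \;\le\; \|D\|_2 \cdot \|\hA_Q\|_2 \cdot \|D^T\|_2 \;=\; \|\hA_Q\|_2,
\]
as desired. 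The only non-routine step is noticing the correct normalization $1/\sqrt{|\orbit{\beta}|}$ in $D$, which is dictated precisely by the $\sqrt{|\orbit{\beta}|\cdot|\orbit{\gamma}|}$ scaling built into the definition of $\hA_Q$; once that is in place, both $D^T D = I$ and $D \hA_Q D^T = \sfA$ fall out mechanically, so I do not expect any real obstacle beyond this bookkeeping.
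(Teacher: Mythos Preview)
Your proof is correct and is essentially the same argument as the paper's, just packaged more cleanly as a matrix factorization: the paper takes arbitrary unit vectors $u,v$, rewrites $u^T\sfA v$ as $a^T\hA_Q b$ with $a_\alpha = \langle u\restrict{\orbit{\alpha}},\one\rangle/\sqrt{|\orbit{\alpha}|}$ (i.e., $a = D^T u$ in your notation), and then uses Cauchy--Schwarz to get $\|a\|\le 1$, which is exactly your observation that $D^TD = I$. Your presentation via $\sfA = D\hA_Q D^T$ with $\|D\|_2=1$ is arguably tidier, but the content is identical.
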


\begin{proof}
    For any $u,v\in \Re^{[n]^{q/2}} \,s.t.~\|u\|=\|v\|=1$, we have 
    \begin{align*}
        &\quad 
        u^T \sfA v \\
        &= \sum\limits_{I,J\in [n]^{q/2}} \sfA[I, J] u_I v_J \\ 
        &= 
        \sum\limits_{I,J\in [n]^{q/2}} 
        \frac{
        \hA_Q[\mi{I}, \mi{J}]
        }
        {
        \sqrt{|\orbit{I}|\,|\orbit{J}|}
        } \cdot u_I v_J \\
        &=
        \sum\limits_{\alpha, \beta \in \degmindex{q/2}} 
        \frac{\sfA[\alpha , \beta]}{\sqrt{|\orbit{\alpha}|\,|\orbit{\beta}|}}
        \iprod{u\restrict{\orbit{\alpha}}}{\one}\iprod{v\restrict{\orbit{\beta}}}{\one} 
        \\
        &= 
        a^T\hA_Q ~b
        \qquad\text{where } 
        a_{\alpha} := \frac{\iprod{u\restrict{\orbit{\alpha}}}{\one}}{\sqrt{|\orbit{\alpha}|}},~
        b_{\alpha} := \frac{\iprod{v\restrict{\orbit{\alpha}}}{\one}}{\sqrt{|\orbit{\alpha}|}} \\
        &\leq 
        \norm{2}{\hA_Q} \|a\|\cdot \|b\| \\
        &= 
        \norm{2}{\hA_Q} \sqrt{\sum\limits_{\alpha\in \degmindex{q/2}} 
        \frac{\iprod{u\restrict{\orbit{\alpha}}}{\one}^2}{|\orbit{\alpha}|}}
        \sqrt{\sum\limits_{\alpha\in \degmindex{q/2}} 
        \frac{
        \iprod{v\restrict{\orbit{\alpha}}}{\one}^2}{|\orbit{\alpha}|}} \\
        &\leq 
        \norm{2}{\hA_Q} \sqrt{\sum\limits_{\alpha\in \degmindex{q/2}} 
        \|u\restrict{\orbit{\alpha}}\|^2 }
        \sqrt{\sum\limits_{\alpha\in \degmindex{q/2}} 
        \|u\restrict{\orbit{\alpha}}\|^2 }
        &&\text{(by Cauchy-Schwarz)} \\
        &\leq 
        \norm{2}{\hA_Q} \|u\|\cdot \|v\| = \norm{2}{\hA_Q}.
    \end{align*}
\end{proof}

The above two lemmas imply that $\norm{2}{\sfA} \leq 
\| \hA_Q \|_2 \leq 
	    ~ 2^{O(q)} \cdot \frac{n^{q/4}}{q^{q/4}}$. 
Our moment matrix $\sfM$ is defined by
\[
\sfM := \frac{1}{c_1} \bigg( \frac{1}{c_2} \cdot \frac{q^{3q/4}}{n^{3q/4}} \sfA + \frac{\sfW}{n^{q/2}} \bigg),
\]
where $\sfW$ is the direct extension of $\hW$ constructed in~\thmref{identity} --- $\sfW[I, J] := \hW[\mi{I}, \mi{J}]$ for all $I, J \in [n]^{q/2}$, and $c_1, c_2 = 2^{\Theta(q)}$ that will be determined later. 

We first consider the trace of $M$. The trace of $\sfA$ is $0$ by design, and the trace of $\sfW$ is $n^{q/2} \cdot 2^{O(q)}$. Therefore, the trace of $\sfM$ can be made $1$  by setting $c_1$ appropriately. Since both $\sfA$ and $\sfW$ are SoS-symmetric, so is $\sfM$. 
Since $\E[\sfW, A] = 0$ and for each $I, J \in [n]^{q/2}$ with $i_1, \dots, i_{q/2}, j_1, \dots, j_{q/2}$ all distinct we have $\E[\sfA[I, J] A[I, J]] = \frac{1}{q!}$, with high probability 
\[
\langle A, \sfM \rangle  = \frac{1}{c_1} \cdot  \langle A,  \bigg( \frac{1}{c_2}  \cdot \frac{q^{3q/4}}{n^{3q/4}} \sfA + \frac{\sfW}{n^{q/2}} \bigg)  \rangle
\geq 2^{O(-q)} \cdot \frac{q^{3q/4}}{n^{3q/4}} \cdot \frac{n^q}{q^q} = 2^{O(-q)} \cdot \frac{n^{q/4}}{q^{q/4}}.
\]
It finally remains to show that $\sfM$ is positive semidefinite. Take an arbitrary vector $v \in \RR^{[n]^{q/2}}$, and let 
\[
p = \sum_{\alpha \in \degmindex{q/2}} x^{\alpha} p_{\alpha} =  \sum_{\alpha \in \degmindex{q/2}} x^{\alpha} \cdot \bigg( \sum_{I \in [n]^{q/2} : \mi{I} = \alpha} v_I \bigg)
\]
be the associated polynomial. If $p = 0$, SoS-symmetry of $\sfM$ ensures $v \sfM v^T = 0$. Normalize $v$ so that $\norm{\ell_2}{p} = 1$. 
First, consider another vector $v_m \in [n]^{q/2}$ such that 
\[
(v_m)_I = 
\begin{cases}
\frac{p^{\mi{I}}}{(q/2)!} , & \mbox {if } i_1, \dots, i_{q/2} \mbox{ are all distinct.} \\
 0 & \mbox {otherwise.}
\end{cases}
\]
Then 
\[
\| v_m \|_2^2 \leq \sum_{\alpha \in \degmindex{q/2}} p_{\alpha}^2 / (q/2)! = \frac{1}{(q/2)!},
\]
so $\| v_m \|_2 \leq \frac{2^{O(q)}}{q^{q/4}}$.
Since $\sfA$ is SoS-symmetric, has the minimum eigenvalue at least $-2^{O(q)} \cdot \frac{n^{q/4}}{q^{q/4}}$, and has nonzero entries only on the rows and columns $(i_1, \dots, i_{q/2})$ with all different entries, 
\[
v^T \sfA v = (v_m)^T \sfA (v_m) \geq 
2^{-O(q)} \cdot \frac{n^{q/4}}{q^{3q/4}}.
\]

We finally compute $v^T \sfW v$. 
Let $v_w \in [n]^{q/2}$ be the vector where for each $\alpha \in \degmindex{q/2}$, we choose one $I \in [n]^{q/2}$ arbitrarily and set $(v_w)_I = p_{\alpha}$ (all other $(v_w)_I$'s are $0$). 
By SoS-symmetry of $\sfW$, 
\[
v^T \sfW v = (v_w)^T \sfW (v_w) = p^T \hW p \geq \frac{1}{2},
\]
by~\thmref{identity}. 
Therefore, 
\[
v^T \cdot \sfM \cdot v ~~=~~ 
\frac{1}{c_1} \cdot v^T\cdot \bigg( \frac{1}{c_2} \cdot \frac{q^{3q/4}}{n^{3q/4}} \sfA + \frac{\sfW}{n^{q/2}} \bigg) \cdot v
~~\geq ~~
\frac{1}{c_1} \cdot \bigg( \frac{1}{c_2} \cdot 2^{-O(q)} \cdot \frac{n^{q/4}}{q^{3q/4}} \cdot \frac{q^{3q/4}}{n^{3q/4}} + \frac{1}{2} \cdot \frac{1}{n^{q/2}}  \bigg) 
~~\geq ~~ 0,
\]
by taking $c_2 = 2^{\Theta(q)}$. So $\sfM$ is positive semidefinite, and this finishes the proof
of the lower bound in~\thmref{q-tensor:informal}

\noindent
Thus we obtain, 
\begin{theorem}[Lower bound in \thmref{q-tensor:informal}]
\thmlab{qten:lb}
    For even $q\leq n$, let $\tenAA\in \Re^{[n]^q}$ be a $q$-tensor with 
    i.i.d. $\pm 1$ entries. Then with probability $1-n^{\Omega(1)}$,
    \[
        \frac{\sos{q}{\tenAA(x)}}{\fmax{\tenAA}} ~~\geq ~~
        \pth{\frac{\Omega(n)}{q}}^{q/4-1/2} .
    \]
\end{theorem}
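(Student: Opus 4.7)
The plan is to exhibit a feasible moment matrix $\sfM \in \Re^{[n]^{q/2}\times [n]^{q/2}}$ for the degree-$q$ SoS relaxation of $\fmax{\tenAA}$ satisfying $\Tr{\sfM}=1$, $\sfM\succeq 0$, SoS-symmetry, and $\iprod{A}{\sfM} \geq 2^{-O(q)}\cdot n^{q/4}/q^{q/4}$ for any matrix representation $A$ of the polynomial $\tenAA(x)$. Combined with the bound $\fmax{\tenAA} \leq O(\sqrt{nq\log q})$ with high probability, this delivers the claimed ratio $(\Omega(n)/q)^{q/4-1/2}$.

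The first ingredient is the natural data-driven piece: a SoS-symmetric $\sfA$ that spreads the coefficient $\tenA{I\oplus J}/q!$ uniformly across the multilinear positions (entries where $i_1,\dots,i_{q/2},j_1,\dots,j_{q/2}$ are all distinct) and is zero elsewhere. Restricting to multilinear positions keeps entries of $\sfA$ independent, so $\iprod{A}{\sfA}$ concentrates near its mean, which is of the correct order $n^{q/4}/q^{q/4}$ after normalization. However, $\sfA$ has zero trace and is not PSD, so I must add a positive ``identity-like'' corrector without breaking SoS-symmetry.

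The main technical obstacle is constructing this SoS-symmetric approximate identity $\sfW$. I would take $\sfW[I,J] := \E[x^{\mi{I}+\mi{J}}]\cdot N^q$ where $x_1,\dots,x_n$ are i.i.d.\ copies of a rescaled Wigner semicircle variable. SoS-symmetry is then automatic from linearity and independence, and the entries are bounded by Catalan numbers $C_{q/2}\leq 2^q$, yielding $\Tr{\sfW}\leq 2^{O(q)}\cdot n^{q/2}$. The hard part is to lower bound $\lambda_{\min}(\sfW)$ by showing $\E[p(x)^2] \geq 2^{-O(q)}/N^q$ for every polynomial $p$ of degree $\leq q/2$ with unit $\ell_2$ coefficient norm. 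I would prove this by induction on the number of variables: splitting $p=\sum_i p_i\, x_m^i$ and writing $\E[p^2]$ as a Gram form in the univariate moment matrix $H$ of $x_m$, then exploiting the Cholesky factorization $H=R^TR$ (with $R$ upper-triangular with $1$'s on the diagonal, by a counting bijection with consistent parenthesizations) to re-express $\E[p^2]$ as a sum of squared norms of lower-degree polynomials in one fewer variable, to which the inductive hypothesis applies. The unit-diagonal triangularity of $R$ is exactly what prevents losing a constant factor per variable; moments of Gaussians, for instance, would blow up by $2^{\Omega(q\log q)}$.

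Next I would bound $\norm{2}{\sfA}$ by passing to the quotient matrix $\hA_Q$ indexed by multi-indices, with $\hA_Q[\beta,\gamma]:=\sfA[I,J]\cdot\sqrt{|\orbit{\beta}||\orbit{\gamma}|}$. A short Cauchy-Schwarz argument on row/column averaging within orbits yields $\norm{2}{\sfA}\leq\norm{2}{\hA_Q}$, and each entry of $\hA_Q$ is a small weighted sum of independent Rademachers whose total variance (over any fixed unit test vector) is $2^{O(q)}$. Standard sub-Gaussian concentration together with an $\epsilon$-net over the unit sphere in $|\degmindex{q/2}|\leq 2^{O(q)}\cdot n^{q/2}/q^{q/2}$ dimensions then gives $\norm{2}{\sfA}\leq 2^{O(q)}\cdot n^{q/4}/q^{q/4}$ with high probability. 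Finally, I would assemble $\sfM:=\frac{1}{c_1}\bigl(\frac{1}{c_2}\cdot\frac{q^{3q/4}}{n^{3q/4}}\,\sfA+\frac{\sfW}{n^{q/2}}\bigr)$ with $c_1,c_2=2^{\Theta(q)}$: the scaling on $\sfA$ is tuned so that on any unit test vector $v$ the negative contribution of $\sfA$ is dominated (after picking $c_2$ large enough) by $\lambda_{\min}(\sfW)/n^{q/2}\geq \frac{1}{2n^{q/2}}$, yielding $\sfM\succeq 0$; the normalizer $c_1$ enforces $\Tr{\sfM}=1$ (using $\Tr{\sfA}=0$ and $\Tr{\sfW}\leq 2^{O(q)}n^{q/2}$); SoS-symmetry is inherited from the summands; and since $\E[\iprod{A}{\sfW}]=0$ while $\iprod{A}{\sfA}$ concentrates at order $1/q!\cdot|\{\text{multilinear positions}\}|$, the inner product $\iprod{A}{\sfM}$ meets the required $2^{-O(q)}\cdot n^{q/4}/q^{q/4}$ lower bound.
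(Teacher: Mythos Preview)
Your proposal is essentially the paper's own proof: the same data-driven piece $\sfA$ supported on multilinear positions, the same Wigner-moment matrix $\sfW$ with the Cholesky/parenthesization argument for its minimum eigenvalue, the same quotient-matrix bound $\norm{2}{\sfA}\le\norm{2}{\hA_Q}$ via concentration plus an $\eps$-net, and the identical combination $\sfM=\tfrac{1}{c_1}\bigl(\tfrac{1}{c_2}\cdot\tfrac{q^{3q/4}}{n^{3q/4}}\sfA+\tfrac{\sfW}{n^{q/2}}\bigr)$.

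There is one step in your PSD verification that, as written, does not close. You say that for a unit test vector $v$ the negative contribution of $\sfA$ is dominated by $\lambda_{\min}(\sfW)/n^{q/2}\ge \tfrac{1}{2n^{q/2}}$. But on $\Re^{[n]^{q/2}}$ the extended matrix $\sfW$ has $\lambda_{\min}(\sfW)=0$ (any $v$ whose associated polynomial vanishes lies in its kernel), so you must normalize by $\norm{\ell_2}{p}=1$ instead. Under that normalization $v^T\sfW v=p^T\hW p\ge\tfrac12$ is fine, but the naive bound $|v^T\sfA v|\le\norm{2}{\sfA}\le 2^{O(q)}n^{q/4}/q^{q/4}$ then forces $c_2\ge 2^{O(q)}q^{q/2}$, which is too large and kills the objective value. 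The paper recovers the missing $q^{-q/2}$ factor by using the second role of the multilinear support of $\sfA$ (you only cite its first role, independence for concentration): since $\sfA$ has nonzero rows and columns only at tuples with all-distinct entries, $v^T\sfA v=v_m^T\sfA v_m$ where $v_m$ keeps only the multilinear coordinates divided by $(q/2)!$, and $\|v_m\|_2^2\le 1/(q/2)!$. This yields $|v^T\sfA v|\le 2^{O(q)}n^{q/4}/q^{3q/4}$, after which $c_2=2^{\Theta(q)}$ suffices exactly as you want.
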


\noindent
As a side note, observe that by applying \lemref{quo:op} and the proof of \lemref{mineig:quo} to the 
SoS-symmetric matrix representation of $f(x) = \tenAA(x)$ (instead of $\sfA$), we obtain a stronger SoS upper bound 
(by polylog factors) for the special case of $d=q$: 
\begin{theorem}[Upper bound in \thmref{q-tensor:informal}]
\thmlab{qten:ub}
    For even $q\leq n$, let $\tenAA\in \Re^{[n]^q}$ be a $q$-tensor with 
    i.i.d. $\pm 1$ entries. Then with probability $1-n^{\Omega(1)}$, 
    \[
        \frac{\sos{q}{\tenAA(x)}}{\fmax{\tenAA}} ~~\leq ~~
        \pth{\frac{O(n)}{q}}^{q/4-1/2} .
    \]
\end{theorem}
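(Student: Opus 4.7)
The plan is to exploit the fact that in the special case $q=d$, the second inequality in~\Eqref{relaxations:relations} collapses to an equality (the pseudo-Cauchy--Schwarz step becomes trivial), so $\sos{q}{\tenAA(x)}\leq \hssos{f}$ for $f(x):=\tenAA(x)$. Thus it suffices to exhibit one matrix representation of $f$ whose operator norm is at most $2^{O(q)}\cdot n^{q/4}/q^{q/4}$, and apply the quotient-matrix bound of \lemref{quo:op} together with the concentration estimate used in \lemref{mineig:quo}. Combined with the standard lower bound $\fmax{\tenAA}\geq \Omega(\sqrt{nq})$ (which follows, e.g., by plugging in any fixed $x$), this yields a ratio of $(O(n)/q)^{q/4-1/2}$ with no $\mathrm{polylog}(n)$ overhead, because we never need the $p=\Theta(\log n)$ trace method from \secref{ub:even}.

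Concretely, let $\sfM_f\in\Re^{[n]^{q/2}\times[n]^{q/2}}$ be the unique SoS-symmetric matrix representation of $f$, whose entries are $\sfM_f[I,J] = f_{\mi{I}+\mi{J}}/|\orbit{\mi{I}+\mi{J}}|$. Then $\sos{q}{\tenAA(x)}\leq \lambda_{\max}(\sfM_f)\leq \norm{2}{\sfM_f}$, and by \lemref{quo:op} it suffices to bound $\norm{2}{\hat{\sfM}_{f,Q}}$, where $\hat{\sfM}_{f,Q}[\beta,\gamma]=\sfM_f[I,J]\cdot \sqrt{|\orbit{\beta}||\orbit{\gamma}|}$ for any $I,J$ with $\mi{I}=\beta,\mi{J}=\gamma$. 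For an arbitrary unit vector $y\in\Re^{\degmindex{q/2}}$, I would write
\[
y^T\hat{\sfM}_{f,Q}\,y \;=\; \sum_{I\in[n]^q}\tenAA_I\cdot c_I(y),\qquad c_I(y)\;=\;\sum_{\substack{\beta+\gamma=\mi{I}\\\beta,\gamma\in\degmindex{q/2}}}\frac{\sqrt{|\orbit{\beta}||\orbit{\gamma}|}}{|\orbit{\mi{I}}|}\,y_\beta y_\gamma,
\]
so $y^T\hat{\sfM}_{f,Q}\,y$ is a sum of independent Rademachers. The Cauchy--Schwarz calculation in~\Eqref{entry:var} (which only uses $|\orbit{\beta}||\orbit{\gamma}|\leq |\orbit{\mi{I}}|$ and $2^{O(q)}$ choices of $(\beta,\gamma)$) gives $c_I(y)^2\leq 2^{O(q)}/|\orbit{\mi{I}}|\cdot\sum_{\beta+\gamma=\mi{I}}y_\beta^2 y_\gamma^2$, and summing against $|\orbit{\alpha}|$ over $\alpha\in\degmindex{q}$ yields total variance $\sum_I c_I(y)^2\leq 2^{O(q)}$. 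Hoeffding plus a union bound over a $\tfrac12$-net of the sphere in the $|\degmindex{q/2}|\leq 2^{O(q)}(n/q)^{q/2}$-dimensional ambient space then yields $\norm{2}{\hat{\sfM}_{f,Q}}\leq 2^{O(q)}\cdot n^{q/4}/q^{q/4}$ with probability $1-n^{-\Omega(1)}$.

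The main subtlety --- and the only point of departure from \lemref{mineig:quo} itself --- is that $\sfM_f$ is populated at \emph{every} pair of indices, not just at multilinear positions as was the case for $\sfA$. At a non-multilinear index $\alpha=\mi{I}+\mi{J}$, the relevant Rademacher is $\tenAA_{I'}$ for some $I'$ with $\mi{I'}=\alpha$, and both the number $|\orbit{\alpha}|$ of such representatives and the coefficient $1/|\orbit{\alpha}|$ compensate exactly: the inequality $|\orbit{\beta}||\orbit{\gamma}|\leq |\orbit{\alpha}|$ continues to hold for arbitrary $\alpha$, so~\Eqref{entry:var} goes through verbatim without any worsening constant. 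I expect this to be the only place where one has to think, and the rest is a mechanical re-run of the concentration-plus-net argument already in the proof of \lemref{mineig:quo}. Dividing the resulting operator-norm bound by $\fmax{\tenAA}\geq\Omega(\sqrt{nq})$ then gives the advertised ratio $(O(n)/q)^{q/4-1/2}$, absorbing the $2^{O(q)}$ factor into the implicit constants.
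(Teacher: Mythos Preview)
Your approach is exactly the paper's: apply \lemref{quo:op} and the concentration argument from the proof of \lemref{mineig:quo} to the full SoS-symmetric representation $\sfM_f$ of $f$ (rather than to the multilinear-only matrix $\sfA$); you correctly note that $|\orbit{\beta}|\,|\orbit{\gamma}|\leq |\orbit{\alpha}|$ and the $2^{O(q)}$ bound on the number of splittings $\beta+\gamma=\alpha$ both hold for arbitrary (not just multilinear) $\alpha$, so \Eqref{entry:var} carries over verbatim.

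One small correction, though: your parenthetical justification for $\fmax{\tenAA}\geq\Omega(\sqrt{nq})$ is wrong. For any \emph{fixed} unit vector $x$, the quantity $\tenAA(x)$ is a linear combination of independent Rademachers with total variance $\sum_I (x_{i_1}\cdots x_{i_q})^2=\|x\|_2^{2q}=1$, so evaluating at a single $x$ only gives $|\tenAA(x)|=O(1)$ with high probability, not $\Omega(\sqrt{nq})$. The lower bound on $\fmax{\tenAA}$ is indeed standard (it is the matching direction to the cited upper bound $\fmax{\tenAA}\lesssim\sqrt{nq\log q}$), but it requires a genuine argument --- e.g., a second-moment computation over a family of test vectors --- not evaluation at one point.
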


%%%%%%%%%%%%%%%%%%%%%%%%%%%%%%%%%%%%%%%%%%%%%%%%%%%

%%% Local Variables: 
%%% mode: latex
%%% TeX-master: "main"
%%% End: 

%\input{random.tex}

\appendix

\section{Upper bounds for Odd Degree Tensors}
\label{app:3ten:ub}
In the interest of clarity, in this section we shall prove \thmref{d-tensor:informal} 
for the special case of $3$-tensors. The proof readily generalizes to the case of all 
odd degree-$d$ tensors.

\subsection{Analysis Overview} 
Let $\tenAA \in \Re^{[n]^3}$ be a $3$-tensor with i.i.d. uniform $\pm 1$ entries. 
Assume $q/4$ is a power of $2$ as this only changes our claims by constants. 
For $\ell\in [n]$ let $\bar{T}_{\ell}$ be an $n\times n$ matrix with i.i.d. uniform 
$\pm 1$ entries, such that we have 
\[
    f(x) := 
    \iprod{\tenAA}{x^{\otimes 3}} = 
    \sum_{\ell\in [n]} x_{\ell} \,(x^T \,\bar{T}_{\ell} \,x) = 
    \sum_{\ell\in [n]} x_{\ell} \,(x^T \,T_{\ell} \,x).
\]
Let $T_{\ell} := (\bar{T}_{\ell}+ \bar{T}_{\ell}^T)/2$.
Following Hopkins et.~al.~\cite{HSS15}, let 
$\mathcal{T}:= \sum_{\ell=1}^{n}T_\ell\otimes T_\ell$. Let 
$E \in \Re^{[n]^2\times [n]^2}$ be the matrix such that $E[(i,i),(j,j)] = 
\mathcal{T}[(i,i),(j,j)]$ for any $i,j\in [n]$ and $E[(i,j),(k,l)] = 0$ otherwise. 
Let $E' \in \Re^{[n]^2\times [n]^2}$ be the matrix such that $E'[(i,j),(i,j)] = 
E[(i,i),(j,j)]+E[(j,j),(i,i)]$ for any $i,j\in [n]$ and $E'[(i,j),(k,l)] = 0$ otherwise. 

Let $T :=  \mathcal{T} - E \in \Re^{[n]^2\times [n]^2}$ and $\matAA := T^{\otimes q/4}$. 
Let $g(x):= (x^{\otimes 2})^T\,\mathcal{T}\,x^{\otimes 2}$ and 
$h(x):= (x^{\otimes 2})^T\,E\,x^{\otimes 2} = (x^{\otimes 2})^T\,E'\,x^{\otimes 2}$.
Let $\PE$ be the pseudo-expectation operator returned by the program above. 

We would like to show that there is some matrix representation $\matBB$ of 
$\matAA$, such that w.h.p. $\max_{\norm{}{y}=1} y^T \matBB y$ is small. To this end, 
consider the following mass shift procedure that we apply to $\matAA$ to get $\matBB$:
\begin{align*}
    \forall I,J\in [n]^{q/2},\quad \matB{I,J} 
    &:= 
    \frac{1}{(q/2)!^2}\sum\limits_{\pi ,\sigma\in \Sym_{q/2}} 
    \matA{\pi(I),\sigma(J)} \\
    &=
    \frac{1}{\cardin{\orbit{I}}\,\cardin{\orbit{J}}}
    \sum\limits_{I'\in \orbit{I} ,J'\in \orbit{J}} \matA{I',J'}
\end{align*}
Below the fold we shall show that $\norm{2}{\matBB}^{4/q} = 
\widetilde{O}(n^{3/2}/\sqrt{q})$ w.h.p. This is sufficient to obtain the desired result 
since we have 
\begin{align*}
    &~~~\quad\norm{2}{\matBB} \Id - B \succeq 0 \\
    &\Rightarrow~
    \norm{2}{\matBB} \|x\|^q -
    \iprod{x^{\otimes q/2}}{\matBB\,x^{\otimes q/2}} \succeq 0 \\
    &\Rightarrow~
    \norm{2}{\matBB} \|x\|^q -
    \iprod{x^{\otimes 2}}{T\,x^{\otimes 2}}^{q/4} \succeq 0 \\
    &\Rightarrow~
    \norm{2}{\matBB} \|x\|^q -
    (g(x)-h(x))^{q/4} \succeq 0 \\
    &\Rightarrow~
    \PEx{(g(x)-h(x))^{q/4}}
    \leq
    \norm{2}{\matBB} \\
    &\Rightarrow~
    \PEx{g(x)-h(x)}
    \leq
    \norm{2}{\matBB}^{4/q} 
    &&\text{(Pseudo-Cauchy-Schwarz)}\\
    &\Rightarrow~
    \PEx{g(x)}
    \leq
    \norm{2}{\matBB}^{4/q} + \PEx{h(x)} \\
    &\Rightarrow~
    \PEx{g(x)}
    \leq
    \norm{2}{\matBB}^{4/q} + 5n
    &&(5n\,\Id - E' \succeq 0) \\
    &\Rightarrow~
    \PEx{g(x)}
    =
    \widetilde{O}(n^{3/2}/\sqrt{q}). 
\end{align*}
\begin{align*}
    \text{Now }\qquad 
    \PEx{f(x)}
    &=
    \PEx{\sum_{\ell\in [n]} x_{\ell} \,(x^T \,T_{\ell} \,x)}
    &&\text{(Following \cite{HSS15})} \\
    &\leq 
    \PEx{\|x\|^2}^{1/2}~
    \PEx{\sum_{\ell\in [n]} (x^T \,T_{\ell} \,x)^2}^{1/2}
    &&\text{(Pseudo-Cauchy-Schwarz)} \\
    &\leq 
    \PEx{\sum_{\ell\in [n]} (x^T \,T_{\ell} \,x)^2}^{1/2}
    &&\text{(Pseudo-Cauchy-Schwarz)} \\
    &\leq 
    \PEx{ \iprod{x^{\otimes 2}}{\mathcal{T}\,x^{\otimes 2}}}^{1/2} \\
    &=
    \PEx{g(x)}^{1/2} 
    =
    \widetilde{O}(n^{3/4}/q^{1/4})\\
\end{align*}

\subsection{Bounding $\|B\|_2^{4/q}$}
For any $I^1,\dots ,I^p\in [n]^{q/2}$ let $e_k$ denote the $k$-th smallest 
element in $\bigcup\limits_{\ell,\,j} \{I^\ell_j\}$ and let 
\[
    \numdist{I^{1},\dots ,I^{p}}:=
        \cardin{\bigcup\limits_{\ell \in [p]} \bigcup\limits_{j \in [n]} \brc{I^{\ell}_{j}}}.
\]
For any $c^1,\dots ,c^s\in [q/2]^p$, let 
\begin{align*}
    &\mathcal{C}(c^1\dots c^s) := \\
    &\quad \brc{(I^1,\dots ,I^p)
    \sep{\numdist{I^1,\dots ,I^p}=s,~\forall k\in [s], 
    \ell\in [p], ~\text{$e_k$ appears $c^k_{\ell}$ times in $I^{\ell}$}}}
\end{align*}

\begin{observation}
\obslab{num:guesses:3}
    \begin{align*}
        \cardin{\brc{(c^1,\dots ,c^s)\in ([q/2]^p)^{s}
        \sep{\mathcal{C}(c^1,\dots ,c^s)\neq \phi}}} 
        &\leq 
        2^{O(pq)} p^{\,pq/2} \\
        \text{if } \mathcal{C}(c^1,\dots ,c^s)\neq \phi \text{ then }
        \cardin{\mathcal{C}(c^1,\dots ,c^s)} 
        &\leq 
        \frac{n^{s}}{s!}\times 
        \frac{((q/2)!)^{p}}{\prod\limits_{\ell\in[p]} 
        c_\ell^{1}!\dots c_\ell^{s}!}
    \end{align*}
\end{observation}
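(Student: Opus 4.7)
The plan is to establish both inequalities by direct counting, essentially replicating the proof of the analogous \obsref{num:guesses} from the even-degree analysis, since the combinatorial setup is identical: each $I^\ell$ is a $(q/2)$-tuple over $[n]$, and $c^k_\ell$ records the multiplicity of $e_k$ in $I^\ell$.

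For the second inequality, to specify $(I^1,\ldots,I^p) \in \mathcal{C}(c^1,\ldots,c^s)$ I would first pick the $s$ distinct values $e_1 < \cdots < e_s$ of $[n]$ that occur across the tuples, contributing at most $\binom{n}{s} \leq n^s/s!$ choices. Once these values are fixed, each row $I^\ell$ is obtained by ordering a length-$(q/2)$ multiset in which $e_k$ appears exactly $c^k_\ell$ times; the number of such orderings is the multinomial coefficient $(q/2)!\,/\,\bigl(c^1_\ell!\cdots c^s_\ell!\bigr)$. Multiplying the row counts over $\ell \in [p]$ and combining with the factor $n^s/s!$ yields precisely the claimed bound on $|\mathcal{C}(c^1,\ldots,c^s)|$.

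For the first inequality, I would observe that nonemptiness of $\mathcal{C}(c^1,\ldots,c^s)$ forces $\sum_{k\in [s]} c^k_\ell = q/2$ for every $\ell\in [p]$, so each slice $(c^1_\ell,\ldots,c^s_\ell)$ is a composition of $q/2$ into $s$ non-negative parts. The number of such compositions is $\binom{q/2+s-1}{q/2}$. Combining this with the trivial bound $s \leq pq/2$ (the number of distinct elements cannot exceed the total $pq/2$ positions) and the estimate $\binom{n}{k} \leq (en/k)^k$, I get $\binom{q/2+s-1}{q/2} \leq \bigl(e(p+1)\bigr)^{q/2}$. Taking the $p$-th power for the product over rows gives $\bigl(e(p+1)\bigr)^{pq/2} = 2^{O(pq)}\,p^{pq/2}$, as desired. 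Both parts are elementary multinomial/composition counts, so there is no genuine obstacle; the proof is pure bookkeeping, differing from \obsref{num:guesses} only in the relabeling of the underlying quantities.
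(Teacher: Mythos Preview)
Your proposal is correct. The paper states this observation (and the identical \obsref{num:guesses}) without proof, remarking only that such facts ``can be easily proved''; your multinomial count for $|\mathcal{C}(c^1,\ldots,c^s)|$ and stars-and-bars bound for the number of valid profiles are exactly the intended elementary arguments.
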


\begin{lemma}
\lemlab{fixed:mult:3}
    Consider any $c^1,\dots ,c^s\in [q/2]^p$ and 
    $(I^{1},\dots ,I^{p})\in \mathcal{C}(c^1,\dots ,c^s)$. We have 
    \[
        \Ex{\matB{I^1,I^2}\matB{I^2,I^3}\dots \matB{I^p,I^1}} 
        \leq 
        2^{O(pq)}n^{pq/8}~\frac{p^{\,5pq/8}}{q^{\,3pq/8}}\,
        \prod_{\ell\in [p]} c_\ell^{1}!\dots c_\ell^{s}!
    \]
\end{lemma}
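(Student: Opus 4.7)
The plan is to adapt the analysis of \lemref{fixed:mult} to the 3-tensor setting, where the new feature is that each entry of $T = \mathcal{T} - E$ is itself a bilinear form in the Rademacher atoms $T_\tau[\{a,b\}]$ rather than a single independent Rademacher variable. This richer internal structure of $T$ is both the source of the new $n^{pq/8}$ factor in the bound and the main complication in the proof.

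\emph{Step 1 (reduction to a sum over $(J^\ell, K^\ell)$).} Exactly as in the derivation of \Eqref{fixed:mult}, unfold the row- and column-symmetrizations defining $\matBB$ to obtain
\begin{equation*}
E(I^1,\ldots,I^p) \;=\; \frac{\prod_\ell \prod_s (c_\ell^s!)^2}{((q/2)!)^{2p}} \sum_{(J^\ell, K^\ell \in \orbit{I^\ell})_{\ell \in [p]}} \Ex{\prod_{\ell=1}^{p}\prod_{m=1}^{q/4} T[J^\ell_{m;2}, K^{\ell+1}_{m;2}]}.
\end{equation*}
I will handle the $E$-correction separately: since $E$ is supported on only $O(n^2)$ diagonal-type entries, each of magnitude $O(n)$, its total contribution can be absorbed into a $2^{O(pq)}$ overhead, leaving the dominant contribution from $T \approx \mathcal{T}$.

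\emph{Step 2 (expand $\mathcal{T}$ into atoms).} Writing $\mathcal{T}[(a,b),(c,d)] = \sum_{\tau=1}^{n} T_\tau[a,b]\,T_\tau[c,d]$ introduces $pq/4$ summation indices $\tau_{\ell,m} \in [n]$ and turns the inner expectation into a sum over $\vec\tau$ of products of $pq/2$ Rademacher atoms, grouped so that the two atoms coming from a single $T$-factor share a common $\tau$-index. For the expectation of such a product to be non-zero, the atoms must be pairable into identical pairs of the form $(\tau, \{a,b\})$.

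\emph{Step 3 (the $\tau$-summation costs at most $n^{pq/8}$).} This is the heart of the proof. Encode a valid atom-pairing as a graph $\Gamma$ on the $pq/4$ vertices indexed by the $T$-factors, where an edge is drawn whenever a pair matches atoms belonging to two distinct $T$-factors (intra-$T$-factor pairings contribute isolated vertices). Every connected component of $\Gamma$ forces the $\tau$-indices of all its $T$-factors to coincide, so the $\vec\tau$-summation contributes at most $n^{c(\Gamma)}$ where $c(\Gamma)$ is the number of components. A case analysis on the number $k$ of intra-$T$-factor pairs, combined with the fact that every non-isolated component of $\Gamma$ has at least two vertices, should give $c(\Gamma) \leq pq/8 + O(k)$; the $k$-dependent slack is absorbed by the reduced number of index configurations available in the intra-paired case, via a balance-sheet argument analogous to \lemref{num:dist}.

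\emph{Step 4 (index counting).} Once the atom-pairing and $\vec\tau$ are fixed, the parity constraints on the atoms impose equalities on the index values, and the count of consistent $(J^\ell, K^\ell)_\ell$ reduces to the same type of combinatorial accounting as in \lemref{fixed:mult}. The symmetric-group factor $((q/2)!)^{2p}/\prod (c_\ell^s!)^2$ from Step 1 cancels as in the even case, leaving a bound of the form $2^{O(pq)} \cdot (p^{5pq/8}/q^{3pq/8}) \cdot \prod_\ell c_\ell^1!\cdots c_\ell^s!$. Multiplying the three contributions (the Step 1 prefactor, the $n^{pq/8}$ from Step 3, and the Step 4 index count) yields the claimed bound.

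The main obstacle is Step 3: showing that the effective $\tau$-sum collapses to $n^{pq/8}$ rather than the naive $n^{pq/4}$. This requires a careful graph-theoretic analysis on two coupled layers simultaneously (the $T$-factor layer carrying the $\tau$-constraints, and the index layer carrying the $(a,b)$-constraints), together with a trade-off argument showing that configurations with many intra-factor pairings are rare enough to pay for the extra $n$-factors they introduce.
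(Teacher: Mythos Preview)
Your overall architecture (Steps 1, 2, 4) matches the paper, but Step 3 contains a genuine gap rooted in your treatment of the $E$-correction in Step 1.

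You propose to regard $-E$ as a small perturbation (``absorbed into a $2^{O(pq)}$ overhead'') and work with $\mathcal{T}$ instead of $T$. This inverts the logic of the argument. The subtraction of $E$ is not a nuisance term; it is the device that forces every $\tau$-color class to have size at least two, which is precisely what collapses the $\tau$-sum from the naive $n^{pq/4}$ down to $n^{pq/8}$. Concretely, $T[(i,i),(j,j)]=0$ means that no surviving configuration can have $(J^{\ell}_{2g-1},K^{\ell+1}_{2g-1})=(J^{\ell}_{2g},K^{\ell+1}_{2g})$. Combined with the even-multiplicity requirement, this rules out singleton classes $|S_r|=1$, so the number of distinct $\tau$-values in play is at most $pq/8$. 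The paper then sums over all subsets $U\subseteq[n]$ with $|U|\le pq/8$ and all partitions of $[p]\times[q/4]$ indexed by $U$, which is where the $n^{pq/8}$ arises.

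Your proposed substitute---a balance-sheet argument trading the extra $n^{k/2}$ from $k$ intra-factor pairings against ``reduced index configurations''---does not close. An intra-factor pairing imposes the constraint $\{J^{\ell}_{2g-1},K^{\ell+1}_{2g-1}\}=\{J^{\ell}_{2g},K^{\ell+1}_{2g}\}$ on four indices, but those indices range over the fixed multisets $I^{\ell}$ of size $q/2$, not over $[n]$. The saving from each such constraint is therefore only a factor polynomial in $q$, not the factor $\sqrt{n}$ you would need to pay for the extra free $\tau$-coordinate. With $k$ as large as $pq/4$ this shortfall blows the bound up by $(n/q)^{\Theta(pq)}$. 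The fix is not to refine the trade-off but to drop it: work with $T$ (not $\mathcal{T}$) from the outset, record the $\neq$ constraint it imposes, and use that constraint to exclude singletons before you ever sum over $\tau$.
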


\begin{proof}
    Consider any $c^1,\dots ,c^s\in [q/2]^p$ and 
    $(I^{1},\dots ,I^{p})\in \mathcal{C}(c^1,\dots ,c^s)$. We have 
    \begin{align}
    \Eqlab{fixed:mult:3}
        &\quad \Ex{\matB{I^1,I^2}\matB{I^2,I^3}\dots \matB{I^p,I^1}} 
        \nonumber\\
        &= 
        \frac{\prod_{\ell} c_\ell^{1}!^{2}\dots c_\ell^{s}!^{2}}
        {((q/2)!)^{2p}}
        \sum\limits_{J^1,K^1\in \orbit{I^1},\dots ,
        J^p,K^p\in \orbit{I^p}} 
        \Ex{\matA{J^1,K^2}\matA{J^2,K^3}\dots \matA{J^p,K^1}} \nonumber\\
        &=
        \frac{\prod_{\ell} c_\ell^{1}!^{2}\dots c_\ell^{s}!^{2}}
        {((q/2)!)^{2p}}\cdot \nonumber\\
        &\quad 
        \sum\limits_{\forall \ell,\, J^\ell,K^\ell\in \orbit{I^\ell}} 
        \Ex{\prod\limits_{\ell\in [p]}
        \matT{J^{\ell}_{1}\,J^{\ell}_{2}\,,\,K^{\ell+1}_{1}K^{\ell+1}_{2}}
        \matT{J^{\ell}_{3}\,J^{\ell}_{4}\,,\,K^{\ell+1}_{3}K^{\ell+1}_{4}}\dots 
        \matT{J^{\ell}_{q/2-1}\,J^{\ell}_{q/2}\,,\,K^{\ell+1}_{q/2-1}\,K^{\ell+1}_{q/2}}} \nonumber\\
        &=
        \frac{\prod_{\ell} c_\ell^{1}!^{2}\dots c_\ell^{s}!^{2}}
        {((q/2)!)^{2p}}
        \sum\limits_{\forall \ell,\, J^\ell,K^\ell\in \orbit{I^\ell}} 
        \Ex{\prod\limits_{\ell\in [p]}\,\,
        \prod_{g\in [q/4]}\,\,
        \sum_{h\in [n]} T_h\pbrcx{J^{\ell}_{2g-1},K^{\ell+1}_{2g-1}} 
        T_h\pbrcx{J^{\ell}_{2g},K^{\ell+1}_{2g}}
        }
        \nonumber\\
        &=
        \frac{\prod_{\ell} c_\ell^{1}!^{2}\dots c_\ell^{s}!^{2}}
        {((q/2)!)^{2p}}
        \sum\limits_{\forall \ell,\, J^\ell,K^\ell\in \orbit{I^\ell}} 
        ~
        \sum_{\forall \ell,g,\, h(\ell,g)\in [n]} 
        \Ex{\prod\limits_{\ell\in [p]}\,\,
        \prod_{g\in [q/4]}
        T_{h(\ell,g)}\pbrcx{J^{\ell}_{2g-1},K^{\ell+1}_{2g-1}} 
        T_{h(\ell,g)}\pbrcx{J^{\ell}_{2g},K^{\ell+1}_{2g}}
        }
        \nonumber\\
        &=
        \frac{\prod_{\ell} c_\ell^{1}!^{2}\dots c_\ell^{s}!^{2}}
        {((q/2)!)^{2p}}
        \sum\limits_{\forall \ell,\, J^\ell,K^\ell\in \orbit{I^\ell}} 
        ~
        \sum_{\biguplus\limits_{u\in [n]} S_u = [p]\times [q/4]} 
        \Ex{\prod\limits_{r\in [n]}\,\,
        \prod_{(\ell,g)\in S_r}
        T_{r}\pbrcx{J^{\ell}_{2g-1},K^{\ell+1}_{2g-1}} 
        T_{r}\pbrcx{J^{\ell}_{2g},K^{\ell+1}_{2g}}
        }
        \nonumber\\
        &= 
        \frac{\prod_{\ell} c_\ell^{1}!^{2}\dots c_\ell^{s}!^{2}}
        {((q/2)!)^{2p}}
        \cardin{
        \mathcal{S}(I^1,\dots ,I^p)
        } \qquad\text{where }\\
        & \mathcal{S}(I^1,\dots ,I^p) := \nonumber\\
        & \mathlarger{\mathlarger{\{}}
        (\Moplus\limits_{\ell\in[p]}(J^\ell,K^\ell)~,~(S_1\dots S_n))
        ~\mathlarger{\mathlarger{\mathlarger{|}}}
        \Moplus\limits_{\ell\in[p]}(J^\ell,K^\ell) \in \prod\limits_{\ell\in[p]} \mathcal{O}^2(I^\ell),~
        (J^{\ell}_{2g-1},K^{\ell+1}_{2g-1})\neq (J^{\ell}_{2g},K^{\ell+1}_{2g}),\nonumber\\
        & 
        \biguplus\limits_{u\in [n]} S_u = [p]\times [q/4],~
        \text{$\forall r\in[n],~\survseqp{S_r}{\tuparg{J}{K}}$ has only even multiplicity elements}
        \mathlarger{\}} \text{,}\nonumber\\
        &\text{ and ~}\survseqp{S_r}{\tuparg{J}{K}} 
        := 
        \bigoplus\limits_{(\ell,g)\in S_r} \pth{ 
        \{J^{\ell}_{2g-1},K^{\ell+1}_{2g-1}\}~,~
        \{J^{\ell}_{2g},K^{\ell+1}_{2g}\}
        } 
        \nonumber
    \end{align}
    Thus it remains to estimate the size of $\survive{I^1,\dots ,I^p}$. We begin with some notation. 
    For a tuple $t$ and a subsequence $t_1$ of $t$, let $t\setminus t_1$ denote the subsequence of 
    elements in $t$ that are not in $t_1$. For a tuple of $2$-sets  
    $t = (\{a_1,b_1\},\dots ,\{a_m,b_m\})$, let $\atomize{t}$ denote the 
    tuple $\pth{a_1,b_1,\dots ,a_m,b_m}$ (we assume 
    $\forall i,\,a_i<b_i$). Observe that ``atomize" is invertible. 
    
    For any $(\tuparg{J}{K},(S_1\dots S_n))\in \survive{I^1,\dots ,I^p}$, observe 
    that $\survseqp{S_r}{\tuparg{J}{K}}$ (which is of length $2|S_r|$) contains a subsequence 
    $\seqI_{S_r}$ of length $|S_r|$, such that $\multiset{\seqI_{S_r}} = 
    \multiset{\survseqp{S_r}{\tuparg{J}{K}}\setminus \seqI_{S_r}}$.  
    Now we know 
    \begin{align}
    \Eqlab{classify:3}
        &\multiset{{\Moplus}_{r}\atomize{\survseqp{S_r}{\tuparg{J}{K}}}} 
        = 
        \bigsqcup_{\ell\in [p]} \multiset{J^\ell \oplus K^\ell} 
        \nonumber\\
        &= \bigsqcup_{\ell\in [p]} \multiset{I^\ell \oplus I^\ell} 
        \nonumber\\
        &\Rightarrow 
        \multiset{{\Moplus}_{r}\atomize{\seqI_{S_r}}}
        =
        \multiset{{\Moplus}_{r}\atomize{\survseqp{S_r}{\tuparg{J}{K}}
        \setminus \seqI_{S_r}}} 
        \nonumber\\
        &= \bigsqcup_{\ell\in [p]} \multiset{I^\ell}. \nonumber\\
        &\text{Thus, } \exists \pi\in \Sym_{pq/2},~s.t.~
        {\Moplus}_{r}\atomize{\seqI_{S_r}} = \pi{\,\pth{I^1\oplus \dots \oplus I^p}} 
    \end{align}
    For tuples $t,t'$, let $\interleave{t}{t'}$ denote the set of all tuples obtained by 
    interleaving the elements in $t$ and $t'$. By \Eqref{classify:3}, we obtain that for 
    any $(\tuparg{J}{K},(S_1\dots S_n))\in \survive{I^1,\dots ,I^p}$, 
    \begin{align}
    \Eqlab{classify2:3}
        \exists \pi\in \Sym_{pq/2},~s.t.~
        \forall r\in [n],~
        \exists \sigma_r\in \Sym_{|S_r|},\,s.t.\quad
        \survseqp{S_r}{\tuparg{J}{K}} \in 
        \interleave{\seqI_{S_r}}{\sigma_r{\,\pth{\seqI_{S_r}}}}& 
        \\
        \text{where } \quad\seqI_{S_r} = 
        \invatomize{\pi{\,\pth{I^1\oplus \dots \oplus I^p}}}& 
        \nonumber
    \end{align}
    For any $j\in [s]$, let $\bar{c}^j := \sum_{\ell\in [p]} c^{j}_{\ell}$.
    Now since $\cardin{\interleave{t}{t'}}\leq 2^{|t|+|t'|}$, by \Eqref{classify2:3} we have 
    that for any $\biguplus_{u\in [n]} S_u = [p]\times [q/4]$,
    \begin{align}
    \Eqlab{partition:3}
        \#(S_1,\dots ,S_n)
        &:= 
        \cardin{\brc{\tuparg{J}{K}\sep{(\tuparg{J}{K},(S_1\dots S_n))\in 
        \survive{I^1,\dots ,I^p}}}} \nonumber\\
        &\leq 2^{pq/4}\cardin{\Sym_{|S_1|}}\times \dots 
        \times \cardin{\Sym_{|S_n|}}\times
        \cardin{\orbit{I^1\oplus \dots \oplus I^p}} \nonumber\\
        &\leq 
        2^{pq/4}\,|S_1|!\dots |S_n|!\, \frac{(pq/2)!}{\bar{c}^1!\dots \bar{c}^s!}
    \end{align}
    
    For any $(\tuparg{J}{K},(S_1\dots S_n))\in \survive{I^1,\dots ,I^p}$, observe that the even 
    multiplicity condition combined with the condition that $(J^{\ell}_{2g-1},K^{\ell+1}_{2g-1})\neq 
    (J^{\ell}_{2g},K^{\ell+1}_{2g})$, imply that for each $r\in [n]$, $|S_r|\neq 1$. Thus every 
    non-empty $S_r$ has size at least $2$, implying that the number of non-empty sets in 
    $S_1,\dots ,S_n$ is at most $pq/8$. Thus we have,
    \begin{align*}
        &\cardin{\survive{I^1,\dots ,I^p}} \\
        &=
        \sum_{U\subseteq [n], |U|\leq pq/8}~
        \sum_{\biguplus\limits_{u\in U} S_u = [p]\times [q/4]}
        \#(S_1,\dots ,S_n) 
        &&(S_r = \phi \text{ if } r\not\in U)\\
        &=
        \sum_{U\subseteq [n], |U|\leq pq/8}~
        \sum_{\sum_{u\in U}s_u = pq/4}~
        \sum_{|S_r| = s_r,\,\biguplus\limits_{u\in U} S_u = [p]\times [q/4]}
        \#(S_1,\dots ,S_n) 
        &&(s_r = 0, S_r = \phi \text{ if } r\not\in U)\\
        &\leq 
        \sum_{U\subseteq [n], |U|\leq pq/8}~
        \sum_{\sum_{u\in U}s_u = pq/4}~
        \#(S_1,\dots ,S_n) \binom{pq/4}{s_1}\dots \binom{pq/4}{s_n}
        &&(s_r = 0, S_r = \phi \text{ if } r\not\in U)\\
        &\leq 
        \sum_{U\subseteq [n], |U|\leq pq/8}~
        \sum_{\sum_{u\in U}s_u = pq/4}~
        2^{pq/4}\,s_1!\dots s_n!\, \frac{(pq/2)!}{\bar{c}^1!\dots \bar{c}^s!} 
        \binom{pq/4}{s_1}\dots \binom{pq/4}{s_n}
        &&(\text{by \Eqref{partition:3}})\\
        &\leq 
        \sum_{U\subseteq [n], |U|\leq pq/8}~
        \sum_{\sum_{u\in U}s_u = pq/4}~
        2^{O(pq)}\,\frac{(pq/2)!}{\bar{c}^1!\dots \bar{c}^s!} 
        (pq)^{s_1+\dots +s_n}
        &&(s_r = 0, S_r = \phi \text{ if } r\not\in U)\\
        &\leq 
        \sum_{U\subseteq [n], |U|\leq pq/8}~
        2^{O(pq+|U|)}\,\frac{(pq/2)!}{\bar{c}^1!\dots \bar{c}^s!} 
        (pq)^{pq/4} \\
        &\leq 
        \sum_{\bar{u}\in [pq/8]}~
        \sum_{U\subseteq [n], |U|= \bar{u}}
        2^{O(pq+|U|)}\,\frac{(pq/2)!}{\bar{c}^1!\dots \bar{c}^s!} 
        \,(pq)^{pq/4} \\
        &\leq 
        \sum_{\bar{u}\in [pq/8]}
        2^{O(pq)}\,\binom{n}{\bar{u}}\,
        \frac{(pq/2)!}{\bar{c}^1!\dots \bar{c}^s!} 
        (pq)^{pq/4} \\
        &\leq 
        \sum_{\bar{u}\in [pq/8]}
        2^{O(pq+\bar{u})}\,\frac{n^{\bar{u}}}{\bar{u}^{\bar{u}}}\,
        \frac{(pq/2)!}{\bar{c}^1!\dots \bar{c}^s!} 
        (pq)^{pq/4} \\
        &\leq 
        \sum_{\bar{u}\in [pq/8]}
        2^{O(pq)}\,(npq)^{pq/8}\,
        \frac{(pq/2)!}{\bar{c}^1!\dots \bar{c}^s!}~
        \frac{(pq)^{pq/8}}{n^{pq/8-\bar{u}}\,\bar{u}^{\bar{u}}}\\
        &\leq 
        \sum_{\bar{u}\in [pq/8]}
        2^{O(pq)}\,(npq)^{pq/8}\,
        \frac{(pq/2)!}{\bar{c}^1!\dots \bar{c}^s!}~
        \frac{(pq)^{\bar{u}}}{\bar{u}^{\bar{u}}}
        &&(\text{ since } pq<n)\\
        &\leq 
        \sum_{\bar{u}\in [pq/8]}
        2^{O(pq)}\,(npq)^{pq/8}\,
        \frac{(pq/2)!}{\bar{c}^1!\dots \bar{c}^s!} \\
        &\leq 
        2^{O(pq)}\,(npq)^{pq/8}\,
        \frac{(pq/2)!}{\bar{c}^1!\dots \bar{c}^s!} \\ \\
        \Rightarrow~ &
        \Ex{\matB{I^1,I^2}\matB{I^2,I^3}\dots \matB{I^p,I^1}} \\
        &\leq 
        2^{pq/4}\,(npq)^{pq/8}\, \frac{(pq/2)!}{\bar{c}^1!\dots \bar{c}^s!} 
        \frac{\prod_{\ell} c_\ell^{1}!^{2}\dots c_\ell^{s}!^{2}}{((q/2)!)^{2p}}
        &&\text{(by \Eqref{fixed:mult:3})}\\
        &\leq 
        2^{pq/4}\,(npq)^{pq/8}\, (pq/2)!\, 
        \frac{\prod_{\ell} c_\ell^{1}!\dots c_\ell^{s}!}{((q/2)!)^{2p}} \\
        &= 
        2^{O(pq)}\,n^{pq/8}\,\frac{p^{\,5pq/8}}{q^{\,3pq/8}}\,
        \prod_{\ell\in [p]} c_\ell^{1}!\dots c_\ell^{s}!
    \end{align*}
\end{proof}

\begin{lemma}
\lemlab{num:dist:3}
    For all $i^{1},\dots ,i^{p}\in [n]^{q/2}$, we have 
    \begin{align*}
        &(1)\quad \Ex{\matB{I^1,I^2}\matB{I^2,I^3}\dots \matB{I^p,I^1}} \geq 0 \\
        &(2)\quad \Ex{\matB{I^1,I^2}\matB{I^2,I^3}\dots \matB{I^p,I^1}} \neq 0 
        \quad \Rightarrow \quad \numdist{I^{1},\dots ,I^{p}} \leq \frac{pq}{4}+\frac{q}{2}
    \end{align*}
\end{lemma}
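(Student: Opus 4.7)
The plan is to parallel the proof of \lemref{num:dist} from the even-tensor case, accounting for the extra layer of structure in $T$: each entry $T[(a,b),(c,d)]$ expands as $\sum_h T_h[a,c]\,T_h[b,d]$ (off the diagonal $a=b, c=d$), and each $T_h[i,j]$ further expands as $\tfrac{1}{2}(\bar{T}_h[i,j] + \bar{T}_h[j,i])$ in the underlying independent Rademachers. For part~(1), I will fully unwind $\mathbf{E}[\prod_\ell \matB{I^\ell, I^{\ell+1}}]$ down to these $\bar{T}_h$ variables: after substituting the symmetrization defining $\matBB$, expanding each $T$ entry via $\sum_h T_h\,T_h$, and expanding each $T_h[i,j]$ via $\bar{T}_h$, the expectation becomes a sum with non-negative prefactors of atomic expectations of the form $\mathbf{E}[\prod \bar{T}_h[\cdot,\cdot]]$. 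Each such atomic expectation is either $0$ (if any underlying Rademacher has odd multiplicity) or $+1$ (if every one has even multiplicity), hence is non-negative, and therefore so is the entire sum.

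For part~(2), if the expectation is non-zero then by the same expansion there must exist witnessing permutations $\pi_\ell, \sigma_\ell \in \Sym_{q/2}$, auxiliary indices $h_{\ell,g} \in [n]$, and symmetrization choices such that every $\bar{T}_h[i,j]$ appears with even multiplicity in the expanded product. Set $J^\ell \defeq \pi_\ell(I^\ell)$ and $K^{\ell+1} \defeq \sigma_\ell(I^{\ell+1})$, so that $J^\ell, K^\ell \in \orbit{I^\ell}$. Projecting out the $h$-label and the orientation within each $\bar{T}_h[\cdot,\cdot]$, the resulting multiset of unordered pairs $\{J^\ell_m, K^{\ell+1}_m\}$ for $\ell \in [p]$, $m \in [q/2]$ inherits even multiplicity; since it has total size $pq/2$, it contains at most $pq/4$ distinct pairs.

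Now I form the graph $G = (V, E)$ with $V \defeq \bigcup_{\ell,k}\{I^\ell_k\}$ and $E$ the set of distinct $2$-sets above, so that $|V| = \numdist{I^1,\dots,I^p}$ and $|E| \le pq/4$. Replaying the re-threading construction from \lemref{num:dist} — set $G^2 \defeq K^2$, and for $\ell \ge 3$ pick $\pi \in \Sym_{q/2}$ with $\pi(J^\ell) = G^\ell$ and let $G^{\ell+1} \defeq \pi(K^{\ell+1})$ — shows that for each $m \in [q/2]$ the edges $\{J^1_m, G^2_m\}, \{G^2_m, G^3_m\}, \dots, \{G^p_m, G^1_m\}$ form a single closed walk, so $E$ decomposes into at most $q/2$ paths. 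Therefore $G$ has at most $q/2$ connected components, $|V| \le |E| + q/2 \le pq/4 + q/2$, and part~(2) follows.

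The main subtlety, absent in the even-tensor case, is the projection step: the even-multiplicity witness lives at the fine level of $(h,\text{ordered pair})$ on the independent Rademachers, while the graph-theoretic argument uses only the coarser conclusion that every unordered pair $\{i,j\}$ appears an even number of times in the edge multiset after summing over $h$ and orientation. Checking that this weaker statement really does follow from the chain of expansions requires careful bookkeeping of the two new collapsing axes ($h$-index and orientation of $\bar{T}_h$) which are not present in \lemref{num:dist}; once this reduction is cleanly in place, the path-packing argument and the bound $|V| \le |E| + q/2$ carry over verbatim from the even-tensor case.
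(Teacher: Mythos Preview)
Your proposal is correct and follows essentially the same route as the paper. The paper's proof is terser: for (1) it simply notes that the expansion is a polynomial in independent centered variables with non-negative coefficients, and for (2) it observes that non-vanishing of the expectation means the survival set $\mathcal{S}(I^1,\dots,I^p)$ from \lemref{fixed:mult:3} is nonempty (so the even-multiplicity witness already exists at the level of unordered pairs tagged by the $h$-index), and then defers to the graph/path argument of \lemref{num:dist}. Your extra descent from $T_h$ to the underlying $\bar T_h$ Rademachers is harmless but unnecessary, since the even-multiplicity condition at the $(h,\{i,j\})$ level already projects to even multiplicity on unordered pairs; one minor slip is that the re-threaded walks need not be \emph{closed}, but this does not affect the component count.
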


\begin{proof}
    The first claim follows immediately on noting that one is taking expectation of a 
    polynomial of independent centered random variables with all coefficients positive. 
    
    For the second claim, note that $\Ex{\matB{I^1,I^2}\matB{I^2,I^3}\dots \matB{I^p,I^1}} \neq 0$ 
    implies that $\survive{I^1,\dots ,I^p}\neq \phi$. Therefore there exists $\tuparg{J}{K}$ 
    (where $J^{\ell},K^{\ell}\in \orbit{I^{\ell}}$) and 
    $\biguplus_{u\in [n]} S_u = [p]\times [q/4]$ such that every element in 
    $\Moplus_{r\in [n]}\survseqp{S_r}{\tuparg{J}{K}}$ has even multiplicity. 
    The rest of the proof follows from the same ideas as in the proof of \lemref{num:dist}.
\end{proof}

\begin{lemma}
    \[\norm{2}{\matBB}^{4/q}\leq \frac{n^{3/2}\log^5 n}{\sqrt{q}} \quad w.h.p.\]
\end{lemma}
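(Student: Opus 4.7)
The plan is to use the trace method. Since $\|B\|_2^p \leq \Tr(B^p)$ for any even $p$, it suffices to bound $\E\,\Tr(B^p)$ and apply Markov's inequality with a well-chosen $p = \Theta(\log n)$. Expanding
\[
\E\,\Tr(B^p) = \sum_{I^1,\ldots,I^p \in [n]^{q/2}} \E\bigl[B[I^1,I^2]\,B[I^2,I^3]\cdots B[I^p,I^1]\bigr],
\]
the three preceding lemmas are tailored exactly for controlling this sum: non-negativity of the summands and a cap $\numdist{I^1,\ldots,I^p} \leq pq/4 + q/2$ from \lemref{num:dist:3}, a counting bound on multiplicity patterns $(c^1,\ldots,c^s)$ and their classes $\mathcal{C}(c^1,\ldots,c^s)$ from \obsref{num:guesses:3}, and a per-term expectation bound from \lemref{fixed:mult:3}.

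The next step is to combine these ingredients. Partitioning the sum by the value of $s = \numdist{I^1,\ldots,I^p}$ and the pattern $(c^1,\ldots,c^s)$, the number of patterns is at most $2^{O(pq)} p^{pq/2}$, the size of each class is at most $\tfrac{n^s}{s!} \cdot \tfrac{((q/2)!)^p}{\prod_\ell c_\ell^1!\cdots c_\ell^s!}$, and each summand is at most $2^{O(pq)} n^{pq/8} p^{5pq/8} q^{-3pq/8} \prod_\ell c_\ell^1!\cdots c_\ell^s!$. The $\prod_\ell c_\ell^j!$ factors cancel exactly, yielding
\[
\E\,\Tr(B^p) \;\leq\; \sum_{s\leq pq/4+q/2} 2^{O(pq)}\, p^{9pq/8} \cdot \frac{n^s \cdot ((q/2)!)^p \cdot n^{pq/8}}{s! \cdot q^{3pq/8}}.
\]
Since $q \leq n$, the ratio $n^s/s!$ is monotone in $s$ up to $s = n$ and is maximized at $s = pq/4 + q/2$. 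Applying Stirling to the resulting $((q/2)!)^p / (pq/4 + q/2)!$ gives, after bookkeeping, the cleaner estimate
\[
\E\,\Tr(B^p) \;\leq\; 2^{O(pq)}\, p^{7pq/8} \cdot \frac{n^{3pq/8 + q/2}}{q^{pq/8}}.
\]

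Taking $p$-th roots yields $\|B\|_2 \leq 2^{O(q)} p^{7q/8}\, n^{3q/8}\, n^{q/(2p)} \, q^{-q/8}$. Setting $p = \Theta(\log n)$ forces $n^{q/(2p)} = 2^{O(q)}$, absorbed into the leading constant, while $p^{7q/8} = (\log n)^{7q/8}$. Raising to the $(4/q)$-th power then gives $\|B\|_2^{4/q} \leq (\log n)^{7/2} \cdot n^{3/2}/\sqrt{q}$, and Markov's inequality delivers the bound with high probability (with the exponent $5$ on $\log n$ absorbing any slack in the Stirling estimates).

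The main obstacle I expect is arithmetic rather than conceptual: unlike the even-degree calculation in Section~\ref{sec:ub-even}, here \lemref{fixed:mult:3} carries an additional $n^{pq/8}$ factor arising from the extra ``third-axis'' summation when expanding each $\mathcal{T} = \sum_h T_h \otimes T_h$. This factor reduces the savings over the trivial $n^{3/2}$ bound to only $\sqrt{q}$, so every exponent must be tracked tightly and the monotonicity of $n^s/s!$ used decisively at the endpoint $s = pq/4 + q/2$ to prevent the interior terms from dominating.
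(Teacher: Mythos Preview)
Your proposal is correct and follows the paper's own proof essentially step for step: expand $\E\,\Tr(B^p)$, stratify by $s=\numdist{I^1,\ldots,I^p}$ and the multiplicity pattern, apply \lemref{num:dist:3}, \obsref{num:guesses:3}, and \lemref{fixed:mult:3}, then set $p=\Theta(\log n)$ and invoke Markov. The only cosmetic difference is that you bound the sum over $s$ via the monotonicity of $n^s/s!$ at the endpoint, whereas the paper uses the cruder substitution $n^s\leq n^{pq/4+q/2}/q^{pq/4+q/2-s}$, which accounts for your $p^{7pq/8}$ versus the paper's $p^{9pq/8}$; both sit comfortably under the stated $\log^5 n$ factor.
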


\begin{proof}
We proceed by trace method. (Note that since $T$ is symmetric, so are $A$ and $B$). 
\begin{align*}
    &\quad \Ex{\Tr{\matBB^{p}}} \\
    &= 
    \sum\limits_{I^1,\dots ,I^p \in [n]^{q/2}}
    \Ex{\matB{I^1,I^2}\matB{I^2,I^3}\dots \matB{I^p,I^1}} \\
    &= 
    \sum\limits_{s\in [pq/4 + q/2]} 
    ~\sum\limits_{\numdist{i^1,\dots ,i^p}=s}
    \Ex{\matB{I^1,I^2}\matB{I^2,I^3}\dots \matB{I^p,I^1}} 
    &&\text{by \lemref{num:dist:3}}\\
    &= 
    \sum\limits_{s\in [pq/4 + q/2]} 
    ~\sum\limits_{c^1,\dots ,c^s\in [q/2]^p} 
    ~\sum\limits_{(I^1,\dots ,I^p) \in \mathcal{C}(c^1\dots c^s)}
    \Ex{\matB{I^1,I^2}\matB{I^2,I^3}\dots \matB{I^p,I^1}} \\
    &= 
    \sum\limits_{s\in [pq/4 + q/2]} 
    ~\sum\limits_{c^1,\dots ,c^s\in [q/2]^p} 
    ~\sum\limits_{(I^1,\dots ,I^p) \in \mathcal{C}(c^1\dots c^s)}
    \Ex{\matB{I^1,I^2}\matB{I^2,I^3}\dots \matB{I^p,I^1}} \\
    &\leq 
    \sum\limits_{s\in [pq/4 + q/2]} 
    ~\sum\limits_{c^1,\dots ,c^s\in [q/2]^p} 
    ~\sum\limits_{(I^1,\dots ,I^p) \in \mathcal{C}(c^1\dots c^s)}
    2^{O(pq)}\,n^{pq/8}\,\frac{p^{\,5pq/8}}{q^{\,3pq/8}}\,
    \prod_{\ell\in [p]} c_\ell^{1}!\dots c_\ell^{s}! 
    &&\text{by \lemref{fixed:mult:3}} \\
    &\leq 
    \sum\limits_{s\in [pq/4 + q/2]} 
    2^{O(pq)}\,\frac{n^{s+pq/8}}{s!}\,p^{\,9pq/8}q^{\,pq/8}
    &&\text{by \obsref{num:guesses:3}} \\
    &\leq 
    \sum\limits_{s\in [pq/4 + q/2]} 
    2^{O(pq)}\,\frac{n^{\,pq/4+q/2+pq/8}}
    {s!\,q^{\,pq/4+q/2-s}}\,p^{\,9pq/8}q^{\,pq/8}
    &&(\text{since } q\leq n) \\
    &\leq 
    \sum\limits_{s\in [pq/4 + q/2]} 
    2^{O(pq)}\,\frac{n^{\,3pq/8+q/2}\,p^{\,9pq/8}}{q^{\,pq/8}}
    ~\leq ~ 
    2^{O(pq)}\,\frac{n^{\,3pq/8+q/2}\,p^{\,9pq/8}}{q^{\,pq/8}}.
\end{align*}
Choose $p$ to be even and let $p=\Theta(\log n)$. 
Now 
\[ \Prob{\norm{2}{\matBB}^{4/q}\geq n^{3/2}\log^5 n/\sqrt{q}} \leq 
\Prob{\Tr{\matBB^{p}} \geq n^{\Omega(1)}\Ex{\Tr{\matBB^{p}}}} \ . \] 
Applying Markov inequality completes the proof. 
\end{proof}

Thus we obtain
\begin{theorem}
\thmlab{3ten:ub}
    Let $\tenAA\in \Re^{[n]^3}$ be a $3$-tensor with i.i.d. 
    $\pm 1$ entries. Then for any even $q$ such that $q\leq n$, 
    we have that with probability $1-n^{\Omega(1)}$, 
    \[
        \frac{\sos{q}{\tenAA(x)}}{\fmax{\tenAA}} ~~\leq ~~
        \pth{\frac{\widetilde{O}(n)}{q}}^{1/4} .
    \]
\end{theorem}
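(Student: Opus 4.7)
The plan is to assemble the pieces already developed earlier in this appendix. The overall strategy is to reduce the theorem to the spectral bound $\|\matBB\|_2^{4/q} = \widetilde{O}(n^{3/2}/\sqrt{q})$ w.h.p.\ proved in the preceding lemma, then translate that bound on the matrix representation back to a bound on $\PEx{f(x)}$ via pseudo-Cauchy-Schwarz, and finally divide by $\fmax{\tenAA} = \Theta(\sqrt{n})$ to obtain the claimed ratio.

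First, following the Hopkins--Shi--Steurer reduction used for constant-level SoS on $3$-tensors, I would write $f(x) = \sum_\ell x_\ell\,(x^T T_\ell\, x)$ where $T_\ell := (\bar T_\ell + \bar T_\ell^T)/2$. Applying pseudo-Cauchy-Schwarz twice and using $\PEx{\|x\|^2} = 1$ gives
\[
\PEx{f(x)} \;\leq\; \PEx{\textstyle\sum_\ell (x^T T_\ell\, x)^2}^{1/2} \;\leq\; \PEx{g(x)}^{1/2},
\]
where $g(x) := \iprod{x^{\otimes 2}}{\mathcal{T}\, x^{\otimes 2}}$ with $\mathcal{T} := \sum_\ell T_\ell \otimes T_\ell$. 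Next I would split $\mathcal{T} = T + E$, where $E$ collects the ``diagonal'' entries $\mathcal{T}[(i,i),(j,j)]$. A direct check shows $5n\,\Id - E' \succeq 0$ for the symmetrized version $E'$, so $\PEx{h(x)} \leq 5n$ for the corresponding $h(x) := \iprod{x^{\otimes 2}}{E'\,x^{\otimes 2}}$, and thus it remains to control $\PEx{\iprod{x^{\otimes 2}}{T\, x^{\otimes 2}}}$.

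For the latter, I mimic the even-$d$ construction and take $\matBB$ to be the row-column symmetrization of $T^{\otimes q/4}$, i.e., $\matB{I,J} := \frac{1}{(q/2)!^2} \sum_{\pi,\sigma \in \Sym_{q/2}} (T^{\otimes q/4})[\pi(I), \sigma(J)]$. Since $\matBB$ is a matrix representation of $\iprod{x^{\otimes 2}}{T\, x^{\otimes 2}}^{q/4}$, the SoS-style bound $\norm{2}{\matBB}\Id - \matBB \succeq 0$ combined with the hypersphere constraint yields $\PEx{\iprod{x^{\otimes 2}}{T x^{\otimes 2}}^{q/4}} \leq \norm{2}{\matBB}$, and iterating pseudo-Cauchy-Schwarz $\log_2(q/4)$ times (using that $q/4$ is a power of $2$) delivers $\PEx{\iprod{x^{\otimes 2}}{T x^{\otimes 2}}} \leq \norm{2}{\matBB}^{4/q}$. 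Plugging in the spectral bound $\norm{2}{\matBB}^{4/q} = \widetilde{O}(n^{3/2}/\sqrt{q})$, I would obtain $\PEx{g(x)} \leq \widetilde{O}(n^{3/2}/\sqrt{q}) + 5n$, and hence $\PEx{f(x)} \leq \widetilde{O}(n^{3/4}/q^{1/4})$. Dividing by $\fmax{\tenAA} = \Theta(\sqrt{n})$ gives the claimed ratio $\widetilde{O}(n/q)^{1/4}$.

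The main obstacle is the spectral bound $\norm{2}{\matBB}^{4/q} = \widetilde{O}(n^{3/2}/\sqrt{q})$ itself. I would prove it by trace method, following the template of the even-$d$ argument but with a new $n$-fold sum coming from the $\mathcal{T} = \sum_\ell T_\ell \otimes T_\ell$ structure. Expanding $\Ex{\Tr{\matBB^p}}$ produces, for each fixed tuple $(I^1,\dots,I^p)$ and permutation assignment, a sum over color functions $h:[p]\times[q/4]\to[n]$ and a product of entries $T_{h(\ell,g)}[\cdot,\cdot]$. Nonvanishing of the expectation requires even multiplicities within each color class, which forces each nonempty color class to have size at least $2$ and therefore caps the number of used colors at $pq/8$; this is precisely where the exponent $n^{pq/8}$ (rather than $n^{pq/4}$) appears. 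Combining this colored ``survival'' count with the graph-theoretic bound $\numdist{I^1,\dots,I^p} \leq pq/4 + q/2$ (essentially the same argument as \lemref{num:dist}) and enumerating intersection patterns via \obsref{num:guesses}, I expect $\Ex{\Tr{\matBB^p}} \leq 2^{O(pq)}\, n^{3pq/8 + q/2}\, p^{9pq/8}/q^{pq/8}$. Setting $p = \Theta(\log n)$ and applying Markov yields $\norm{2}{\matBB}^{4/q} = \widetilde{O}(n^{3/2}/\sqrt{q})$ with the required high probability, closing the argument.
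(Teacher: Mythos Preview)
Your proposal is correct and follows essentially the same approach as the paper: the Hopkins--Shi--Steurer reduction to $g(x)$, the splitting $\mathcal{T}=T+E$ with the $5n\,\Id-E'\succeq 0$ bound, the row-column symmetrized representation $\matBB$ of $T^{\otimes q/4}$, and the trace-method estimate with the coloring argument all match the paper's structure and yield the same $\Ex{\Tr{\matBB^p}}\leq 2^{O(pq)}n^{3pq/8+q/2}p^{9pq/8}/q^{pq/8}$ bound. One small clarification worth making explicit in your write-up: the conclusion that each nonempty color class has size at least $2$ does not follow from even multiplicity alone---a singleton $S_r=\{(\ell,g)\}$ contributes $T_r[J^\ell_{2g-1},K^{\ell+1}_{2g-1}]\,T_r[J^\ell_{2g},K^{\ell+1}_{2g}]$, which could survive if the two index pairs coincide; it is precisely the subtraction of $E$ (enforcing $(J^\ell_{2g-1},K^{\ell+1}_{2g-1})\neq(J^\ell_{2g},K^{\ell+1}_{2g})$) that rules this out and secures the crucial $n^{pq/8}$ exponent.
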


\bibliographystyle{alpha}
\bibliography{polynomials}

\end{document}